\numberwithin{equation}{section}
\newcommand{\Rmnum}[1]{\expandafter\@slowromancap\romannumeral #1@}
\newcommand{\indep}{\rotatebox[origin=c]{90}{$\models$}}
\def\widebar{\accentset{{\cc@style\underline{\mskip10mu}}}}
\newcommand{\E}{\mathbb{E}}
\newcommand{\F}{\mathcal{F}}
\newcommand{\T}{\mathrm{T}}
\newcommand{\ds}{\,\mathrm{d}}
\newcommand{\e}{\varepsilon}
\newcommand{\rr}{\gamma}
\newcommand{\R}{\mathbb{R}}
\newcommand{\s}{\mathcal{S}}
\newcommand{\I}{\mathbf{I}}
\newcommand{\1}[1]{\mathds{1}_{\{#1\}}}
\theoremstyle{plain}
\newenvironment{assump}[1]
	{\assumption}
	{\endassumption}
\newtheorem{thm}{Theorem}
\newtheorem{prop}{Proposition}
\newtheorem{corol}{Corollary}
\newtheorem{lem}{Lemma}
\newtheorem{defn}{Definition}
\theoremstyle{definition}
\newtheorem{remk}{Remark}
\newtheorem{eg}{Example}
\begin{document}

\begin{frontmatter}
\title{``Sound and Fury'': Nonlinear Functionals of Volatility Matrix in the Presence of Jump and Noise}
\runtitle{Nonlinear Functionals of Volatility Matrix: Jump and Noise}

\begin{aug}
	\author[A]{\fnms{Richard Y.} \snm{Chen}\ead[label=e1]{richard.chen.ryc@gmail.com}}
	\address[A]{\printead{e1}}
\end{aug}

\begin{abstract}
	This paper resolves a pivotal open problem on nonparametric inference for nonlinear functionals of volatility matrix. Multiple prominent statistical tasks can be formulated as functionals of volatility matrix, yet a unified statistical theory of general nonlinear functionals based on noisy data remains challenging and elusive. Nonetheless, this paper shows it can be achieved by combining the strengths of pre-averaging, jump truncation and nonlinearity bias correction. In light of general nonlinearity, bias correction beyond linear approximation becomes necessary. Resultant estimators are nonparametric and robust over a wide spectrum of stochastic models. Moreover, the estimators can be rate-optimal and stable central limit theorems are obtained. The proposed framework lends itself conveniently to uncertainty quantification and permits fully feasible inference. With strong theoretical guarantees, this paper provides an inferential foundation for a wealth of statistical methods for noisy high-frequency data, such as realized principal component analysis, continuous-time linear regression, realized Laplace transform, generalized method of integrated moments and specification tests, hence extends current application scopes to noisy data which is more prevalent in practice.
\end{abstract}

\begin{keyword}[class=MSC]
	\kwd{62M09}
	\kwd{60G44}
	\kwd{62G05}
	\kwd{62G15}
	\kwd{62G20}
\end{keyword}
\begin{keyword}
	\kwd{nonlinear functionals}
	\kwd{nonlinearity bias}
	\kwd{stochastic volatility matrix}
	\kwd{microstructure noise}
	\kwd{jumps}
	\kwd{stable central limit theorems}
	\kwd{principal component analysis}.\\
	** This manuscript supersedes \hyperlink{https://arxiv.org/abs/1810.04725}{an earlier draft} named ``\textit{Inference for Volatility Functionals of Multivariate It\^o Semimartingales Observed with Jump and Noise}''. The overlap with the previous draft consists of stochastic model in Section \ref{sec:model}, rate-optimal estimator in Section \ref{sec:method}, Theorem \ref{CLT} in Section \ref{sec:asymp}. What're new in this manuscript include statistical applications of the proposed methodology, a new family of estimators with a new set of hyper-parameters that facilitate applications, additional theorem and propositions that show consistency, asymptotic Gaussianity, finite-sample positive definiteness. Section \ref{sec:taq} is new and we demonstrate a concrete financial application with large high-frequency dataset and discuss its empirical implication on portfolio construction
\end{keyword}\end{frontmatter}

\section{Introduction}\label{sec:intro}
Functions of covariance matrix are the formulations in various eminent statistical applications. The inverse function leads to precision matrix, while a more prosaic example is the identity function (covariance matrix). Their applications include, for example, graphical models and discriminant analysis. Model selection and testing are oftentimes accomplished by functions of covariance matrix, as statistical models imply linear and nonlinear relationships among its elements and blocks. As for factor models and principal component analysis, eigenvalues or associated spectral functions are differentiable nonlinear functions of the corresponding covariance matrix, so are eigenvectors.

Accordingly, functionals of stochastic volatility matrix are critical in applications using time-dependent high-frequency data. High-frequency data often arises from financial transactions, neurological measurements, satellite weather recordings, gravitational wave detectors etc. To demonstrate a concrete application, we will study principal component analysis from the perspective of nonlinear functionals in Section \ref{sec:pca}; and we will present an empirical analysis of transaction data from TAQ (Trade and Quote) database in Section \ref{sec:taq}.

Extant literature on functionals of volatility matrix demands the assumption that there is no noise in the data. This no-noise assumption implies, as the sampling frequency becomes higher, consistent estimators converge to parameters of interest. However, this assumption imposes a restricting limitation on the type of data that we can use, as the empirical fact of high-frequency data tells us an opposing story of divergence at higher frequencies. Figure \ref{fig.signature} presents how estimators behave as sampling interval $\Delta_n$ shrinks. The subscript $n$ in $\Delta_n$ denotes sample size. Larger sample sizes come with smaller sampling intervals, i.e., $\Delta_n\to0$ as $n\to\infty$. Strikingly the estimator for noiseless data does not converge for large sample sizes.

\begin{figure}[H]
	\centering
	\includegraphics[width=.9\textwidth]{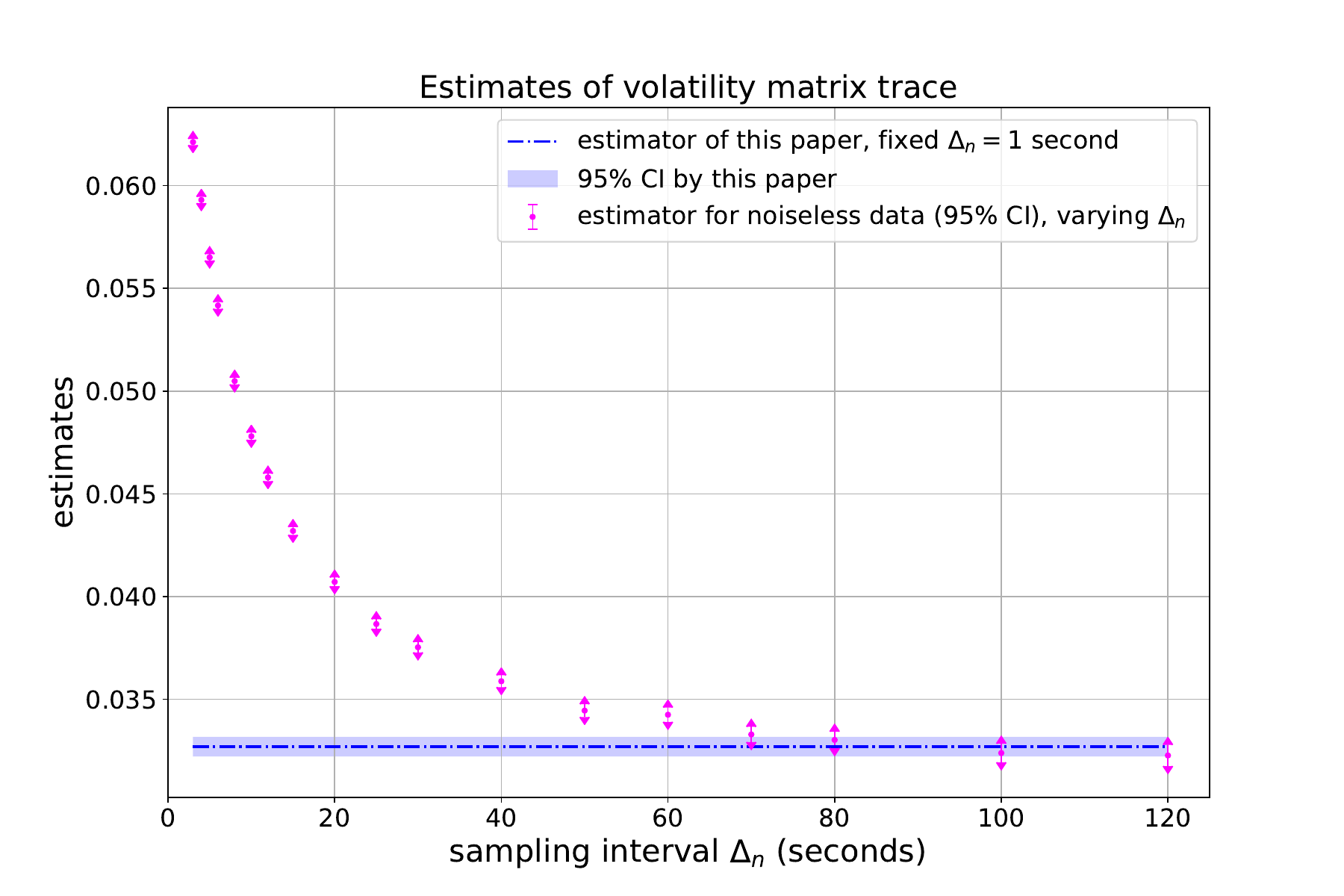}
	\caption{Estimators applied to S\&P 100 transaction data during 12/16/2019 - 12/20/2019}\label{fig.signature}
\end{figure}

Noise affects statistical significance and severely biases estimation. The presence of microstructure noise is commonly acknowledged. Solving the noise problem is vital to statistics of high-frequency data. The literature has called for a solution to this challenging open problem for years, e.g. \cite{tt12a,lx16,ltt17,ax19a,llx19,y21}.\footnote{For reference, see p. 1108 in \cite{tt12a}; p. 1617 in \cite{lx16}; p. 38 in \cite{ltt17}; Section 6 in \cite{ax19a}; p. 159 in \cite{llx19}; Remark 1 in \cite{y21}.} In this paper we provide a viable approach to this much sought-after solution. 

Formally, functionals of volatility matrix are defined as
\begin{equation}\label{def.S(g)}
	S(g)_t = \int_0^t g(c_s)\ds s,
\end{equation}
here $t$ is finite; $g:\R^{d\times d}\mapsto \R^r$ is a twice continuously differentiable function on some convex cone of positive-definite matrices; $c_s$ is stochastic volatility matrix, viz., instantaneous covariation of the continuous part of It\^o semimartingale (e.g. jump-diffusion model). (\ref{def.S(g)}) is the inferential objective and it is of paramount statistical importance. By suitable choices of $g$, the quantity $S(g)_t$ can aid researchers and practitioners in uncovering valuable yet hidden information about cross sections and temporal evolutions from datasets of interest.

Indeed, a large amount of research efforts have been devoted to this kind of questions in the last twenty years. Early literature largely concentrated in stochastic volatility matrix and linear functionals; recent research advanced on general nonlinear functionals. It may come as no surprise that general functionals of stochastic volatility matrix provide a powerful unifying framework. 

For linear functionals of stochastic volatility, the problem of microstructure noise has been resolved. Some influential approaches are two-scales realized volatility \cite{zma05,z11}, realized kernel method \cite{bhls08,bhls11}, pre-averaging method \cite{pv09b,j09,jpv10,ckp10}, quasi-maximum likelihood method \cite{x10,afx10,sx17}, spectral local method of moments \cite{bhmr14,bhmr19}.

For general nonlinear functionals, existing literature has absolved the inferential problem in the absence of microstructure noise. The pioneering work \cite{mz09} put forward a statistical framework based on contiguity and local likelihood for nonlinear transformations of volatility. The seminal paper \cite{jr13} proposed a plug-in inferential framework for nonlinear functionals, laid out the statistical theory, and spurred a long line of follow-up work on statistical methodologies and empirical applications. A more recent work \cite{c19} proposed a spectral method to circumvent asynchronous observations across dimensions when conducting statistical inference for nonlinear functionals of volatility matrices.

Detailed study of nonlinear functionals of statistical importance and resultant empirical results using noise-free data blossomed in recent years, some examples are
\begin{itemize}
	\item realized Laplace transform \cite{tt12a,tt11,tt12b};
	\item generalized method of integrated moments \cite{lx16};
		\item continuous-time linear regressions \cite{ltt17};
	\item realized principal component analysis \cite{ax19a,cmz20};
	\item estimation of asymptotic variances, e.g. the quantity called quarticity \cite{jr13};
	\item model specification tests \cite{ltt16,y20}.
\end{itemize}

These aforementioned applications utilized high-frequency transaction data to answer statistical questions in economics and finance, such as model calibration \cite{tt12a,tt11,tt12b}, testing economic models \cite{lx16,ltt16,y20}, measuring financial risks \cite{ltt17,ax19a,cmz20}, and uncertainty quantification \cite{jr13}.  Behind this powerful formulation and the wide range of applicability, however, there lurk difficult challenges.

A substantial challenge is nonlinearity. General nonlinearity induces high-order derivatives which contributes to the estimation error. Local linear approximation of nonlinear functionals only guarantees consistency and a suboptimal rate of convergence. In order to improve estimation accuracy and establish central limit theorems, bias correction of the high-order nonlinearity terms is indispensable, cf. \cite{jr13,jr15}. On one hand, direction correction of these nonlinearity biases requires computing gradients and Hessian matrices. On the other hand, nonlinearity bias correction requires careful tuning in nonparametric estimation. Creative and groundbreaking advancements have been made, under the no-noise assumption, to relax tuning restriction, facilitate bias correction and variance estimation by multiscale jackknife \cite{llx19} and matrix calculus \cite{y20}. In situations where noise must be considered, notwithstanding one more layer of complexity, we find novel nonlinearity bias terms and a new tuning range that allows successful bias correction.

The second significant challenge is jumps. Discontinuity is frequently encountered in functional data. It is also a stylized fact in financial time series. In consequence, nonparametric jump-diffusion models are widely accepted by the literature. When estimating functionals of volatility matrix, the technique to handle discontinuity is jump truncation, which has been used in \cite{llx19,y20,jr13,jr15}. However, jump and noise exhibit subtle interplay and delicate intricacy as documented in \cite{l13,mz16}. Jump truncation for the purpose of volatility matrix functionals in the noisy setting is far from being straightforward. Despite this complexity, we establish a theoretical base that takes account of noise and jump simultaneously. It is one of the contributions of this paper.

Nevertheless, the central challenge is to find an effective statistical method and theory robust to microstructure noise. Figure \ref{fig.signature} has demonstrated the curse of noise, especially in the regime where the sampling interval $\Delta_n$ is small. As made manifest in Figure \ref{fig.freq}, $\Delta_n$ being a few seconds or smaller is typical for high-frequency financial data. Yet, because of the no-noise assumption in the extant literature on nonlinear functionals, the current available statistical methods require researchers to subsample high-frequency data so that $\Delta_n$ is between 1 minute to 5 minutes, e.g. \cite{ax19a,ax19b,p20}. The goal of subsampling is to mitigate the degree of noise accumulation, so that estimators designed for noise-free data are able to yield sound results, cf. Figure \ref{fig.signature}.

\begin{figure}[H]
	\centering
	\includegraphics[width=1.\textwidth]{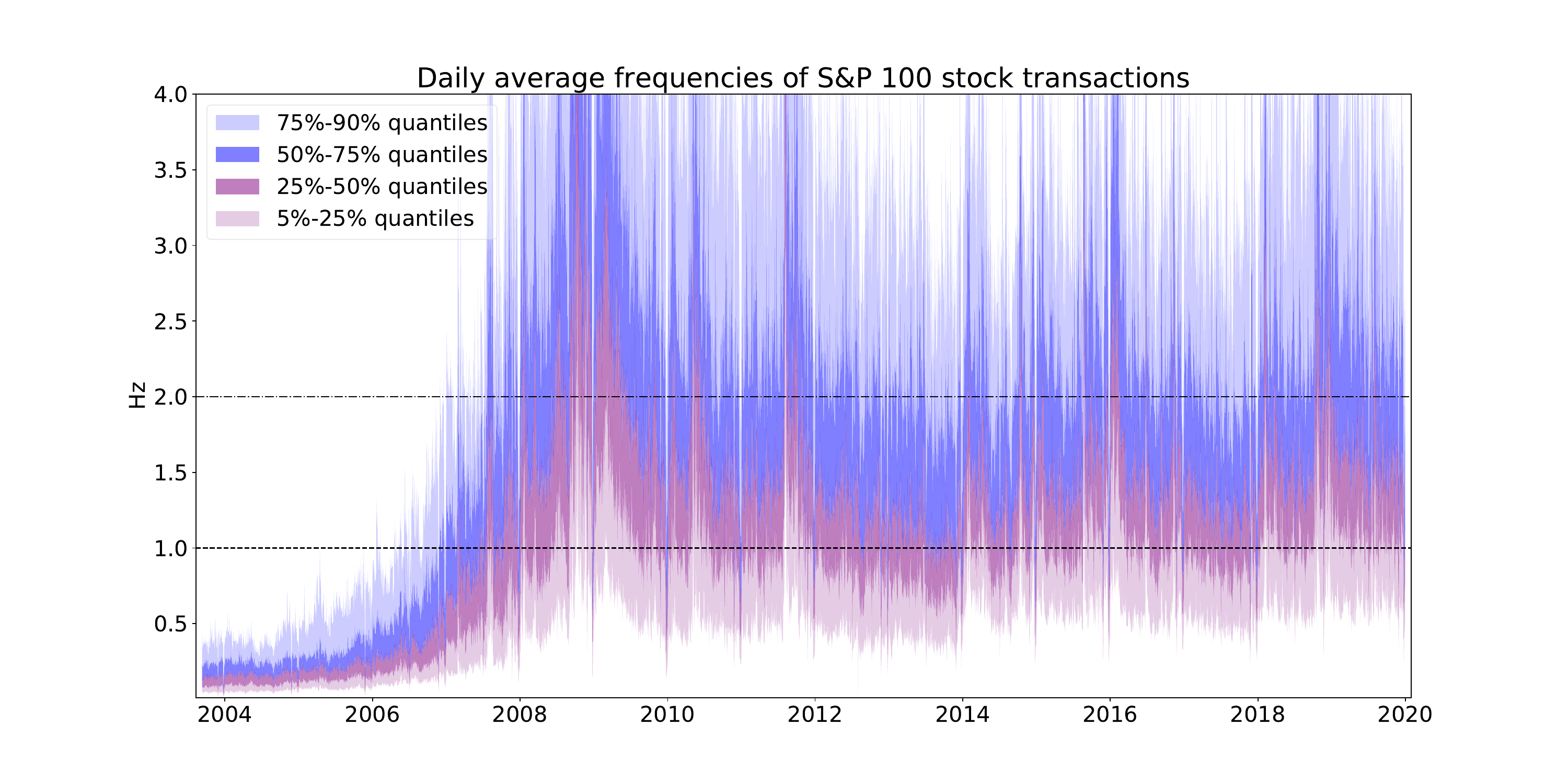}
	\caption{Daily average frequencies of S\&P 100 stock transactions}\label{fig.freq}
\end{figure}

In contrast to mitigation, we adopt a ``bottom-up'' approach, i.e., to start with a more realistic stochastic model that incorporates microstructure noise, then design noise-robust statistical methods, and then provide relevant statistical theory. We do not dispute the validity of subsampling as a mitigation, but we would like to emphasize that utilizing data of higher frequencies in a noise-robust fashion indeed gains statistical benefits, namely better estimation accuracy and sharper confidence intervals. It is an exemplar that data gives feedback to the model, and the attendant change delivers a better statistical performance.

The limitation imposed by the no-noise assumption on the amount of usable data is a pressing issue, and will be more so as we are observing explosions of data in the data-revolution age. The TAQ database of financial time series is no exception. The size of TAQ transaction data is depicted in Figure \ref{fig.size}. TAQ is timestamped at millisecond latency and is subject to noise contamination. If equipped with a noise-robust statistical methodology, we can improve $\Delta_n$ from 5 minutes to, for example, 1 second. In this case there would immediately be 300 times more data at our disposal. Consequently this paper enables applications of volatility functionals to harvest data at much higher frequencies than the state-of-the-art methods allow.

\begin{figure}[H]
	\centering
	\includegraphics[width=.75 \textwidth]{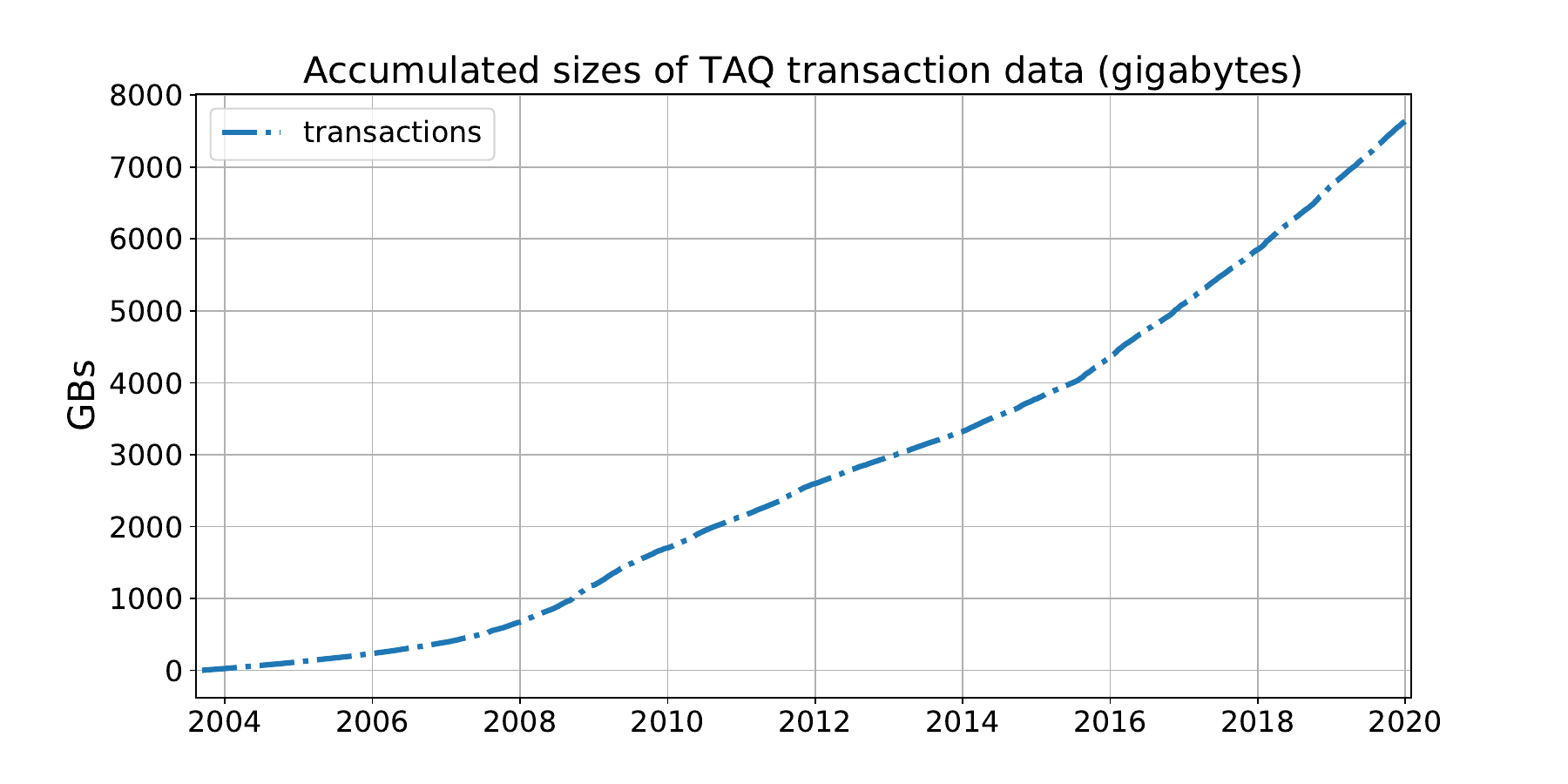}
	\caption{Accumulated sizes of TAQ transaction data}\label{fig.size}
\end{figure}

To cope with noise, we conduct local time-domain smoothing by the pre-averaging method \cite{j09,jpv10,ckp10}. To estimate general nonlinear functionals, we embed the pre-averaging method into the plug-in framework of \cite{jr13}. In this direction, the technical advancement of this work is two-fold. One is to generalize the pre-averaging method from volatility matrix to its general functionals. The other is to extend the framework of functional inference to accommodate noisy data.

Besides nonlinearity bias, jumps, and noise discussed earlier, we need to tackle the following technicalities on the journey to asymptotic theory valuable to statistics.
\begin{itemize}
	\item \textit{Stochastic volatility}. Our model is nonparametric and can reproduce the salient characteristics of empirical financial data. An primary feature is stochastic volatility which models the volatility matrix as a stochastic process. However, to develop a robust and model-free methodology with satisfactory statistical guarantees, it becomes crucial to balance statistical error (dictated by the extent of local smoothing) and target error (attributable to time-varying parameters).
	\item \textit{Dependence}. The pre-averaging method uses overlapping windows, hence the local moving averages are highly correlated for which standard central limit theorem is not valid. We use the ``big block - small block'' technique of \cite{j09}. It is instrumental to derive central limit theorems (hereafter CLT) for highly correlated high-frequency data, but in proof it induces technicality and involved notation.
	\item \textit{Stable convergence}.\footnote{Readers can read section 2.2.1 and section 2.2.2 in \cite{jp12} for an exposition of stable convergence.} For statistical purpose, we need to establish stable convergence which is stronger than weak convergence (convergence in distribution). The reason is that sampling distributions of the estimators depend on the sample path of stochastic volatility matrix. We need a mode of convergence in which the estimators converge jointly with other variables, so that we can consistently estimate the asymptotic variances for the purpose of statistical inference.
	\item \textit{Unbounded derivatives}. Some statistical applications, such as precision matrix estimation, specification tests, and linear regression, correspond to functionals with singularities in derivatives near the origin. The original framework \cite{jr13} is not applicable in this situation. \cite{ltt17} proposed a spatial localization procedure that accommodates this type of functionals under the no-noise assumption. In the interest of applications using noisy data, we extend the spatial localization by establishing uniform convergence of the pre-averaging method  plus jump truncation in the presence of discontinuity.
\end{itemize}
We state the stochastic model and our assumptions in Section \ref{sec:model}. After presenting the appropriate range of tuning parameters, we show consistency, convergence rates, and associated CLT. There are two types of general results. The first satisfies the rate-optimal CLT, hence it is asymptotically superior; the second ensures positive semi-definiteness in finite samples, and in a sense requires less computation for statistical inference. We will give details about them in Section \ref{sec:method} and Section \ref{sec:asymp}.

We focus on specific functionals which are the basis of principal component analysis (hereafter PCA) using noisy high-frequency data in Section \ref{sec:pca}. It has generated a lot of statistical interest and has been studies in, for example, \cite{ax19a,cmz20}. This paper significantly improves the accuracy of PCA either by being able to harness more data at higher frequencies, or by increasing the convergence rate. We also present recent examples among other interesting applications. Based on PCA of more than 16 years of noisy high-frequency data, a large-scale empirical analysis is presented in Section \ref{sec:taq}. Finally we conclude in Section \ref{sec:conclude}.  Proofs and simulations are in the appendices.

\section{Setting}\label{sec:model}
We describe our data generating mechanism and set up notation in this section.

The model of high-frequency data with microstructure noise consists of three components: signal process, noise process, sampling scheme.
\begin{center}
	\usetikzlibrary{positioning}
	\usetikzlibrary{shapes,snakes}
	\begin{tikzpicture}[xscale=12,yscale=6,>=stealth]
	\tikzstyle{e}=[rectangle,minimum size=5mm,draw,thick]
	\tikzstyle{v}=[ellipse,  minimum size=5mm,draw,thick]
	\node[e] (X)	[draw=red!60] {It\^o Semimartingale $X$};
	\node[e] (Y)	[draw=blue!70,right=of X] {Noisy Process $Y$};
	\node[e] (n)	[draw=black!0,right=of Y] {};
	\node[v] (D)	[draw=black!60,right=of n] {Noisy Data $Y^n_i$};
	\draw[thick,->,snake=snake] (X) to node[anchor=north]{$(Q_t)$} (Y);
	\draw[thick,->] (Y) to node[anchor=north]{sampling} (D);
	\end{tikzpicture}
\end{center}

\subsection{Signal}
The underlying signal process is modeled by an It\^o semimartingale $X$, i.e., $X$ is a solution to the stochastic differential equation
\begin{equation}\label{def.X}
	X_t = X_0 + \int_0^tb_s\ds s + \int_0^t\sigma_s\ds W_s + J_t
\end{equation}
where $b_s\in\R^d$, $\sigma_s\in\R^{d\times d'}$ with $d\le d'$, $b$ and $c$ are adapted and c\`adl\`ag\footnote{``\textit{continue \`a droite, limite \`a gauche}''. It means ``right continuous with left limits'', i.e., the sample paths lie in the Skorokhod space.}; $W$ is a $d'$-dimensional standard Brownian motion; $J$ is a purely discontinuous process.
 
The mathematical description of $J$ is relegated to (\ref{J}), which can accommodate compound Poisson process, L\'evy process etc. The stochastic volatility matrix is defined as $c_s=\sigma_s\sigma_s^\T$, which is stochastic process specified in (\ref{c}).

The full signal model assumption is given in Appendix \ref{sec:assump}, below is a summary of its features:
\begin{itemize}
	\item the drift $b$ has a smooth trajectory in the mean-square sense;
	\item the volatility $c$ is a spatially constrained It\^o semimartingale\footnote{However, it is important to accommodate long-memory volatility model. The volatility functional inference in long-memory and noisy setting is an open question under investigation.} such that $c$ is locally bounded and well-conditioned;
	\item the celebrated leverage effect (correlation between price and volatility) is allowed, and is defined through the tensor product of $\sigma$ and $\sigma^{(c)}$; 
	\item $J$ may exhibit finite and infinite activities, but has finite variation (i.e., finite-length trajectory) on compact sets;
\end{itemize}
These assumptions are necessary for the existence of CLT. The assumption that $c$ is locally well-conditioned is for the applicability of some functionals of statistical interest, such as some specification tests and linear regression.

This signal model has no parametric restriction and allows time-varying parameters, e.g. the drift $b_s$ and diffusion coefficient $\sigma_s$ could be stochastic processes. Moreover it permits unrestricted non-stationarity and dependence among the state variables.

\subsection{Noise}
We adopt the probabilistic model in \cite{j09,jpv10} to describe noise. The advantage of this formalism is that it incorporates additive white noise, rounding error, the combination thereof as special cases. The gist is to suppose for $t>0$ there is a probability transition kernel $Q_t$ linking the value of a noisy process $Y$ with that of $X$ at time $t$. The abstract formalism is given in Appendix \ref{sec:assump}.

The noise assumption is that under the conditional probability measure $Q_t$, the process $Y-X$ has zero mean, finite eighth moment, no serial correlation (though cross-sectional correlation is allowed). An example of this noisy process is given below.
\begin{eg}
\begin{equation}\label{def.Y}
	Y_t = f(X_t, e_t)
\end{equation} 
where $f:\R^d\times\R^d\mapsto\R^d$ and $e$ is a $\R^d$-valued perturbation process such that $\e_t = f(X_t,e_t) - X_t$ is centered white noise process conditional on $X$.
\end{eg}

\subsection{Sampling}
This work treats regularly sampled observations and considers in-fill asymptotics.\footnote{aka fixed-domain asymptotics, high-frequency asymptotics, small-interval asymptotics} Specifically, the samples are observed every $\Delta_n$ time units on a finite time interval $[0,t]$ where $n=\lfloor t/\Delta_n\rfloor$ is the sample size up to time $t$. As $\Delta_n\to0$, $n\to\infty$ while $t$ is fixed. We establish functional convergence and our statistical theory holds for any $t>0$. 

Throughout this paper, $U^n_i$ is written for $U_{i\Delta_n}$ where $U$ can be a process or filtration. For example, $c^n_i$ denotes the stochastic volatility matrix at time $i\Delta_n$; let be $(\F_t)$ a filtration, $\F^n_i$ is the information up to time $i\Delta_n$. For any process $U$, $\Delta^n_iU$ represents the increment $U^n_i-U^n_{i-1}$.

\subsection{Notations} For $r\in\mathbb{N}^+$, $\mathcal{C}^r(\mathcal{S})$ denotes the space of $r$-time continuously differentiable functions on the domain $\mathcal{S}$; $\mathcal{S}^+_d$ is the convex cone of $d\times d$ positive semidefinite matrices; $\I_r$ is the $r\times r$ identity matrix; $\|\cdot\|$ denotes a norm on vectors, matrices or tensors; given $a\in\R$, $\lfloor a\rfloor$ denotes the largest integer no more than $a$; $a\vee b=\max\{a,b\}$, $a\wedge b=\min\{a,b\}$; $a_n=O_p(b_n)$ means $\forall\epsilon>0$, $\exists M>0$ such that $\sup_n\mathbb{P}(a_n/b_n>M)<\epsilon$; $\mathbf{A}^\mathrm{T}$ is the transpose of the vector or matrix $\mathbf{A}$; for a multidimensional array, the entry index is written in the superscript, e.g., $X_t=(X^1_t,\cdots,X^d_t)^\T$, $\mathbf{A}^{jk}$ denotes the $(j,k)$-th entry in the matrix $\mathbf{A}$; $\otimes$ is the outer product, e.g., for two vectors $\mathbf{u}$ and $\mathbf{v}$, $\mathbf{u}\otimes\mathbf{v}=\mathbf{u}\mathbf{v}^\T$; $\partial_{jk}g$ and $\partial^2_{jk,lm}g$ denote the gradient and Hessian of $g$ with respect to the $(j,k)$-th and $(l,m)$-th entries in its matrix argument; for a scalar function $f$ define on a subset of $\R$, $f'$ denotes its gradient.
$\overset{\mathcal{L}-s(f)}{\longrightarrow}$ (resp. $\overset{\mathcal{L}-s}{\longrightarrow}$) denotes stable convergence in law of processes (resp. variables); $\overset{\mathcal{L}}{\longrightarrow}$ denotes convergence in law (weak convergence); $\overset{u.c.p.}{\longrightarrow}$ denotes uniform convergence in probability on compact sets; $\mathcal{MN}(0,V)$ is a mixed Gaussian distribution with zero mean and conditional covariance matrix $V$.

\section{Methodology}\label{sec:method}
The estimation methodology is comprised of five steps:
\begin{enumerate}
\item[i.] local moving averages of noisy data by a smoothing kernel;
\item[ii.] jump truncation operated on local moving averages;
\item[iii.] spot volatility estimator $\widehat{c}^n_i$'s for estimating $c^n_i$'s;
\item[iv.] Riemann sum of $g(\widehat{c}^n_i)$'s for approximating the integral functional $\int g(c_s)\ds s$;
\item[v.] nonlinearity bias correction.
\end{enumerate}

The idea of local moving average is due to the pre-averaging method \cite{j09,jpv10}; the truncation is modified from (16.4.4) in \cite{jp12}; the Riemann sum and bias correction are inspired by \cite{jr13}. 

Though nonlinear bias does not impact on statistical consistency, we stress that we must carefully take into account the nonlinear terms since they lead to bias in the second order. To improve convergence rate and to establish CLT, nonlinearity bias correction is crucial.

The specific recipe is given next.

\subsection{Estimator of stochastic volatility matrix}
For the local moving averages, we choose a smoothing kernel $\varphi$ such that
\begin{equation}\label{phi.cond}
\begin{array}{l}
	\text{supp}(\varphi)\subset(0,1),\, \int_0^1\varphi^2(s)\ds s>0  \\
	\varphi\in\mathcal{C} \text{ is piecewise } \mathcal{C}^1;\, \varphi' \text{ is piecewise Lipschitz}
\end{array}
\end{equation}

We choose an integer $l_n$ as the number of observations in each smoothing window, define $\varphi^n_h=\varphi(h/l_n)$ and $\psi_n=\sum_{h=1}^{l_n-1}(\varphi^n_h)^2$. We associate the following quantities with the noisy process $Y$,
\begin{equation}\label{def.Ybar.Yhat}
\begin{array}{lcl}
	\widebar{Y}^n_i &=& (\psi_n)^{-1/2}\sum_{h=1}^{l_n-1}\varphi^n_h\, \Delta^n_{i+h-1}Y \\
	\widehat{Y}^n_i &=& (2\psi_n)^{-1}\sum_{h=0}^{l_n-1}(\varphi^n_{h+1}-\varphi^n_h)^2\, \Delta^n_{i+h}Y \otimes \Delta^n_{i+h}Y
\end{array}
\end{equation}
\begin{itemize}
	\item $\widebar{Y}^n_i$ is a local moving average of the noisy data $Y^n_i$'s, and it serves as a proxy for $\Delta^n_iX$;
	\item $\widehat{Y}^n_i$ is an offset to the remaining noise in the outer product $\widebar{Y}^n_i\otimes\widebar{Y}^n_i$. 
\end{itemize}

If there is jump(s) in the local window $(i\Delta_n,(i+l_n-1)\Delta_n]$, $\widebar{Y}^n_i$ deviates from the target. We deem the moving average to be unreliable and veto it if $\|\widebar{Y}^n_i\|>\nu_n$. The threshold of jump truncation is $\nu_n=\alpha\,\Delta_n^\rho$. We let $\alpha$ mirror the dimensionality and volatility level, and set the value of $\rho$ according to the scaling property of Brownian motion (the choice of $\rho$ is stated in (\ref{tuning})). There are some variants to this jump truncation. An alternative is elementwise truncation $|\widebar{Y}^{r,n}_i|>\alpha^r_i\Delta_n^\rho$ where $\alpha^r_i$ is related to the local volatility of the $r$-th component. We recommend jump truncation tuned to optimize finite-sample performance, and it is used in our simulation and data analysis. We should emphasize that these variants share the same asymptotic behavior and satisfy the same asymptotic theory.

Based on these ingredients, we choose $k_n$ such that $k_n>l_n$, and design the following estimator of stochastic volatility matrix:
\begin{equation}\label{def.chat}
	\widehat{c}^n_i \equiv \frac{1}{(k_n-l_n)\Delta_n}\sum_{h=1}^{k_n-l_n+1}\Big(\widebar{Y}^n_{i+h}\otimes\widebar{Y}^n_{i+h}\1{\|\widebar{Y}^n_{i+h}\|\le\nu_n} - \widehat{Y}^n_{i+h}\Big)
\end{equation}

This offset term $\widehat{Y}^n_i$ is needed to achieve the optimal convergence rate in CLT. Yet a disadvantage is that $\widehat{c}^n_i$ can not be guaranteed to be positive semidefinite in finite samples. A fix is to project $\widehat{c}^n_i$ onto the convex cone $\mathcal{S}^+_d$ in Frobenius norm. Suppose $\widehat{c}^n_i = Q\Lambda Q^\T$ is the eigenvalue factorization, the positive semidefinite projection is $\widehat{c}'^n_i=Q\Lambda_+Q^\T$ where $\Lambda_+^{ij} = (\Lambda^{jj}\vee 0)\1{i=j}$.

Stimulated by \cite{ckp10}, we design another estimator of stochastic volatility matrix to fulfill the positive semidefinite requirement:
\begin{equation}\label{def.ctilde}
	\widetilde{c}^n_i \equiv \frac{1}{(k_n-l_n)\Delta_n}\sum_{h=1}^{k_n-l_n+1} \widebar{Y}^n_{i+h}\otimes\widebar{Y}^n_{i+h}\1{\|\widebar{Y}^n_{i+h}\|\le\nu_n}
\end{equation}
In addition to being positive semidefinite which some readers may prioritize, (\ref{def.ctilde}) leads to easier and more practical bias correction and uncertainty quantification for general functionals of volatility matrix (our end goal). This kind of advantage of (\ref{def.ctilde}) is pronounced in PCA. Detailed comparison between these two estimators is given in Remark \ref{remk.compare}.

The estimators of stochastic volatility matrix use local moving averages computed on overlapping windows. Local moving averages on overlapping windows fully utilize data and can lead to the optimal convergence rate. The computation can be done via convolution of the noisy data with a sequence induced by the smoothing kernel $\varphi$. Hence (\ref{def.chat}) and (\ref{def.ctilde}) can be implemented by a fast Fourier transform (FFT) algorithm and they are scalable for very large datasets.
\begin{figure}[H]
	\centering
	\includegraphics[width=.8\textwidth]{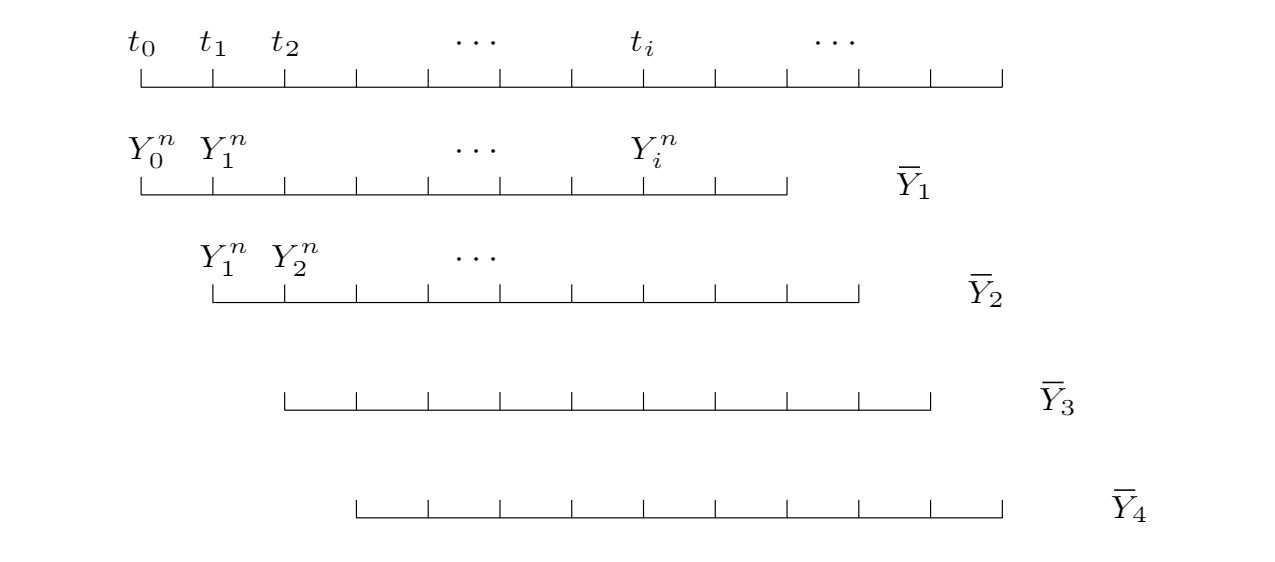}
\end{figure}

\subsection{Correction of nonlinearity bias}
As an exposition of the necessity of nonlinear bias correction, let first consider the case of $d=1$ and constant volatility. By Taylor expansion, the estimation error of the plug-in estimator $g(\widehat{c})$ can be decomposed as 
\begin{equation*}
	g(\widehat{c}) - g(c) = \underbrace{\partial g(c)(\widehat{c} - c)}_{\text{limit}} + \underbrace{\frac{1}{2}\partial^2 g(c)(\widehat{c} - c)^2}_{\text{nonlinearity bias}} + \underbrace{O_p(|\widehat{c} - c|^3)}_{\text{asymptotically negligible}}
\end{equation*}
As it turns out (shown in Appendix \ref{apdx:thm1}, \ref{apdx:thm2}), the first-order term converges to the limit in CLT. The bias arises inevitably, as long as $g$ is nonlinear (with non-zero Hessian), from the second-order derivative and the quadratic form of estimation error.

In the multivariate case, the bias term is more involved. Let $\widehat{B}(g)^n_i$ be the nonlinear bias correction term to $g(\widehat{c}^n_i)$, we find that $\widehat{B}(g)^n_i$ need to be
\begin{equation}\label{def.Bhat}
	\widehat{B}(g)^n_i = \frac{1}{2k_n\Delta_n^{1/2}}\sum^d_{j,k,l,m=1}\partial^2_{jk,lm}g(\widehat{c}^n_i)\times\Xi\big(\widehat{c}^n_i,\widehat{\rr}^n_i\big)^{jk,lm}
\end{equation}
where $\Xi$ is a tensor-valued function defined in (\ref{def.Xi}), and $\widehat{\rr}^n_i$ is an estimator of the instantaneous covariance matrix of noise at time $i\Delta_n$:
\begin{equation}\label{def.rhat}
	\widehat{\gamma}^n_i \equiv \frac{1}{2m_n}\sum_{h=1}^{m_n}\Delta^n_{i+h}Y\otimes\Delta^n_{i+h}Y
\end{equation}
with $m_n=\lfloor\theta'\Delta_n^{-1/2}\rfloor$ and $\theta'$ being positive finite.

The nonlinearity bias correction term $\widetilde{B}(g)^n_i$ to $g(\widetilde{c}^n_i)$ is
\begin{equation}\label{def.Btilde}
	\widetilde{B}(g)^n_i = \frac{1}{2k_n\Delta_n^{1/2+\delta}}\sum^d_{j,k,l,m=1}\partial^2_{jk,lm}g(\widetilde{c}^n_i)\times\Sigma\big(\widetilde{c}^n_i\big)^{jk,lm}
\end{equation}
where $\delta$ is introduced in (\ref{tuning.psd}), $\Sigma$ is defined in (\ref{tensors}).

\subsection{Estimators of functionals}
\begin{defn}\label{def.S(g)hat}
	Let $N^n_t=\lfloor t/(k_n\Delta_n)\rfloor$, a rate-optimal estimator of $(\ref{def.S(g)})$ is defined as
	\begin{equation*}
		\widehat{S}(g)^n_t \equiv k_n\Delta_n\sum_{i=0}^{N^n_t-1}\left[g(\widehat{c}^n_{ik_n}) - \widehat{B}(g)^n_{ik_n}\right] \times a^n_t 
	\end{equation*}
	$\widehat{c}^n_h$ is defined in (\ref{def.chat}), $\widehat{B}(g)^n_h$ is defined in (\ref{def.Bhat}), $a^n_t = t/(N^n_tk_n\Delta_n)$ is a finite-sample adjustment.
\end{defn}

We choose to compute $\widehat{c}^n_h$'s over disjoint windows for estimating $S(g)_t$. One may choose to compute $\widehat{c}^n_h$'s on overlapping windows instead and define an alternative estimator as $\widehat{S}(g)'^n_t\equiv\Delta_n\sum_{i=0}^{\lfloor t/\Delta_n\rfloor-1}[g(\widehat{c}^n_i) - B(g)^n_i]$. But we advice against this alternative for two reasons. The first one is that $\widehat{S}(g)^n_t$ and $\widehat{S}(g)'^n_t$ has the same asymptotic properties such as convergence rate and limiting distributions, cf. \cite{ax19a,jr13}. The second one, more essential for some applications, is that computing $\widehat{S}(g)'^n_t$ is unfortunately impractical in some applications, such as the data analysis using PCA in Section \ref{sec:taq}. In fact, should $\widehat{S}(g)'^n_t$ be employed, we need to compute more than 100,000 eigenvalue factorizations of $90\times90$ matrices for data of every week, with more than 850 weeks our data analysis would take too long to finish.

\begin{defn}\label{def.S(g)tilde}
	Let $N^n_t=\lfloor t/(k_n\Delta_n)\rfloor$, an estimator of $(\ref{def.S(g)})$ with positive semidefinite plug-ins is defined as
	\begin{equation*}
	\widetilde{S}(g)^n_t \equiv k_n\Delta_n\sum_{i=0}^{N^n_t-1}\left[g(\widetilde{c}^n_{ik_n}) - \widetilde{B}(g)^n_{ik_n}\right] \times a^n_t
	\end{equation*}
	where $\widetilde{c}^n_h$ is defined in (\ref{def.ctilde}), $\widetilde{B}(g)^n_h$ is defined in (\ref{def.Btilde}), and $a^n_t = t/(N^n_tk_n\Delta_n)$ is an adjustment for the right-edge effect.
\end{defn}

Our estimators, given the smoothing kernel $\varphi$, entail three tuning parameters.
\begin{table}[H]
	\centering
	\caption{Tuning parameters}
	\begin{tabular}{c|l}
			& description \\
		\hline
		$l_n$   & length of overlapping windows for local moving averages \\
		$k_n$   & length of disjoint windows for stochastic volatility matrix estimation \\
		$\nu_n$ & threshold of jump truncation
	\end{tabular}
\end{table}

A judicially tuning choice is an indispensable if not the most important part of sound estimation and inference. As for $k_n$, there are three levels of considerations:
\begin{enumerate}
	\item[level 1:] $k_n$ should be chosen to result in consistency and a satisfactory convergence rate;
	\item[level 2:] $k_n$ should be chosen so that we can establish CLT for inference;
	\item[level 3:] $k_n$ should be chosen in a way that allows convenient correction of nonlinear bias.
\end{enumerate}
A severe pitfall looms when level 3 is dismissed. If falling prey to this pitfall, researchers have to estimate the volatility of volatility and volatility jumps (both are notoriously hard to estimate) in order to establish CLT \cite{jr15}. Under the no-noise assumption, this pitfall is documented and avoided by \cite{jr13}. Recently \cite{llx19} proposed the multiscale jackknife method and \cite{y20} used matrix calculus for practical nonlinear bias correction, and they investigated how to choose $k_n$ more robustly under the no-noise assumption. Their results greatly facilitate statistical applications using sparsely subsampled data. 

When we must confront the noise problem to improve statistical quality, there is one more tuning parameter, namely $l_n$, to consider. However, when noise is taken into account, the three levels of considerations for $k_n$ are missing so far, and given a legitimate $l_n$ how to adjust $v_n$ is unknown.

We provide an answer to this tuning problem. We finding the range of tuning parameters when noise and jump are both present. The benefits include more usable data and better estimation accuracy. Importantly, practical bias correction and uncertainty quantification are possible under our tuning configuration. We will show, with suitable choices of the combination $(l_n,\,k_n,\,\nu_n)$, our estimators are widely applicable to any $g:\mathcal{S}^+_d\mapsto\R^r$ that satisfies
\begin{equation}\label{g.cond}
	g\in\mathcal{C}^3(\s)
\end{equation}
where $\s \subset \s^+_d$ is the subspace containing an enlargement of the sample path of $c$. Precisely, $\s\supset\cup_m\s^\epsilon_m$ for some $\epsilon>0$, $\s^\epsilon_m=\big\{A\in\s^+_d: \inf_{M\in\s_m}\|A-M\|\le\epsilon\big\}$ is an enlargement of $\s_m$ which is identified in Assumption \ref{A-v}.

\subsection{Tuning parameters for the optimal convergence rate}
A proper combination of the tuning parameters is the requisite for consistency, the existence of CLT, and the optimal convergence rate. To fulfill these objectives for $\widehat{S}(g)^n_t$, we need
\begin{equation}\label{tuning}
\left\{\begin{array}{rcll}
	l_n &=& \lfloor \theta\Delta_n^{-1/2} \rfloor & \\ 
	k_n &=& \lfloor \varrho\Delta_n^{-\kappa} \rfloor & \hspace{5mm} \kappa\in\left(\frac{2}{3}\vee\frac{2+\nu}{4},\frac{3}{4}\right)\\
	\nu_n &=& \alpha\Delta_n^{\rho} & \hspace{5mm} \rho\in \big[\frac{1}{4}+\frac{1-\kappa}{2-\nu},\frac{1}{2}\big)
\end{array}\right.
\end{equation}
where $\theta,\varrho,\alpha>0$ are positive finite, and $\nu\in[0,1)$ is introduced in Assumption \ref{A-v} which dictates the jump intensity. 

Here we offer an intuition for (\ref{tuning}). The reader can skip this part without affecting understanding of the main results in Section \ref{sec:asymp}.
\begin{itemize}
\item \textit{$l_n$ influences the convergence rate}\\
	Under the noise model (\ref{def.Y}), according to the definition (\ref{def.Ybar.Yhat}),
	\[\widebar{Y}^n_i = \widebar{X}^n_i + \widebar{\e}^n_i\]
	where $\widebar{X}^n_i$ and $\widebar{\e}^n_i$ are defined analogously to $\widebar{Y}^n_i$ (see (\ref{def.Ybar.Yhat})). We can write $\widebar{\e}^n_i = -\psi_n^{-1/2}\sum_{h=0}^{l_n-1}(\varphi^n_{h+1}-\varphi^n_h)\,\e^n_{i+h}$.
	By the conditional independence of $\e^n_i$'s, $\widebar{\e}^n_i=O_p(l_n^{-1})$; $\widebar{X}^n_i=O_p(\Delta_n^{-1/2})$ by (\ref{est.Cbar}) in Appendix \ref{apdx:contin}. By taking $l_n\asymp\Delta_n^{-1/2}$ the orders of $\widebar{X}^n_i$ and $\widebar{\e}^n_i$ are equal. It turns out that this choice of local smoothing window will deliver the optimal rate of convergence.
\item \textit{$k_n$ dictates bias-correction and the CLT form}\\
	Here for exposition let's suppose $d=1$ and $X$ is an continuous It\^o semimartingale, then 
	\[\widehat{c}^n_i - c^n_i = d^n_i + s^n_i\]
	where
	\begin{eqnarray*}
		d^n_i &=& \frac{1}{(k_n-l_n)\Delta_n}\int_{i\Delta_n}^{(i+k_n-l_n+1)\Delta_n}(c_s-c^n_i)\ds s \\
		s^n_i &=& \frac{1}{(k_n-l_n)\Delta_n} \bigg[\widehat{c}^n_i(k_n-l_n)\Delta_n - \int_{i\Delta_n}^{(i+k_n-l_n+1)\Delta_n}c_s\ds s\bigg]
	\end{eqnarray*}
	\begin{itemize}
	\item We call $d^n_i$ ``target error''; $d^n_i=O_p((k_n\Delta_n)^{1/2})$ by (\ref{classic}) in Appendix \ref{apdx:prelim}; its value depends on the temporal evolution of the underlying stochastic volatility.
	\item We call $s^n_i$ ``statistical error''; $s^n_i \approx \frac{1}{(k_n-l_n)\Delta_n}\Delta_n^{1/4}(\chi^n_{i+k_n-l_n+1}-\chi^n_i)$,  where $\chi$ is another continuous It\^o semimartingale; this result is due to (3.8) in \cite{j09}; so $s^n_i=O_p((k_n\Delta_n^{1/2})^{-1/2})$.
	\end{itemize}
	
	Balancing the orders of $d^n_i$ and $s^n_i$ by setting $\kappa=3/4$ brings the error $\widehat{c}^n_i - c^n_i$ down to the minimum order. However, for the purpose of general nonlinear functionals, choosing $\kappa\ge3/4$ causes the bias to be substantially influenced by $d^n_i$ which depends on volatility of volatility and volatility jump (difficult to estimate and de-bias in applications). Therefore, it is advisable to choose $\kappa<3/4$, in which case $s^n_i$ dominates and $d^n_i$ vanishes asymptotically, thereby the thorny terms are circumvented. Besides, to achieve successful de-biasing of statistical error and negligibility of higher-order Taylor-expansion terms, we need $\kappa>2/3$. 
	
	\begin{figure}
	\begin{tikzpicture}[scale=1.3]
	\draw[->] (0,0) -- (8.0,0) node[anchor=west] {$\kappa$ for $k_n$};
	\draw	(2.0,0)  node[anchor=north] {$1/2$}
	(3.66,0) node[anchor=north] {$2/3$}
	(4.5,0)  node[anchor=north] {$3/4$}
	(7.0,0)  node[anchor=north] {$1$};
	\draw[<->] (3.66,-0.5) -- (4.5,-0.5);
	\fill[red!10]  (7.0,3.0) -- (4.5,1.5) -- (4.5,0) -- (7.0,0) -- cycle;
	\fill[blue!10] (2.0,3.0) -- (4.5,1.5) -- (4.5,0) -- (2.0,0) -- cycle;
	\draw[->] (1,0) -- (1,3.6) 
	node[anchor=south,align=center] {order of $\log(|\widehat{c}^n_i-c^n_i|)$};
	\draw[thick,dotted] (2.0,0)  -- (2.0,3.5);
	\draw[thick,dashed] (3.66,0) -- (3.66,3.5);
	\draw[thick,dashed] (4.5,0)  -- (4.5,3.5);
	\draw[thick,dotted] (7.0,0)  -- (7.0,3.5);
	\draw (7.5,3) node {\textcolor{red}{$d^n_i$}}; 
	\draw[thick,red] (2,0) -- (7,3);
	\draw[thick,<-] (6.5,1.5) -- (7.2,1.3)
	node[right,align=left] {dominated by\\ target error};
	\draw (1.5,3) node {\textcolor{blue}{$s^n_i$}}; 
	\draw[thick,blue] (2,3) -- (7,0);
	\draw[thick,<-] (2.5,1.5) -- (1.8,1.3)
	node[left,align=right] {dominated by\\ statistical error};
	\draw[thick,<-] (4.55,1.6) -- (4.8,2.2)
	node[above,align=center] {error\\ minimizing $\kappa$};
	\end{tikzpicture}
	\caption{A schematic plot of tuning $k_n$} 
	\label{fig.intuition}
	\end{figure}
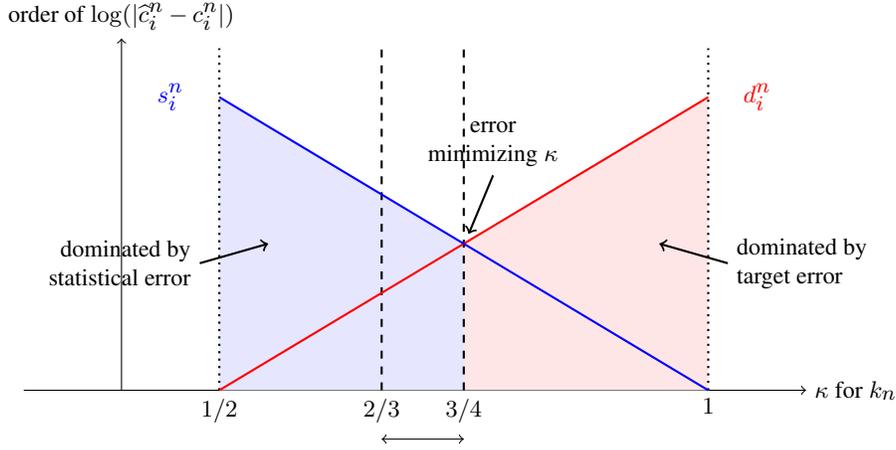
	
	When jump is present and its activity index $\nu>2/3$, the lower-bound requirement on $\kappa$ increase from $2/3$ to $(2+\nu)/4$. This increase is required for uniform convergence in the spatial localization procedure, which in turn is necessary for functionals with a singularity at the origin.
	
	Section 3.1, 3.2 of \cite{jr13} give an analogous discussion and their conclusion is $1/3<\kappa<1/2$ under the no-noise assumption. With noise and local moving averages, our conclusion is markedly different. 
\item \textit{$\nu_n$ disentangles volatility from jump variation}\\
	According to (\ref{est.Y.bars.hats}) in Appendix \ref{apdx:jmp}, $\|\widebar{Y}^n_i\|=O_p(\Delta_n^{1/2})$ if there is no jump over $[i\Delta_n,(i+l_n)\Delta_n]$. By choosing $\rho<1/2$, the truncation level $\nu_n>\Delta_n^{1/2}$, hence jump truncation keeps the diffusion movements largely intact and discards jumps in a certain sense. To effectively filter out the jumps, the truncation level should be bounded from above and the upper bounds depends on the jump activity index $\nu$.
\end{itemize}

\begin{remk}
If the reader is interested to estimate $(\ref{def.S(g)})$ with $g$ satisfying
\begin{equation}\label{g.cond1}
	\|\partial_h g(x)\| \le K_h(1+\|x\|^{r-h}),\,\, h=0,1,2,3,\, r\ge3
\end{equation}
the requirements on $k_n$ and $\nu_n$ can be loosened and become
\begin{equation}\label{tuning1}
\begin{array}{rcll}
	k_n &=& \lfloor\varrho\Delta_n^{-\kappa}\rfloor &  \hspace{5mm} \kappa\in\left(\frac{2}{3},\frac{3}{4}\right)\\
	\nu_n &=&   \alpha\Delta_n^{\rho} & \hspace{5mm} \rho\in \big[\frac{1}{4}+\frac{1}{4(2-\nu)},\frac{1}{2}\big)
\end{array}
\end{equation}
For generality and wider applicability, we choose to accommodate the functional space (\ref{g.cond}) and retain the requirement (\ref{tuning}).
\end{remk}

\subsection{Tuning parameters for p.s.d. plug-ins}
If readers choose to use $\widetilde{S}(g)^n_i$, the tuning parameters should satisfy
\begin{equation}\label{tuning.psd}
\left\{\begin{array}{rcll}
	l_n &=& \lfloor \theta\Delta_n^{-1/2-\delta} \rfloor & \hspace{5mm} \delta\in\big(\frac{1}{10},\frac{1}{2}\big) \\
	k_n &=& \lfloor \varrho\Delta_n^{-\kappa} \rfloor & \hspace{5mm} \kappa\in\big(\big(\frac{2}{3}+\frac{2\delta}{3}\big)\vee\big(\frac{2+\nu}{4}+\frac{(2-\nu)\delta}{2}\big),\frac{3}{4}+\frac{\delta}{2}\big)\\
	\nu_n &=& \alpha\Delta_n^{\rho} & \hspace{5mm} \rho\in \big[\frac{1}{4}+\frac{\delta}{2}+\frac{1-\kappa}{2-\nu},\frac{1}{2}\big)
\end{array}\right.
\end{equation}
The key insight is that we can increase $l_n$ to attenuate noise in $\widebar{Y}^n_i$ and dispense with the noise offset term $\widehat{Y}^n_i$. However, the positive semidefinite estimator does not possess the optimal convergence rate (see the next section).


\section{Asymptotics and Inference}\label{sec:asymp}
\subsection{Elements}
Before stating the asymptotic result, we need to introduce some elements in our asymptotic theory. We associate the following quantities with the smoothing kernel $\varphi$ for $l,m=0,1$:
\begin{equation}\label{phi.vars}
\begin{array}{ll}
	\phi_0(s)=\int_s^1\varphi(u)\varphi(u-s)\ds u, \hspace{5mm} & \phi_1(s)=\int_s^1\varphi'(u)\varphi'(u-s)\ds u \\
	\Phi_{lm}=\int_0^1\phi_l(s)\phi_m(s)\ds s, & \Psi_{lm}=\int_0^1s\,\phi_l(s)\phi_m(s)\ds s
\end{array}
\end{equation}

We define $\Sigma$, $\Theta$, $\Upsilon$ as $\R^{d\times d\times d\times d}$-valued functions, such that for $x,z\in\R^{d\times d}$, $j,k,l,m=1,\cdots,d$,
\begin{eqnarray}\label{tensors}
	\Sigma(x)^{jk,lm}   &=& \frac{2\theta\,\Phi_{00}}{\phi_0(0)^2}\, \big(x^{jl}x^{km} + x^{jm}x^{kl}\big) \nonumber\\
	\Theta(x,z)^{jk,lm} &=& \frac{2\Phi_{01}}{\theta\,\phi_0(0)^2}\, \big(x^{jl}z^{km} + x^{jm}z^{kl} + x^{km}z^{jl} + x^{kl}z^{jm}\big) \\
	\Upsilon(z)^{jk,lm} &=& \frac{2\Phi_{11}}{\theta^3\,\phi_0(0)^2}\, \big(z^{jl}z^{km} + z^{jm}z^{kl}\big) \nonumber
\end{eqnarray}
where $\theta$ is introduced in (\ref{tuning}). We define $\Xi$ as a tensor-valued function
\begin{equation}\label{def.Xi}
	\Xi(x,z) = \Sigma(x) + \Theta(x,z) + \Upsilon(z)
\end{equation}

Now we are ready to describe limiting processes.
\begin{defn}\label{def.Z(g)}
Suppose $g$ satisfies (\ref{g.cond}) or (\ref{g.cond1}), $Z_\Xi(g)$ and $Z_\Sigma(g)$ are two stochastic processes defined on an extension of the probability space  $(\Omega,\F,(\F_t),\mathbb{P})$ specified in (\ref{prob.sp}),
such that conditioning on $\mathcal{F}$, $Z_\Xi(g)$ and $Z_\Sigma(g)$ are centered Gaussian processes with conditional covariance matrices being
\begin{eqnarray*}
	\widebar{E}[Z_\Xi(g)Z_\Xi(g)^\T|\F] &=& V_\Xi(g) \\
	\widebar{E}[Z_\Sigma(g)Z_\Sigma(g)^\T|\F] &=& V_\Sigma(g)
\end{eqnarray*}
where $\widebar{E}$ is the conditional expectation operator on the extended probability space and
\begin{eqnarray}
	V_\Xi(g)_t &=& \int_0^t \sum_{j,k,l,m=1}^{d}\partial_{jk}g(c_s)\,\partial_{lm}g(c_s)^\T\,\Xi(c_s,\gamma_s)^{jk,lm}\ds s \label{AVAR} \\
	V_\Sigma(g)_t &=& \int_0^t \sum_{j,k,l,m=1}^{d}\partial_{jk}g(c_s)\,\partial_{lm}g(c_s)^\T\,\Sigma(c_s)^{jk,lm}\ds s \label{AVAR.psd} 
\end{eqnarray}
where the process $\rr$ is the conditional instantaneous covariance matrix of noise defined in (\ref{def.r}).
\end{defn}

\subsection{Central limit theorems}
\begin{thm}\label{CLT}
Assume assumptions \ref{A-v}, \ref{A-r}, and suppose $g$ satisfies (\ref{g.cond}). We select the tuning parameters according to (\ref{tuning}), then we have the following stable convergence in law of discretized process to a conditional continuous It\^o semimartingale on compact subsets of $\R^+$:
\begin{equation}\label{clt1}
	\Delta_n^{-1/4}\Big[\widehat{S}(g)^n-S(g)\Big] \overset{\mathcal{L}-s(f)}{\longrightarrow} Z_\Xi(g)
\end{equation}
where $S(g)$ is defined in (\ref{def.S(g)}), $\widehat{S}(g)^n$ is from Definition \ref{def.S(g)hat}, $Z(g)$ is identified in Definition \ref{def.Z(g)}.
\end{thm}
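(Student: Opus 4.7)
The plan is to establish Theorem \ref{CLT} by a four-stage argument: localization, jump removal by truncation, a second-order Taylor expansion of $g$ around the spot volatility with explicit cancellation of the Hessian bias, and a stable martingale CLT for the surviving linear piece. I would begin by performing the standard localization procedure of \cite{jp12} to upgrade Assumption \ref{A-v} so that $b,\sigma,\rr$ and $\|c^{-1}\|$ are uniformly bounded. The spatial localization onto the enlargement $\s$ on which $g\in\mathcal{C}^3$ has bounded derivatives is then secured by establishing a uniform-in-$i$ concentration $\max_i\|\widehat{c}^n_i-c^n_i\|\to 0$ in probability; this uses maximal inequalities for the pre-averaged, jump-truncated increments and a union bound, which together force the lower cutoff $\kappa>(2+\nu)/4$ in (\ref{tuning}). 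Jump truncation at $\nu_n=\alpha\Delta_n^{\rho}$ with $\rho<1/2$ eliminates windows straddling a large jump, while the residual small-jump contribution (controlled by the activity index $\nu$) is shown to be $o_p(\Delta_n^{1/4})$ as soon as $\rho\ge \tfrac14+\tfrac{1-\kappa}{2-\nu}$, after which $X$ may be treated as continuous.

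Next I would Taylor-expand $g(\widehat{c}^n_{ik_n})$ to third order around $c^n_{ik_n}$ and decompose the spot error $\widehat{c}^n_{ik_n}-c^n_{ik_n}=d^n_{ik_n}+s^n_{ik_n}$ into target and statistical parts, as in the heuristic of Section \ref{sec:method}. Because $\kappa<3/4$, the linear contribution from the target error $d^n$ is $o_p(\Delta_n^{1/4})$ by a direct calculation using the It\^o structure of $c$ (in particular no volatility-of-volatility or volatility-jump estimator is needed). The Hessian term has a nonvanishing conditional mean: computing the joint moments of the pre-averaged returns $\widebar{Y}^n_i$ and the noise offset $\widehat{Y}^n_i$ under the conditional kernel $Q_t$ produces, upon collecting the purely Brownian, mixed, and purely noise pieces, exactly the tensors $\Sigma(c_s)$, $\Theta(c_s,\rr_s)$ and $\Upsilon(\rr_s)$ of (\ref{tensors}). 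Consequently $k_n\Delta_n\cdot\tfrac{1}{2}\sum\partial^2 g\cdot\mathbb{E}[(s^n_{ik_n})^{\otimes 2}\mid\F^n_{ik_n}]$ coincides with $k_n\Delta_n\cdot\widehat{B}(g)^n_{ik_n}$ up to $o_p(\Delta_n^{1/4})$, which is the raison d'\^etre of the bias correction (\ref{def.Bhat}); the stochastic fluctuation of the Hessian sum around this conditional mean has standard deviation of order $\Delta_n^{\kappa/2}$, which is negligible after rescaling by $\Delta_n^{-1/4}$ since $\kappa>2/3>1/2$. The third-order Taylor remainder contributes at most $O_p(\Delta_n^{(3\kappa-3/2)/2})$ after summation, which becomes $o_p(\Delta_n^{1/4})$ exactly when $\kappa>2/3$.

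After these cancellations the stochastic content of $\Delta_n^{-1/4}[\widehat{S}(g)^n - S(g)]$ reduces to the linear martingale piece
\begin{equation*}
\Delta_n^{-1/4}\cdot k_n\Delta_n\sum_{i=0}^{N^n_t-1}\sum_{j,k=1}^{d}\partial_{jk}g(c^n_{ik_n})\,(s^n_{ik_n})^{jk},
\end{equation*}
whose summands are (conditionally) centered, approximately martingale increments. Because the $s^n_{ik_n}$'s are built from overlapping pre-averaged returns, I would invoke the ``big block-small block'' decomposition of \cite{j09}: successive indices are grouped into mesoscopic blocks, the coefficients $c$, $\rr$ and $\partial g$ are frozen within each block, and the standard conditions for Jacod's stable CLT for semimartingale triangular arrays are then verified. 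Convergence of the predictable quadratic variation to $V_\Xi(g)_t$ of (\ref{AVAR}) uses the same moment identity that underpinned the bias calculation, the conditional Lyapunov step uses the eighth-moment bound on the noise supplied by Assumption \ref{A-r}, and orthogonality against every bounded martingale on $(\F_t)$ delivers $\F$-conditional Gaussianity of the limit, i.e.\ stable convergence in the strong sense $\overset{\mathcal{L}-s(f)}{\longrightarrow}$.

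The hard part will be the tightness of the tuning budget: every error source (target error $d^n$, third-order Taylor remainder, bias mismatch from finite-sample deviations of $(\widebar{Y}^n_i,\widehat{Y}^n_i)$ moments, residual small jumps, noise leakage at the edges of the pre-averaging blocks) must be pushed below $\Delta_n^{1/4}$ simultaneously, and this is only possible on the open interval $\kappa\in((2/3)\vee((2+\nu)/4),\,3/4)$ together with the coupled range for $\rho$. The closely related technical challenge is the exact moment bookkeeping showing that the $\phi_0(0)$ and $\theta$ normalizations of $(\widebar{Y}^n_i,\widehat{Y}^n_i)$ reproduce $\Sigma,\Theta,\Upsilon$ on the nose; it is this identity that locks down the precise forms of both $\widehat{B}(g)$ and the limiting conditional covariance $V_\Xi(g)$, and without it neither the bias cancellation at order $\Delta_n^{1/4}$ nor the identification of the limit would go through.
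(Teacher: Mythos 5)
Your plan follows essentially the same route as the paper's own derivation: localization plus comparison of the truncated estimator with its continuous-proxy version, a uniform concentration of $\widehat{c}^n_i$ to justify bounded derivatives of $g$ on the enlargement $\s$, a second/third-order Taylor expansion in which the conditional second moment of the statistical error reproduces $\Sigma,\Theta,\Upsilon$ and cancels $\widehat{B}(g)^n_i$, and finally the big block--small block device of \cite{j09} with Theorem IX.7.28 of \cite{js03} applied to the surviving linear martingale term, whose conditional variance converges to $V_\Xi(g)$. The tuning bookkeeping you give ($\kappa\in((2/3)\vee((2+\nu)/4),3/4)$, the coupled range for $\rho$, negligibility of the target error, Hessian fluctuation and cubic remainder) matches the paper's Lemmas on $\beta^n_i$ and the decomposition $\widebar{S}^{n,0}+\cdots+\widebar{S}(p)^{n,4}$, so the proposal is correct and not a genuinely different argument.
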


Theorem \ref{CLT} is valid over the functional space (\ref{g.cond}), which is as general as the state-of-the-art literature can achieve. If applications require functionals whose derivatives satisfy a stronger condition (\ref{g.cond1}) of polynomial growth, we can put less restrictions on the tuning parameters.
\begin{prop}\label{CLT.restricted}
Assume Assumptions \ref{A-v}, \ref{A-r}. Replace the functional space (\ref{g.cond}) with (\ref{g.cond1}), replace the tuning conditions (\ref{tuning}) on $k_n$, $\nu_n$ with (\ref{tuning1}), then the functional stable convergence (\ref{clt1}) still holds true.
\end{prop}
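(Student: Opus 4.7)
The plan is to adapt the proof of Theorem \ref{CLT} by removing the spatial-localization step, which is precisely what forces the more stringent constraints on $\kappa$ and $\rho$ in (\ref{tuning}). Under the polynomial-growth condition (\ref{g.cond1}), the derivatives $\partial^h g$ for $h\le 3$ are controlled by $\|x\|^{r-h}$ globally on $\s^+_d$, so uniform $L^p$ moment bounds on the plug-ins are enough to handle every term in the Taylor expansion; consequently we no longer need uniform-in-$i$ convergence of $\widehat{c}^n_i$ on an enlarged neighborhood of the range of $c$, which was the reason the bound $(2+\nu)/4$ on $\kappa$ and the $\kappa$-dependent lower bound on $\rho$ appeared in (\ref{tuning}).

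First I would invoke the usual strong-localization argument. Under Assumption \ref{A-v}, $c$ and $b$ are locally bounded and $c$ is locally well-conditioned; under Assumption \ref{A-r} the noise has finite eighth moment. So up to a localizing stopping time we may assume that all coefficients are bounded on $[0,t]$, which yields uniform $L^p$ bounds (for every $p<\infty$) on $\widebar{Y}^n_i$, $\widehat{c}^n_i$, $\widehat{\gamma}^n_i$, as in Appendix \ref{apdx:contin}. Combined with (\ref{g.cond1}), these translate into uniform $L^p$ bounds on $\partial^h g(\widehat{c}^n_{ik_n})$ for $h=0,1,2$ and on the third-order Taylor remainder evaluated on the line segment between $c^n_{ik_n}$ and $\widehat{c}^n_{ik_n}$.

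Next I would replay the core decomposition driving Theorem \ref{CLT}. Taylor-expanding at $c^n_{ik_n}$,
\begin{equation*}
g(\widehat{c}^n_{ik_n}) - g(c^n_{ik_n}) = \sum_{j,k}\partial_{jk}g(c^n_{ik_n})\,\bigl(\widehat{c}^n_{ik_n}-c^n_{ik_n}\bigr)^{jk} + \tfrac{1}{2}\sum_{j,k,l,m}\partial^2_{jk,lm}g(c^n_{ik_n})\,\bigl(\widehat{c}^n_{ik_n}-c^n_{ik_n}\bigr)^{jk}\bigl(\widehat{c}^n_{ik_n}-c^n_{ik_n}\bigr)^{lm} + R^n_i.
\end{equation*}
The first-order term, summed over the disjoint blocks and rescaled by $\Delta_n^{-1/4}$, produces the martingale limit $Z_\Xi(g)$ through the same big-block/small-block construction used in Theorem \ref{CLT}; this step uses only local Gaussian approximation of pre-averaged increments and is insensitive to the dropped $(2+\nu)/4$ bound. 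The second-order term furnishes the nonlinearity bias, and its removal by $\widehat{B}(g)^n_{ik_n}$ is valid because consistency of $\widehat{c}^n_{ik_n}$, consistency of $\widehat{\rr}^n_{ik_n}$ for $\rr^n_{ik_n}$, and continuity of $\partial^2 g$ and $\Xi$ all go through under (\ref{g.cond1}) with only the $L^p$ moment bounds — no spatial uniformity is needed. For the remainder, $\|R^n_i\|\lesssim \|\partial^3 g\|_{\text{on segment}}\cdot\|\widehat{c}^n_{ik_n}-c^n_{ik_n}\|^3$; polynomial growth absorbs the first factor into a moment, while the cube of the estimator error is $o_p(\Delta_n^{1/4}/(k_n\Delta_n))$ once $\kappa>2/3$, which is the lower bound retained in (\ref{tuning1}).

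Finally I would revisit the jump truncation argument from Appendix \ref{apdx:jmp}. What must be checked is that, under the relaxed range $\rho\ge \tfrac{1}{4} + \tfrac{1}{4(2-\nu)}$, the contribution of truncation errors to the Riemann sum after multiplying by $\partial^2 g(\widehat{c}^n_{ik_n})$ is still $o_p(\Delta_n^{1/4})$. Because $\partial^2 g$ has polynomial growth, this reduces to the same type of $L^p$ bound on the jump-truncation discrepancy that appears in the noise-free literature \cite{jr13,jr15}, with the noise-inflated baseline scale $\Delta_n^{1/2}$ (rather than $\Delta_n^{1/2-\varepsilon}$) replacing the classical scale; balancing this scale against the small-jump contribution of index $\nu$ yields exactly the exponent $\tfrac{1}{4}+\tfrac{1}{4(2-\nu)}$. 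I expect this jump accounting to be the main obstacle: in Theorem \ref{CLT} the stricter $\rho$ was chosen so that truncation errors are controlled uniformly over the spatial enlargement $\s^\epsilon_m$, and replacing that uniform control by a pointwise moment bound absorbed by the polynomial growth of $\partial^2 g$ is the one place in the proof where the relaxed tuning must be reconciled carefully with the CLT rate.
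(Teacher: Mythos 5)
The paper does not give an explicit proof of Proposition~\ref{CLT.restricted} --- it is stated without an accompanying appendix --- but your proposal follows precisely the route the paper signals. In the tuning discussion after (\ref{tuning}) the paper itself says that when $\nu>2/3$ ``the lower-bound requirement on $\kappa$ increases from $2/3$ to $(2+\nu)/4$. This increase is required for uniform convergence in the spatial localization procedure,'' which is exactly your observation: the stricter constraints exist only to secure the uniform bound (\ref{chat-c.uniform}) that places $\widehat{c}^n_i$ in $\s^\epsilon$ and thus licenses the deterministic derivative bound (\ref{g.cond.strong}). Two points worth making explicit in a full write-up. First, the constraint $\kappa>(2+\nu)/4$ in (\ref{tuning}) is nothing but the non-emptiness condition for the interval $\big[\tfrac14+\tfrac{1-\kappa}{2-\nu},\tfrac12\big)$; and the relaxed lower bound in (\ref{tuning1}) equals $\tfrac14+\tfrac{1}{4(2-\nu)}=\tfrac{3-\nu}{4(2-\nu)}$, which is precisely the threshold that makes the jump-truncation term $\widebar{S}^{n,1}$ in Lemma~\ref{lemma.v1} vanish and is now nonempty for all $\nu<1$, so the $(2+\nu)/4$ bound disappears and only $\kappa>2/3$ (from the cubic Taylor remainder, as you compute) survives. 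Second, when you replace (\ref{g.cond.strong}) by the random bound $\|\partial^h g(\widehat{c}^n_i)\|\le K(1+\|\widehat{c}^n_i\|^{r-h})$ and H\"older, you need sufficiently high moments of $\beta^n_i=\widehat{c}^{*n}_i-c^n_i$; Lemma~\ref{est.beta} controls them up to order four, ultimately resting on the finite eighth noise moment in Assumption~\ref{A-r}, so the growth exponent $r$ cannot be taken arbitrarily large relative to these moments --- the same bookkeeping as in \cite{jr13,jr15}, but it should be stated.
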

However, Proposition \ref{CLT.restricted} rules out functionals that involve matrix inversion, it is not applicable to, for instance, estimation of precision matrices and inference of linear regression models. In the rest of this paper, we focus on the results over the general functional space (\ref{g.cond}).

We also establish a limit theorem for $\widetilde{S}(g)^n$ using positive semidefinite plug-ins:
\begin{thm}\label{CLT.psd}
	Assume assumptions \ref{A-v}, \ref{A-r}, and suppose $g$ satisfies (\ref{g.cond}). We select the tuning parameters according to (\ref{tuning.psd}), then we have the following stable convergence in law of discretized process to a conditional continuous It\^o semimartingale on compact subsets of $\R^+$:
	\begin{equation}\label{clt1.psd}
		\Delta_n^{-(1/4-\delta/2)}\Big[\widetilde{S}(g)^n-S(g)\Big] \overset{\mathcal{L}-s(f)}{\longrightarrow} Z_\Sigma(g)
	\end{equation}
	where $\delta$ is specified in (\ref{tuning.psd}), $Z_\Sigma(g)$ is identified in Definition \ref{def.Z(g)}.
\end{thm}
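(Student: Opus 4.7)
\textbf{Proof proposal for Theorem \ref{CLT.psd}.}

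The plan is to mirror the strategy behind Theorem \ref{CLT} but to exploit the inflated smoothing window $l_n\asymp\Delta_n^{-1/2-\delta}$ in order to replace the noise-offset bookkeeping by a simple order-of-magnitude bound. I would start from a second-order Taylor expansion
\[
 g(\widetilde{c}^n_{ik_n})-g(c^n_{ik_n})=\sum_{j,k}\partial_{jk}g(c^n_{ik_n})\,(\widetilde{c}^n_{ik_n}-c^n_{ik_n})^{jk}+\tfrac{1}{2}\sum \partial^2_{jk,lm}g(c^n_{ik_n})(\widetilde{c}^n_{ik_n}-c^n_{ik_n})^{jk,lm}+R^n_i,
\]
multiply by $k_n\Delta_n$, sum over the $N^n_t$ disjoint blocks, and subtract $\int_0^t g(c_s)\ds s$. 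The bias correction $\widetilde{B}(g)^n_{ik_n}$ then needs to match the mean of the quadratic term, while the leading linear contribution drives the CLT and the cubic residue $R^n_i$ must be pushed below the rate $\Delta_n^{-(1/4-\delta/2)}$.

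The core step is the decomposition $\widetilde{c}^n_i-c^n_i=d^n_i+s^n_i+\eta^n_i+\tau^n_i$, where $d^n_i$ is the drift/target error produced by the temporal variation of $c$ over a window of length $k_n\Delta_n$, $s^n_i$ is the pre-averaged signal statistical error analogous to that in \cite{j09,jr13}, $\eta^n_i$ collects the residual noise terms $\widebar{\e}^n\otimes\widebar{\e}^n+\widebar{X}^n\otimes\widebar{\e}^n+\widebar{\e}^n\otimes\widebar{X}^n$, and $\tau^n_i$ quantifies the effect of jump truncation. Because $\widebar{\e}^n_i=O_p(l_n^{-1})=O_p(\Delta_n^{1/2+\delta})$, the diagonal noise term $(k_n-l_n)^{-1}\Delta_n^{-1}\sum\widebar{\e}^n\otimes\widebar{\e}^n$ has mean of order $\Delta_n^{2\delta}$ and fluctuations of smaller order; feeding this into the second-order Taylor term and noting the scaling factor $k_n\Delta_n$ in $\widetilde{S}(g)^n_t$ shows that the $\gamma$-dependent bias components $\Theta$ and $\Upsilon$ that appear in $V_\Xi$ now enter only through $\widetilde{B}(g)^n_i$ at a lower order and are asymptotically negligible at the slower rate $\Delta_n^{-(1/4-\delta/2)}$. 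Consequently, only $\Sigma(c)$ survives in the asymptotic variance. For the linear term I would invoke the ``big block $-$ small block'' splitting of \cite{j09}: freeze $\partial_{jk}g(c)$ on each big block, apply the martingale CLT on the pre-averaged signal statistical error (conditionally on $\F$), and use tightness together with stable convergence in the sense of \cite{jp12} to obtain convergence to a conditionally Gaussian process with covariance $V_\Sigma(g)_t$. Jump truncation is handled by checking, as in Section 16 of \cite{jp12}, that with $\rho\in[\tfrac14+\tfrac\delta2+\tfrac{1-\kappa}{2-\nu},\tfrac12)$ the events $\{\|\widebar{Y}^n_i\|\le\nu_n\}$ keep continuous increments intact while discarding jump contributions with sufficient margin; the required uniform convergence statement for $g$ merely $\mathcal{C}^3$ on $\s$ (possibly with singular derivatives) is obtained from a spatial-localization argument along the lines of \cite{ltt17}, extended to the noisy setting as described in the introduction.

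The main obstacle is not the CLT itself, which is structurally identical to that for $\widehat{S}(g)^n$, but the order bookkeeping under the modified tuning $(l_n,k_n,\nu_n)$ in (\ref{tuning.psd}). Specifically one has to show that every cross term of the form $\partial g\cdot\eta^n$, $\partial g\cdot d^n$, $\partial^2 g\cdot s^n d^n$, $\partial^2 g\cdot s^n \eta^n$, and the cubic Taylor remainder $R^n_i$, when aggregated over $N^n_t$ blocks, are $o_p(\Delta_n^{1/4-\delta/2})$. The lower bound $\kappa>(\tfrac23+\tfrac{2\delta}{3})\vee(\tfrac{2+\nu}{4}+\tfrac{(2-\nu)\delta}{2})$ is exactly what is needed to kill the cubic Taylor remainder and the jump-truncation tail, while the upper bound $\kappa<\tfrac34+\tfrac\delta2$ is forced by the requirement that the target error $d^n_i$ be of smaller order than the statistical error $s^n_i$, so that volatility-of-volatility and volatility-jump terms are asymptotically absent. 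The constraint $\delta>\tfrac{1}{10}$ turns out to be dictated by the tradeoff between shrinking the noise-squared bias $\Delta_n^{2\delta}$ and preserving the polynomial envelope needed for uniform integrability of $\partial^2 g(\widetilde{c}^n_i)$ on $\s$. Once these order estimates are in hand, the stable convergence follows by combining the block-wise CLT with Lemma-type arguments analogous to those used for (\ref{clt1}), and the functional statement on compact subsets of $\R^+$ is obtained by standard tightness.
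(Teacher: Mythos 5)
Your proposal follows essentially the same route as the paper's proof: the same decomposition of $\widetilde{c}^n_i-c^n_i$ into target error, pre-averaged statistical error, a residual noise term of mean order $\Delta_n^{2\delta}$, and a jump-truncation effect, the same spatial-localization step to justify bounded derivatives of $g$, the same negligibility bookkeeping under (\ref{tuning.psd}), and the same big-block/small-block martingale CLT (Theorem IX.7.28 type argument) showing that only $\Sigma(c)$ survives in the limit at rate $\Delta_n^{-(1/4-\delta/2)}$. Two small corrections to your bookkeeping: $\Theta$ and $\Upsilon$ do not enter $\widetilde{B}(g)^n_i$ at all (the correction contains only $\Sigma$; the noise quadratic terms are simply of lower order and need no correction), and the constraint $\delta>1/10$ comes from requiring the aggregated noise bias $\Delta_n^{-1/4+\delta/2}\cdot\Delta_n^{2\delta}=\Delta_n^{5\delta/2-1/4}$ to vanish, not from integrability of $\partial^2 g(\widetilde{c}^n_i)$.
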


The foregoing asymptotic results are stated in terms of converging stochastic processes, which is necessary to express functional stable convergence in law (the strongest mode of convergence one can obtain in our setting). Below is a point-wise formulation which is more relevant for statistical applications. For finite $t$, and under the conditions of Theorem \ref{CLT},
\begin{equation}\label{clt2}
	n^{1/4}\left[\widehat{S}(g)^n_t-S(g)_t\right]\overset{\mathcal{L}-s}{\longrightarrow}\mathcal{MN}\big(0,\sqrt{t}\,V_\Xi(g)_t\big)
\end{equation}
Moreover, under the conditions of Theorem \ref{CLT.psd},
\begin{equation}\label{clt2.psd}
	n^{1/4-\delta/2}\left[\widetilde{S}(g)^n_t-S(g)_t\right]\overset{\mathcal{L}-s}{\longrightarrow}\mathcal{MN}\big(0,t^{1/2-\delta}\,V_\Sigma(g)_t\big)
\end{equation}

\subsection{Confidence intervals}
The asymptotic variances (\ref{AVAR}) and (\ref{AVAR.psd}) are also functionals of volatility matrix, so we can also estimate the asymptotic variances by the plug-in framework, i.e. plugging in estimates of stochastic volatility matrices (and estimates of instantaneous noise covariance matrices) to the corresponding functional forms.
\begin{defn}
	For $g$ that satisfies (\ref{g.cond}), we define
\begin{equation}
	\widehat{V}_\Xi(g)^n_t \equiv k_n\Delta_n\sum_{i=0}^{N^n_t-1}\sum_{j,k,l,m=1}^{d}\partial_{jk}g(\widehat{c}^n_{ik_n})\,\partial_{lm}g(\widehat{c}^n_{ik_n})^\T\;\Xi(\widehat{c}^n_{ik_n},\widehat{\gamma}^n_{ik_n})^{jk,lm} \label{def.V(g)hat}
\end{equation}
as the estimator for $V_\Xi(g)_t$ for which we abide the tuning (\ref{tuning}).

Furthermore, we define
\begin{equation}
	\widetilde{V}_\Sigma(g)^n_t \equiv k_n\Delta_n\sum_{i=0}^{N^n_t-1}\sum_{j,k,l,m=1}^{d}\partial_{jk}g(\widetilde{c}^n_{ik_n})\,\partial_{lm}g(\widetilde{c}^n_{ik_n})^\T\;\Sigma(\widetilde{c}^n_{ik_n})^{jk,lm} \label{def.V(g)tilde}
\end{equation}
as the estimator for $V_\Sigma(g)_t$ for which we abide the tuning (\ref{tuning.psd}).
\end{defn}

We have that both $\widehat{V}_\Xi(g)^n$ and $\widetilde{V}_\Sigma(g)^n$ are consistent:
\begin{prop}\label{AVAR.est}
Suppose Assumptions \ref{A-v}, \ref{A-r}, $g$ satisfies (\ref{g.cond}) and $t$ is finite. 

With the tuning (\ref{tuning}), we have
\begin{equation*}
	\big\|\widehat{V}_\Xi(g)^n_t - V_\Xi(g)_t\big\| = O_p(\Delta_n^{\kappa-1/2})
\end{equation*}
With the tuning (\ref{tuning.psd}), we have
\begin{equation*}
\big\|\widetilde{V}_\Sigma(g)^n_t - V_\Sigma(g)_t\big\| = O_p(\Delta_n^{\kappa-1/2-\delta})
\end{equation*}
\end{prop}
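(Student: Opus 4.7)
I would recognize $\widehat V_\Xi(g)^n_t$ as itself a plug-in Riemann sum, \emph{without} bias correction, for the integral functional $V_\Xi(g)_t = \int_0^t G(c_s,\gamma_s)\ds s$, where the auxiliary integrand
\[
G(c,\gamma) := \sum_{j,k,l,m=1}^d \partial_{jk}g(c)\,\partial_{lm}g(c)^\T\,\Xi(c,\gamma)^{jk,lm}
\]
is $\mathcal{C}^2$ in $c$ (since $g\in\mathcal{C}^3(\s)$) and polynomial of degree two in $\gamma$, so second-order Taylor expansions are legitimate on the localization enlargement $\s^\e_m$ of Assumption \ref{A-v}. I split
\[
\widehat V_\Xi(g)^n_t - V_\Xi(g)_t = R_1^n + R_2^n + O_p(\Delta_n^{1-\kappa}),
\]
where the remainder captures the boundary effect of $a^n_t$, $R_2^n = k_n\Delta_n\sum_i G(c^n_{ik_n},\gamma^n_{ik_n}) - V_\Xi(g)_t$ is the ideal Riemann discretization, and $R_1^n = k_n\Delta_n\sum_i [G(\widehat c^n_{ik_n},\widehat\gamma^n_{ik_n})-G(c^n_{ik_n},\gamma^n_{ik_n})]$ is the plug-in error.

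For $R_2^n$, an It\^o-type blockwise analysis (valid because $c$ is an It\^o semimartingale and $\gamma$ is c\`adl\`ag with bounded moments) gives the drift-plus-martingale bound $R_2^n = O_p(k_n\Delta_n) = O_p(\Delta_n^{1-\kappa})$; the condition $\kappa<3/4$ from (\ref{tuning}) ensures $1-\kappa > \kappa-1/2$, so this term is strictly smaller than the target rate. For $R_1^n$, I Taylor-expand $G$ in both arguments around $(c^n_{ik_n},\gamma^n_{ik_n})$ to second order. Using the statistical-error scale $\widehat c^n_{ik_n}-c^n_{ik_n}=O_p(\Delta_n^{(\kappa-1/2)/2})$ together with approximate $\F^n_{ik_n}$-centering and cross-block orthogonality of these errors (already exploited in the proof of Theorem \ref{CLT}), the linear piece $\sum_i \nabla_c G\cdot(\widehat c^n_{ik_n}-c^n_{ik_n})$ behaves like a martingale of scale $\sqrt{N^n_t}\,\Delta_n^{(\kappa-1/2)/2}$, producing $O_p(\Delta_n^{1/4})$ after the $k_n\Delta_n$ weighting, which is $o_p(\Delta_n^{\kappa-1/2})$. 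The binding contribution is the quadratic piece: the conditional-variance identity $\E[(\widehat c^n_{ik_n}-c^n_{ik_n})^{\otimes 2}\mid\F^n_{ik_n}]\asymp \Delta_n^{\kappa-1/2}\,\Xi(c^n_{ik_n},\gamma^n_{ik_n})$ inherited from the CLT calculation makes $N^n_t\asymp\Delta_n^{\kappa-1}$ blocks contribute $k_n\Delta_n\cdot N^n_t\cdot \Delta_n^{\kappa-1/2} \asymp \Delta_n^{\kappa-1/2}$, exactly the claimed rate. The $\widehat\gamma$-pieces are handled using $\widehat\gamma^n_i-\gamma^n_i=O_p(\Delta_n^{1/4})$ (std block of the two-scale noise-covariance estimator) and are of smaller order, as are the cubic Taylor remainders bounded uniformly on $\s^\e_m$.

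The main obstacle is the conditional orthogonality of $\{\widehat c^n_{ik_n}-c^n_{ik_n}\}_i$ across the disjoint blocks: the pre-averaging windows of length $l_n\asymp\Delta_n^{-1/2}$ near block boundaries introduce a small overlap, which I would control via the big-block/small-block decomposition of \cite{j09}, identical to the device already used to prove Theorem \ref{CLT}. The jump-truncation indicator in $\widehat c^n_i$ contributes a term absorbed by the tuning condition $\rho\ge\tfrac14+\tfrac{1-\kappa}{2-\nu}$ of (\ref{tuning}), via the same moment computation as in Appendix \ref{apdx:thm1}. The argument for $\widetilde V_\Sigma(g)^n_t$ is structurally identical: under (\ref{tuning.psd}), the longer window $l_n\asymp\Delta_n^{-1/2-\delta}$ gives $\E[(\widetilde c^n_i-c^n_i)^{\otimes 2}\mid\F^n_i]\asymp\Delta_n^{\kappa-1/2-\delta}\,\Sigma(c^n_i)$, so the quadratic Taylor piece now delivers $O_p(\Delta_n^{\kappa-1/2-\delta})$, while $R_2^n=O_p(\Delta_n^{1-\kappa})$ remains negligible in the PSD tuning range.
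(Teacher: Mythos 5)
Your proposal is correct and is essentially the paper's own argument, just written out in full: the paper's proof likewise reduces the error of $\widehat{V}_\Xi(g)^n_t$ to the spot-estimation error via Lemma \ref{est.beta} and the (\ref{xi.cond}) machinery, concluding that the rate is that of the uncorrected functional estimator, $(k_n\Delta_n^{1/2})^{-1}=\Delta_n^{\kappa-1/2}$ (and $(k_n\Delta_n^{1/2+\delta})^{-1}=\Delta_n^{\kappa-1/2-\delta}$ for the p.s.d.\ version), exactly your dominant quadratic Taylor term, with discretization, linear, noise-covariance and truncation terms negligible under (\ref{tuning}) and (\ref{tuning.psd}).
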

\begin{proof}
The asymptotic variance $V_\Xi(g)_t$ is a smooth functional of volatility matrix and instantaneous noise covariance matrix. Its consistency follows from the consistence of the  volatility matrix estimator (\ref{def.chat}) and the instantaneous noise covariance matrix estimator (\ref{def.rhat}). According to Lemma \ref{est.beta} in Appendix \ref{apdx:contin} and (\ref{xi.cond}) in Appendix \ref{apdx:asymneg}, the error rate of $\widehat{V}_\Xi(g)^n_t$ is determined by the estimation error of (\ref{def.chat}). Therefore, the error rate of $\widehat{V}_\Xi(g)^n_t$ is the same as the error rate of the volatility functional estimator without bias correction, which is $(k_n\Delta_n^{1/2})^{-1}$, then the proposition follows from (\ref{tuning}).

The consistency of $\widetilde{V}_\Sigma(g)^n_t$ is established by a similar argument.
\end{proof}

Based on Theorem \ref{CLT}, Theorem \ref{CLT.psd}, Proposition \ref{AVAR.est} and the property of stable convergence, we have the following feasible central limit theorems:
\begin{corol}
Suppose $g$ satisfies (\ref{g.cond}) and Assumptions \ref{A-v}, \ref{A-r} hold. With the tuning (\ref{tuning}), we have
\begin{equation}\label{clt3}
	\big[\Delta_n^{1/2}\,\widehat{V}_\Xi(g)^n_t\big]^{-1/2}\left[\widehat{S}(g)^n_t-S(g)_t\right]\overset{\mathcal{L}}{\longrightarrow}\mathcal{N}\big(0,\I_r\big)
\end{equation}
in restriction to the event $\{\omega\in\Omega, \widehat{V}_\Xi(g)^n_t \text{ is positive definite}\}$, where $\Omega$ is defined in (\ref{prob.sp}).

With the tuning (\ref{tuning.psd}), we have
\begin{equation}\label{clt3.psd}
	\big[\Delta_n^{1/2-\delta}\,\widetilde{V}_\Sigma(g)^n_t\big]^{-1/2}\left[\widetilde{S}(g)^n_t-S(g)_t\right]\overset{\mathcal{L}}{\longrightarrow}\mathcal{N}\big(0,\I_r\big)
\end{equation}
in restriction to the event $\{\omega\in\Omega, \widetilde{V}_\Sigma(g)^n_t \text{ is positive definite}\}$.
\end{corol}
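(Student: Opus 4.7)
The plan is to prove this as a standard studentization corollary, combining the stable central limit theorems of Theorems \ref{CLT} and \ref{CLT.psd} with the consistency of the variance estimators from Proposition \ref{AVAR.est}, and then exploiting the fact that stable convergence behaves well under multiplication by $\mathcal{F}$-measurable limits. I would treat the two statements in parallel since they are structurally identical; I describe the first in detail.

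First I would rewrite the target ratio to isolate the rate. Since $\widehat{V}_\Xi(g)^n_t$ converges in probability to $V_\Xi(g)_t$ (Proposition \ref{AVAR.est}, on the restriction event $V_\Xi(g)_t$ is positive definite and hence so is $\widehat{V}_\Xi(g)^n_t$ asymptotically), the matrix inverse square root $[\Delta_n^{1/2}\widehat{V}_\Xi(g)^n_t]^{-1/2} = \Delta_n^{-1/4}\,\widehat{V}_\Xi(g)^n_t{}^{-1/2}$ is well defined for large $n$, so we can decompose
\begin{equation*}
[\Delta_n^{1/2}\widehat{V}_\Xi(g)^n_t]^{-1/2}\bigl[\widehat{S}(g)^n_t-S(g)_t\bigr] = \widehat{V}_\Xi(g)^n_t{}^{-1/2} \cdot \Delta_n^{-1/4}\bigl[\widehat{S}(g)^n_t-S(g)_t\bigr].
\end{equation*}
The second factor is the object controlled by the pointwise stable CLT (\ref{clt2}); a brief reconciliation of rates shows that after accounting for $n\Delta_n=t$, it converges $\mathcal{L}$-$s$ to $\mathcal{MN}(0,V_\Xi(g)_t)$.

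Next I would invoke the defining compatibility property of stable convergence: if $U_n\overset{\mathcal{L}-s}{\to}U$ and $A_n\overset{\mathbb{P}}{\to}A$ with $A$ measurable with respect to the limit $\sigma$-field, then $(A_n,U_n)$ converges jointly, and by the continuous mapping theorem $A_nU_n\overset{\mathcal{L}-s}{\to}AU$ (see e.g.\ Section 2.2.2 of \cite{jp12}). Applying this to $A_n = \widehat{V}_\Xi(g)^n_t{}^{-1/2}$ with $A = V_\Xi(g)_t^{-1/2}$ (on the PD event) yields
\begin{equation*}
[\Delta_n^{1/2}\widehat{V}_\Xi(g)^n_t]^{-1/2}\bigl[\widehat{S}(g)^n_t-S(g)_t\bigr] \overset{\mathcal{L}-s}{\longrightarrow} V_\Xi(g)_t^{-1/2}\,Z,
\end{equation*}
where $Z$ is the $\mathcal{MN}(0,V_\Xi(g)_t)$ limit. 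Since conditionally on $\mathcal{F}$ the random vector $V_\Xi(g)_t^{-1/2}Z$ is $\mathcal{N}(0,\I_r)$, a distribution independent of $\mathcal{F}$, the unconditional law is also $\mathcal{N}(0,\I_r)$; stable convergence thus downgrades to convergence in law as stated in (\ref{clt3}).

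The second assertion (\ref{clt3.psd}) follows by the identical argument with Theorem \ref{CLT.psd} and the second half of Proposition \ref{AVAR.est} replacing their counterparts, tracking the modified rate $\Delta_n^{1/4-\delta/2}$ and the conditional covariance $V_\Sigma(g)_t$ in place of $V_\Xi(g)_t$. The main (and only mild) obstacle is the careful handling of the restriction event: one needs that the complement $\{\widehat{V}_\Xi(g)^n_t \text{ not PD}\}$ has vanishing probability intersected with $\{V_\Xi(g)_t \text{ PD}\}$, which is immediate from the consistency in Proposition \ref{AVAR.est} and continuity of the smallest eigenvalue. Everything else is a routine application of properties of $\mathcal{L}$-$s$ convergence and continuous mapping, so no new probabilistic input is required beyond results already proved in the paper.
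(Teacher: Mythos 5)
Your proposal is correct and follows exactly the route the paper intends: the paper derives this corollary directly from Theorems \ref{CLT} and \ref{CLT.psd}, Proposition \ref{AVAR.est}, and the standard property that stable convergence permits joint convergence with, and multiplication by, consistent $\mathcal{F}$-measurable normalizers, which is precisely your studentization argument (including the correct $\sqrt{t}$ rate reconciliation and the handling of the positive-definiteness restriction event). No gaps; nothing beyond what you wrote is needed.
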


\section{Applications}\label{sec:pca}
\subsection{PCA of high-frequency data}
PCA is one of powerful statistical methods that help us obtain information and gain insights from large and complex datasets. It reduces dimensionality, facilitates data visualization, reveals common trends et cetera. For high-frequency data, PCA can be done by inference of spectral structure of stochastic volatility matrix
\begin{equation}\label{eigen}
	c_t = \left[q_t^1,\cdots,q_t^d \right]
	\left[\begin{array}{ccc}
		\lambda_t^1 & & \\
		& \ddots & \\
		& & \lambda_t^d
	\end{array}\right]
\left[\begin{array}{c}
	q_t^{1,\T} \\ \vdots \\ q_t^{d,\T}
\end{array}\right]
\end{equation}

The eigenvalues and eigenvectors can be considered as functionals of the associated stochastic volatility matrix once directions and lengths of the eigenvectors are specified and some regularity conditions hold. As per this functional idea we write $\lambda^k(c_t)=\lambda^k_t,\; q^k(c_t)=q^k_t$. Based on this insight, \cite{ax19a} applied PCA to nonstationary financial data by conducting inference on the realized eigenvalue $\int_0^t\lambda_s^k\ds s$, realized eigenvector $\int_0^t q_s^k\ds s$, realized principal component $\int_0^tq_{s-}^k\ds X_s$. More recently, \cite{cmz20} extends the realized PCA to asynchronously observed high-frequency noisy data. Here we extend \cite{ax19a} to noisy data, and improve the convergence rate of \cite{cmz20} by a considerable margin.

Suppose for $t\in[0,T]$,
\begin{equation}\label{eigen.clusters}
	\lambda_t^{r_0+1} =\cdots= \lambda_t^{r_1} > \lambda_t^{r_1+1} =\cdots= \lambda_t^{r_2} > \cdots\cdots > \lambda_t^{r_{K-1}+1} =\cdots= \lambda_t^{r_K}
\end{equation}
where $r_0=0,\, r_K=d$. Let $\mathcal{K}_k=\{r_{k-1}+1,\cdots,r_k\}$ for $k=1,\cdots,K$. Define the following functions of volatility matrix
	\[ \lambda(c_t)=\Big( \frac{1}{r_1-r_0}\sum_{r\in\mathcal{K}_1}\lambda_t^r, \cdots, \frac{1}{r_K-r_{K-1}}\sum_{r\in\mathcal{K}_K}\lambda_t^r \Big)^\T \]
If $r_k=r_{k-1}+1$, $\lambda^k_t$ is called a simple eigenvalue, otherwise a repeated eigenvalue. If $\lambda^k_t$ is a repeated eigenvalue, it is not differentiable with respect to $c_t$, and establishing its asymptotic distribution would be problematic. Nevertheless, $\lambda$ defined above is a differentiable function regardless of singleton or multicity (see Remark \ref{remk.spectral.differentiability} in Appendix \ref{apdx:pca}), so we can do nonlinear bias correction and derive its CLT.

To do PCA, we need to estimate
\begin{eqnarray*}
	S(\lambda)_t &=& \int_0^t \lambda(c_s)\ds s \\
	S(q^k)_t &=& \int_0^t q^k(c_s)\ds s,\hspace{4mm} k=1\cdots,d
\end{eqnarray*}

Given $\widetilde{c}^n_i$ defined in (\ref{def.ctilde}), we compute its spectral factorization:
\begin{equation*}
	\widetilde{c}^n_i = \left[\widetilde{q}^{1,n}_i,\cdots,\widetilde{q}^{d,n}_i\right]
	\left[\begin{array}{ccc}
		\widetilde{\lambda}^{1,n}_i & & \\
		& \ddots & \\
		& & \widetilde{\lambda}^{d,n}_i
	\end{array}\right]
	\left[\widetilde{q}^{1,n}_i,\cdots,\widetilde{q}^{d,n}_i\right]^\T
\end{equation*}
Define $\widebar{\lambda}^{k,n}_i = (r_k-r_{k-1})^{-1}\sum_{r\in\mathcal{K}_k}\widetilde{\lambda}^{r,n}_i$, our estimator of $S(\lambda)_t$ is
\begin{eqnarray}\label{est.eigenvalue.spectral}
	\widetilde{S}(\lambda)^n_t &=& \big[\widetilde{S}(\lambda)^{1,n}_t, \cdots, \widetilde{S}(\lambda)^{K,n}_t\big]^\T \nonumber \\
	\widetilde{S}(\lambda)^{k,n}_t &\equiv& k_n\Delta_n \sum_{h=0}^{N^n_t-1} \left[ 1 - \frac{2\theta\Phi_{00}}{\phi_0(0)^2\,k_n\,\Delta_n^{1/2+\delta}}\sum_{v\notin \mathcal{K}_k}\frac{\widetilde{\lambda}^{v,n}_{hk_n}}{\widebar{\lambda}^{k,n}_{hk_n}-\widetilde{\lambda}^{v,n}_{hk_n}}\right] \widebar{\lambda}^{k,n}_{hk_n}
\end{eqnarray}
where $\theta$, $\Phi_{00}$, $\phi_0(0)$ are defined in (\ref{tuning}), (\ref{phi.vars}). Note that $\widetilde{S}(\lambda)^{k,n}_t$ is a Riemann sum of simple eigenvalues or averages of repeated eigenvalues, with a multiplicative bias-correction factor. We have an expression for this multiplicative factor in terms of eigenvalue estimates. Since we do not have to evaluate Hessian of the functional, (\ref{est.eigenvalue.spectral}) is very advantageous for applications.

This advantage is also carried over to eigenvectors. We derive a closed-form expression for the nonlinear bias correction to the eigenvectors and define the estimator of $S(q^k)_t$ to be
\begin{equation}\label{est.eigenvector}
	\widetilde{S}(q^k)^n_t \equiv k_n\Delta_n\sum_{h=0}^{N^n_t-1} \left[1 + \frac{\theta\Phi_{00}}{\phi_0(0)^2\,k_n\,\Delta_n^{1/2+\delta}} \sum_{v\ne k}\frac{\widetilde{\lambda}^{k,n}_{hk_n}\widetilde{\lambda}^{v,n}_{hk_n}}{\big(\widetilde{\lambda}^{k,n}_{hk_n}-\widetilde{\lambda}^{v,n}_{hk_n}\big)^2} \right] \widetilde{q}^{k,n}_{hk_n}
\end{equation}

\begin{prop}\label{CLT.eigenvalue}
Assume Assumption \ref{A-v}, \ref{A-r}, and suppose (\ref{eigen.clusters}). We select the bandwidths $l_n$, $k_n$ and the truncation threshold $\nu_n$ according to (\ref{tuning.psd}). We have the following functional stable convergence in law:
\begin{equation*}
	\Delta_n^{-1/4+\delta/2}\left[\widetilde{S}(\lambda)^n-S(\lambda)\right]\overset{\mathcal{L}-s(f)}{\longrightarrow}Z(\lambda)
\end{equation*}
where $Z(\lambda)$ is a process defined on an extension of the space  $\left(\Omega,\F,(\F_t)_{t\ge0},\mathbb{P}\right)$, such that conditioning on $\mathcal{F}$ it is a continuous centered Gaussian martingale with variance
\begin{equation*}
	\widebar{E}[Z(\lambda)Z(\lambda)^\T|\F] = V(\lambda)
\end{equation*}
where $\widebar{E}$ is the conditional expectation operator on the extended probability space and
\begin{equation}\label{AVAR.eigenvalue}
	V(\lambda)_t = \frac{4\theta\Phi_{00}}{\phi_0(0)^2}
	\left[\begin{array}{ccc}
	(r_1-r_{0})^{-1}\int_0^t (\lambda^{r_1}_s)^2\ds s & & \\
	& \ddots & \\
	& & (r_K-r_{K-1})^{-1}\int_0^t (\lambda^{r_K})^2\ds s
	\end{array}\right]
\end{equation}
\end{prop}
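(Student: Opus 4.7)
The strategy is to treat Proposition \ref{CLT.eigenvalue} as a specialization of Theorem \ref{CLT.psd} to the vector functional $g=\lambda$, then to verify that the explicit multiplicative bias correction in (\ref{est.eigenvalue.spectral}) and the diagonal asymptotic variance (\ref{AVAR.eigenvalue}) coincide, respectively, with the abstract bias term $\widetilde{B}(\lambda)$ from Definition \ref{def.S(g)tilde} and the conditional covariance $V_\Sigma(\lambda)_t$ from (\ref{AVAR.psd}). The plan has three steps: (i) establish $\lambda\in\mathcal{C}^3$ on an enlargement of the sample path of $c$; (ii) identify the bracketed expression in (\ref{est.eigenvalue.spectral}) with $\lambda(\widetilde{c}^n_{ik_n})-\widetilde{B}(\lambda)^n_{ik_n}$; (iii) simplify $V_\Sigma(\lambda)_t$ into the diagonal form (\ref{AVAR.eigenvalue}).

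For step (i), although a simple eigenvalue $\lambda^r(c)$ is only Lipschitz at matrices with coinciding spectrum, the cluster average $|\mathcal{K}_k|^{-1}\sum_{r\in\mathcal{K}_k}\lambda^r(c)$ equals $|\mathcal{K}_k|^{-1}\mathrm{tr}(P_k(c)\,c)$, where $P_k(c)$ is the spectral projector onto cluster $k$. Kato's resolvent calculus makes $P_k$ real-analytic in $c$ whenever the cluster is isolated from the rest of the spectrum, and assumption (\ref{eigen.clusters}) supplies such a uniform gap on $[0,T]$. Combined with the uniform consistency of $\widetilde{c}^n_i$ used in Theorem \ref{CLT.psd}, this justifies the cluster bookkeeping implicit in $\widebar{\lambda}^{k,n}_i$ and puts $\lambda\in\mathcal{C}^3(\mathcal{S})$ on the relevant enlargement. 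Analytic perturbation theory then yields $\partial_{jk}\lambda^k(c)=|\mathcal{K}_k|^{-1}P_k^{jk}$ and, for symmetric $H$,
\[D^2\lambda^k(c)[H,H]=\frac{2}{|\mathcal{K}_k|}\sum_{u\in\mathcal{K}_k}\sum_{v\notin\mathcal{K}_k}\frac{(q^{u,\T}Hq^v)^2}{\lambda^{r_k}-\lambda^v},\]
with within-cluster contributions cancelling by degeneracy.

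For step (ii), contracting the above Hessian against $\Sigma(c)^{jk,lm}=\frac{2\theta\Phi_{00}}{\phi_0(0)^2}(c^{jl}c^{km}+c^{jm}c^{kl})$ and using the orthogonality identity $\sum_{j,l}c^{jl}q^{u,j}q^{v,l}=\lambda^v\delta_{uv}$ reduces the contraction to $\frac{4\theta\Phi_{00}}{\phi_0(0)^2}\sum_{v\notin\mathcal{K}_k}\lambda^{r_k}\lambda^v/(\lambda^{r_k}-\lambda^v)$; plugging into (\ref{def.Btilde}) and factoring $\widebar{\lambda}^{k,n}_i$ turns $\lambda(\widetilde{c}^n_i)-\widetilde{B}(\lambda)^n_i$ into the bracketed multiplicative form in (\ref{est.eigenvalue.spectral}). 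Hence $\widetilde{S}(\lambda)^n$ of (\ref{est.eigenvalue.spectral}) equals the abstract estimator of Definition \ref{def.S(g)tilde} path-wise. For step (iii), substituting $\partial_{jk}\lambda^k=|\mathcal{K}_k|^{-1}P_k^{jk}$ into (\ref{AVAR.psd}) and using $P_kcP_{k'}=\lambda^{r_k}\delta_{kk'}P_k$ together with $\mathrm{tr}(P_k^2)=|\mathcal{K}_k|$ eliminates off-diagonal entries and leaves the $k$-th diagonal equal to $4\theta\Phi_{00}(\phi_0(0))^{-2}|\mathcal{K}_k|^{-1}\int_0^t(\lambda^{r_k}_s)^2\,\ds s$, which is (\ref{AVAR.eigenvalue}). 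Theorem \ref{CLT.psd} then delivers the announced stable convergence at rate $\Delta_n^{1/4-\delta/2}$.

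The main obstacle is step (i): one must show that the spectral gap of (\ref{eigen.clusters}) persists along the estimated path $\widetilde{c}^n_{ik_n}$ uniformly in $i$ with probability tending to one, so that the cluster bookkeeping remains unambiguous and the Kato perturbation expansions retain uniformly bounded coefficients. This requires combining the uniform-in-time gap of (\ref{eigen.clusters}) with the uniform consistency of $\widetilde{c}^n_i$ on compact time intervals developed in the appendix to Theorem \ref{CLT.psd}, plus standard resolvent estimates bounding the first three derivatives of $P_k$ by powers of the reciprocal gap; this is the same spatial-localization strategy used to accommodate functionals with unbounded derivatives discussed in Section \ref{sec:intro}. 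Once this is secured, the remaining identifications are algebraic contractions of the spectral decomposition against $\Sigma$ and are routine.
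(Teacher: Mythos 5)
Your proposal is correct and follows essentially the same route as the paper: specialize Theorem \ref{CLT.psd} to the eigenvalue map and substitute explicit gradient and Hessian formulas into the abstract bias term $\widetilde{B}(\lambda)$ and variance $V_\Sigma(\lambda)$. The paper accomplishes this by citing Lemma 2 and Lemma 4 of \cite{ax19a} (restated as Lemma \ref{lem.derivatives.simple} and Lemma \ref{lem.derivatives.repeated} in Appendix \ref{apdx:pca}) for the derivative formulas, whereas you rederive them from Kato's resolvent calculus via $F_k(c)=|\mathcal{K}_k|^{-1}\operatorname{tr}(P_k(c)\,c)$ and the analyticity of the isolated-cluster projector $P_k$; this also gives the $\mathcal{C}^3$ smoothness on an enlargement of the sample path cleanly, something the paper leaves implicit. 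Your identification of the multiplicative bias factor in (\ref{est.eigenvalue.spectral}), the cancellation of the cross terms via $\sum_{j,m}c^{jm}q^{u,j}q^{v,m}=\lambda^v\delta_{uv}$, and the diagonalization of $V_\Sigma(\lambda)$ via $P_k c P_{k'}=\lambda^{r_k}\delta_{kk'}P_k$ and $\operatorname{tr}(P_kc)=|\mathcal{K}_k|\lambda^{r_k}$ reproduce exactly the computation the paper summarizes as ``by some calculation.'' Your flagged ``main obstacle'' — persistence of the spectral gap along the estimated path — is genuine, but it is exactly what the paper's spatial-localization setup (Assumption \ref{A-v} plus the uniform bound (\ref{chat-c.uniform})) already delivers, and you correctly note this yourself; so it is not an open gap, just the point where the general machinery is invoked.
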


If $\lambda^k_t$ is a simple eigenvalue, the corresponding eigenvector $q^k_t$ is differentiable with respect to $c_t$ (see Lemma \ref{lem.derivatives.simple} in Appendix \ref{apdx:pca}). In this case we can derive an inferential theory for the integrated eigenvector estimator (\ref{est.eigenvector}).
\begin{prop}\label{CLT.eigenvector}
Assume Assumption \ref{A-v}, \ref{A-r}, and suppose $\lambda^k_t$ is a simple eigenvalue of $c_t,\, \forall t\in[0,T]$. We control the bandwidths $l_n$, $k_n$ and the truncation threshold $\nu_n$ according to (\ref{tuning.psd}). We have the following functional stable convergence in law:
\begin{equation*}
	\Delta_n^{-1/4+\delta/2}\left[\widetilde{S}(q^k)^n-S(q^k)\right]\overset{\mathcal{L}-s(f)}{\longrightarrow}Z(q^k)
\end{equation*}
where $Z(q^k)$ is a process defined on an extension of the space  $\left(\Omega,\F,(\F_t)_{t\ge0},\mathbb{P}\right)$, such that conditioning on $\mathcal{F}$ it is a continuous centered Gaussian martingale with variance
\begin{equation*}
	\widebar{E}[Z(q^k)Z(q^k)^\T|\F] = V(q^k)
\end{equation*}
where $\widebar{E}$ is the conditional expectation operator on the extended probability space and
\begin{equation}\label{AVAR.eigenvector}
	V(q^k)_t = \frac{2\theta\Phi_{00}}{\phi_0(0)^2}\int_0^t \sum_{v\ne k} \frac{\lambda^k_s\lambda^v_s}{(\lambda^k_s-\lambda^v_s)^2}\, q^v_s\,q^{v,\T}_s \ds s
\end{equation}
\end{prop}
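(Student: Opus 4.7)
The strategy is to realize $q^k$ as an element of the functional class (\ref{g.cond}) and then invoke Theorem \ref{CLT.psd} directly. Since $\lambda^k_t$ is a simple eigenvalue of $c_t$ for every $t \in [0,T]$, Lemma \ref{lem.derivatives.simple} combined with the local well-conditioning in Assumption \ref{A-v} produces a neighborhood $\s \supset \cup_m \s^\epsilon_m$ of the sample path on which $q^k$ is $\mathcal{C}^3$ with uniformly bounded derivatives. Theorem \ref{CLT.psd} then immediately yields
\begin{equation*}
	\Delta_n^{-1/4+\delta/2}\bigl[\widetilde{S}(q^k)^n - S(q^k)\bigr] \overset{\mathcal{L}-s(f)}{\longrightarrow} Z_\Sigma(q^k),
\end{equation*}
which conditionally on $\F$ is a centered Gaussian martingale with variance $V_\Sigma(q^k)_t$ defined via (\ref{AVAR.psd}). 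What remains is purely algebraic: to identify (i) the bracketed multiplicative factor in (\ref{est.eigenvector}) with $q^k(\widetilde{c}^n_{hk_n}) - \widetilde{B}(q^k)^n_{hk_n}$ from (\ref{def.Btilde}), and (ii) $V_\Sigma(q^k)_t$ with the closed form (\ref{AVAR.eigenvector}).

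Both identifications rest on explicit perturbation formulas for the gradient and Hessian of $q^k$ at a symmetric matrix $c = \sum_v \lambda^v\, q^v q^{v,\T}$ with simple $\lambda^k$. Standard spectral perturbation theory for symmetric matrices gives
\begin{equation*}
	\partial_{ab}\, q^k(c) \;=\; \sum_{v\ne k}\frac{q^{k,a}\,q^{v,b} + q^{k,b}\,q^{v,a}}{2(\lambda^k - \lambda^v)}\, q^v,
\end{equation*}
and a corresponding but lengthier expression for $\partial^2_{ab,uv} q^k(c)$ involving sums over eigenspaces complementary to $\{q^k\}$. The crucial structural observation is that $\Sigma(c)^{ab,uv}$, as defined in (\ref{tensors}), is up to the constant $2\theta\Phi_{00}/\phi_0(0)^2$ the symmetric form $c^{au}c^{bv}+c^{av}c^{bu}$. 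After substituting the spectral decomposition of $c$, the orthonormality $q^{u,\T}q^{w}=\delta_{uw}$ causes the quadruple sum in $\sum_{a,b,u,v}\partial_{ab}q^k\, \partial_{uv}q^{k,\T}\,\Sigma(c)^{ab,uv}$ to telescope, leaving only the diagonal contribution $\lambda^k\lambda^v/(\lambda^k-\lambda^v)^2\cdot q^v q^{v,\T}$ for each $v\ne k$; this is precisely (\ref{AVAR.eigenvector}).

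The analogous Hessian contraction $\sum_{a,b,u,v}\partial^2_{ab,uv}q^k(c)\,\Sigma(c)^{ab,uv}$ collapses by the same mechanism and, after simplification, is proportional to $\bigl(2\theta\Phi_{00}/\phi_0(0)^2\bigr)\sum_{v\ne k}\lambda^k\lambda^v/(\lambda^k-\lambda^v)^2\cdot q^k$. Dividing by $2k_n\Delta_n^{1/2+\delta}$ as dictated by (\ref{def.Btilde}) and factoring $\widetilde{q}^{k,n}_{hk_n}$ out of the result identifies the bracketed multiplicative factor in (\ref{est.eigenvector}) with $q^k(\widetilde{c}^n_{hk_n}) - \widetilde{B}(q^k)^n_{hk_n}$; summing over $h = 0,\ldots,N^n_t-1$ then shows that $\widetilde{S}(q^k)^n_t$ coincides with $\widetilde{S}(g)^n_t$ for $g = q^k$, and the proposition follows from Theorem \ref{CLT.psd}.

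The main obstacle is the bookkeeping of $\partial^2 q^k$, which is a genuine fourth-order tensor comprising both ``two-step'' perturbation paths through intermediate eigenspaces and ``one-step'' contributions arising from the unit-norm constraint on $q^k$. Careful tracking of the symmetrization between the index pairs $(a,b)$ and $(u,v)$, the sign convention fixing $q^k$, and the fact that $\partial_{ab}q^k$ lies in the orthogonal complement of $q^k$, is required for the contractions with $\Sigma$ to telescope into the clean expressions above. Uniformity of the Taylor remainders on the localization set $\s$ is inherited from the uniform spectral gap implied by the simple-eigenvalue hypothesis on a compact enlargement of the sample path, which validates the application of Theorem \ref{CLT.psd}.
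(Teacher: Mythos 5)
Your proposal matches the paper's own proof: the paper likewise derives Proposition \ref{CLT.eigenvector} by applying Theorem \ref{CLT.psd} to $g=q^k$ and using the explicit gradient/Hessian formulas for eigenvectors at a simple eigenvalue (Lemma \ref{lem.derivatives.simple}, quoted from \cite{ax19a}), with the contractions against $\Sigma(c)$ collapsing via the spectral decomposition and orthonormality to give the bias factor in (\ref{est.eigenvector}) and the variance (\ref{AVAR.eigenvector}). Your contraction computations are consistent with this (note the Hessian contraction carries a negative sign, which is what produces the ``$1+\cdots$'' multiplicative factor), so the argument is correct and essentially identical to the paper's.
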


\begin{remk}\label{remk.compare}
Our asymptotic theory for PCA is based on the general result of $\widetilde{S}(g)^n_t$. We compare $\widehat{S}(g)^n_t$ and $\widetilde{S}(g)^n_t$ below in view of the asymptotics and practicality of PCA.
\begin{itemize}
\item Convergence rates: $\widehat{S}(g)^n_t$ attains the optimal convergence rate $n^{1/4}$; the convergence rate of $\widetilde{S}(g)^n_t$ is $n^{1/4-\delta/2}$ which is less than $n^{1/5}$ according to (\ref{tuning.psd}).
\item Dealing with negative eigenvalue estimate(s): If $\widehat{c}^n_i$ is not positive semidefinite, we can project it to $\s^d_+$. This is equivalent to setting negative eigenvalue estimate(s) to zero. $\widetilde{S}(g)^n_t$ is positive semidefinite by definition and does not suffer this problem.
\item Bias correction and uncertainty quantification: For $\widehat{S}(g)^n_t$ we need to evaluate $\partial_{jk}g$ and $\partial^2_{jk,lm}g$ for $j,k,l,m=1,\cdots,d$ in order to compute bias correction $\widehat{B}(g)^n_i$ and asymptotic variance $\widehat{V}_\Xi(g)^n_t$; this can be quite difficult in applications and motivated \cite{llx19} to propose the multiscale jackknife method. For $\widetilde{S}(g)^n_t$, we can use estimates $\widetilde{\lambda}^{k,n}_i$ and $\widetilde{q}^{k,n}_i$ to compute $\widetilde{B}(g)^n_i$ and $\widetilde{V}_\Sigma(g)^n_t$. Evident from (\ref{est.eigenvalue.spectral}), (\ref{est.eigenvector}), (\ref{AVAR.eigenvalue}), (\ref{AVAR.eigenvector}), they are straightforward to compute without much computational or memory overhead.
\end{itemize}
\end{remk}

We apply Proposition \ref{CLT.eigenvalue} and Proposition \ref{CLT.eigenvector} to conduct PCA on TAQ data in Section \ref{sec:taq}. We choose PCA based on $\widetilde{S}(g)^n_t$ because it scales better for the large TAQ data.

\subsection{More examples}
\begin{eg}[Uncertainty quantification]
Asymptotic variances are often functionals of volatility. In the univariate setting, the so-called ``quarticity'' defined as $\int_0^tc_s^2\ds s$ appears in the asymptotic variances of many estimators of integrated volatility. The multivariate counterpart involves $\int_0^t c_s^{jl}c_s^{km}+c_s^{jm}c_s^{kl}\ds s$ in \cite{jr13} and $\Xi(c_s,\rr_s)$ in (\ref{AVAR}). For this reason, the volatility functional estimator effectively facilitates uncertainty quantification for various volatility-related estimators.
\end{eg}

\begin{eg}[Laplace transform]
\cite{tt12a} put forward an estimator of the realized Laplace transform of volatility defined as
	\[\int_0^t e^{iwc_s}\ds s\]
This transform can be viewed as the characteristic function of volatility under the occupation measure. By matching the the moments of realized Laplace transform with those induced by a model, we can estimate model parameter(s) or test the model. An open question noted by \cite{tt12a} is the estimation of realized Laplace transform using noisy data. By the nonparametric estimation of volatility path in the first stage and the Riemann summation of bias-corrected functional plug-ins in the second stage, this paper presents a rate-optimal solution to the open question.
\end{eg}

\begin{eg}[Generalized method of moments (GMM)]
\cite{lx16} proposed the generalized method of integrated moments for financial high-frequency data. In estimating an option pricing model, one observes the process $Z_t = (t, X_t, r_t, d_t)$ where $X_t$ is the price of the underlying observed without any noise, $r_t$ is the short-term interest rate, $d_t$ is the dividend yield. One model of the arbitrage-free option price under the risk-neutral probability measure is
\begin{equation*}
	\beta_t = f(Z_t,c_t;\theta^*)
\end{equation*}
where $f$ is deterministic, $\theta^*$ is the true model parameter. The observed option price is often modeled as
\begin{equation*}
	O_{i\Delta_n} = \beta_{i\Delta_n} + \epsilon_i
\end{equation*}
where $\epsilon_i$ is pricing error and $\E(\epsilon_i)=0$. Let $g(Z_t,c_t;\theta) = \E[O_t - f(Z_t,c_t;\theta)]$, then we have the following integrated moment condition:
\begin{equation*}
	\int_0^t g(Z_s,c_s;\theta^*)\,\ds s = 0
\end{equation*}
Utilizing noisy observations at higher frequencies, this paper provides a means to compute a bias-corrected sample moment function of GMM.
\end{eg}

\begin{eg}[Linear regression]
In the practice of linear factor models and financial hedging, one faces the tasks of computing the factor loadings and the hedge ratios. These tasks can be formulated as the estimation of the coefficient $\beta$ in the time-series linear regression model
\begin{equation*}
	Z_t^c = \beta^\T S_t^c + R_t
\end{equation*}
where
\begin{equation*}
\left\{\begin{array}{cl}
	S_t= & S_0 + \int_0^tb^S_u\ds u + \int_0^t\sigma^S_u\ds W^S_u + J^S_t\\
	Z_t= & Z_0 + \int_0^tb^Z_u\ds u + \beta^\T\int_0^t\sigma^S_u\ds W^S_u + \int_0^t\sigma^R_u\ds W^R_u + J^Z_t
\end{array}\right.
\end{equation*}
$\langle W^S,W^R\rangle=0$, $S_t\in\R^{d-1}$, $Z_t\in\R$, and $S^c$, $Z^c$ are the continuous parts of the It\^o semimartingales.

Let $X = (S^\T,Z)^\T$, we can write $X_t=X_0 + \int_0^tb_u\ds u + \int_0^t\sigma_u\ds W_u + J_t$ where $b=(b^{S,T}, b^Z)^\T$, $W=(W^{S,\T},W^R)^\T$, $J=(J^{S,\T},J^Z)^\T$ and
\begin{equation*}
	\sigma = \left[
	\begin{array}{cc}
		\sigma^S & 0 \\
		\beta^\T\sigma^S & \sigma^R
	\end{array}\right]
\end{equation*}
so
\begin{equation*}
	c = \sigma\sigma^\T = 
	\left[\begin{array}{cc}
		\sigma^S\sigma^{S,\T}         & \sigma^S\sigma^{S,\T}\beta \\
		\beta^\T\sigma^S\sigma^{S,\T} & \beta^\T\sigma^S\sigma^{S,\T}\beta + (\sigma^R)^2
	\end{array}\right]
	\coloneqq
	\left[\begin{array}{cc}
		c^{SS} & c^{SZ} \\
		c^{ZS} & c^{ZZ}
	\end{array}\right]
\end{equation*}
hence by letting $g(c)=c^{SS,-1}c^{SZ}$, we have $\beta=t^{-1}S(g)_t$. \cite{ltt17} proposed a method under the no-noise assumption. When the observations contain noise, the methodology of this paper can extend the method of \cite{ltt17} to wider applicability.
\end{eg}

\section{Empirical analysis of TAQ data}\label{sec:taq}
We analyze the transaction data of S\&P 100 constituents. Each stock is sampled every second during 09:35 - 15:55 EST, so we have 22,800 observations for each stock on each business day. We focus on data from 09/10/2003 to 12/31/2019. During this period there were 4106 business days in 852 weeks. By focusing on the time window 09:35 - 15:55 EST, we remove overnight returns which contain big jumps and returns incurred in the first and last five minutes of trading hours which are relatively volatile. 10 illiquid assets are excluded from our statistical analysis.

Instantaneous volatility matrix and its eigenvalues and eigenvectors were estimated for every business day. The realized eigenvalues and realized eigenvectors were computed weekly by aggregating the instantaneous eigenvalues and instantaneous eigenvectors.

The temporal evolution of four leading eigenvalue estimates along with confidence intervals are plotted in Figure \ref{fig.eigvals}. It shows $\lambda_q/\sum_j\lambda_j$'s which are the proportions of total variation that can be explained by the corresponding principal components. The first eigenvalues are conspicuously separated from the others and indicate the first principal component explains more than 60\% of the cross-section variation for the majority of time. Moreover, it is evident from this dataset that the first four leading eigenvalues were simple rather than repeated. The estimates of eigenvalues tell us that a low-dimensional model can largely explain the systematic comovement of these stocks.

\begin{figure}
	\centering
	\includegraphics[width=1.\textwidth]{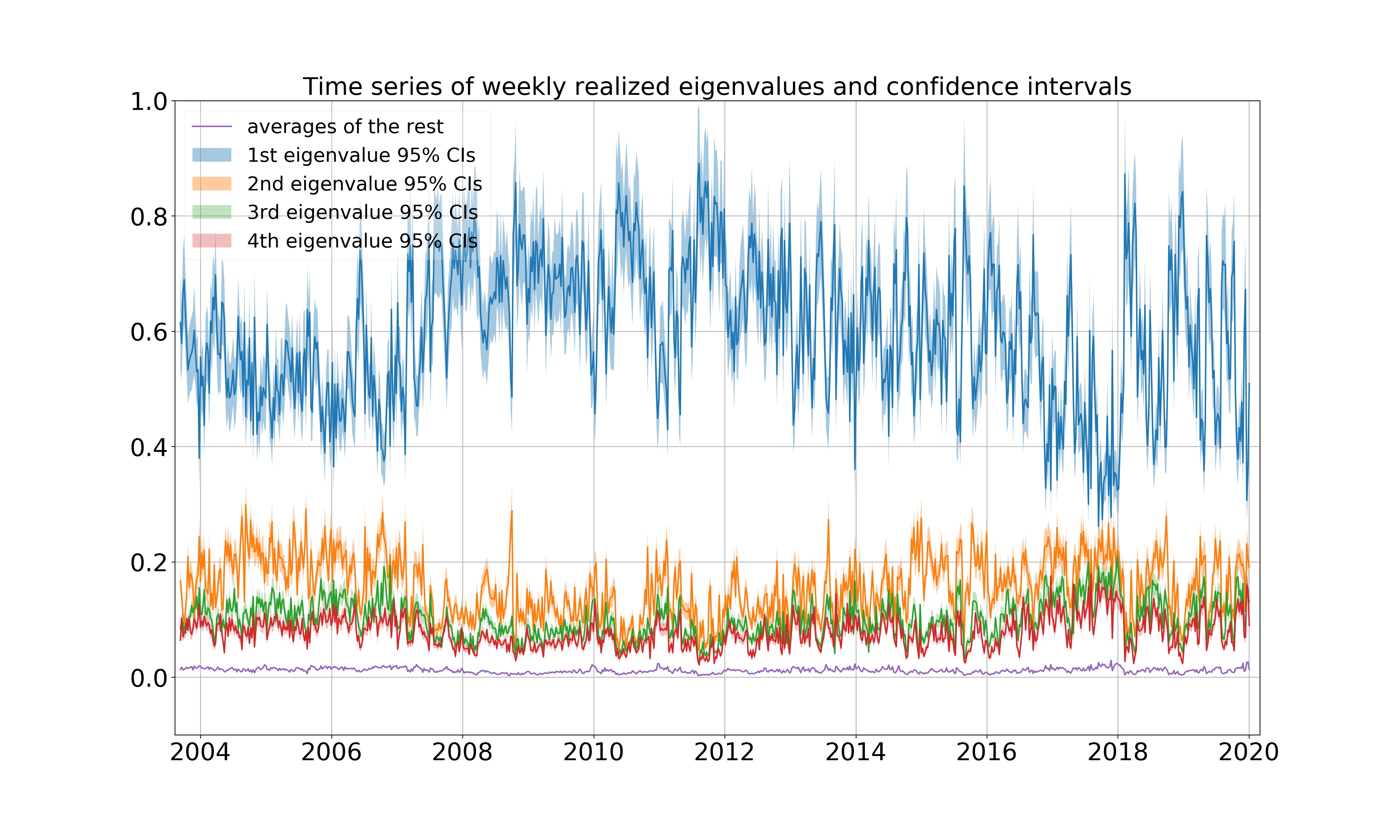}
	\caption{Weekly integrated eigenvalues and the confidence intervals}
	\label{fig.eigvals}
\end{figure}

It is reasonable to interpret the first principal component as a market indicator on how the economy performs in general. As all the major corporations and their stock values are moving with the overall economy, the eigenvector corresponding to the first principal component is expected to comprise only positive entries.

An interesting question is how the first principal component is compared with S\&P 100 index. \cite{cmz20} calculated the accumulated returns from 2007 to 2017 of the portfolio based on the eigenvector $q^1$ associated with the first principal component. \cite{cmz20} normalized the eigenvector so that $\sum_j^dq^{1,j}=1$ and used $q^{1,j}$'s as portfolio weights so that the resultant portfolio is self-financing. A curious finding of \cite{cmz20} is that this portfolio outperformed the market index hence beat the passive investment. Here the author would like to test this portfolio using pre-averaging-based PCA, over a longer time span.

The principal components are $\int_0^t q^r(c_{s-})\ds X_s,\, r=1,2,\cdots$, where $X_t^j = \log(S_t^j)$ and $S_t$ is the vector of asset prices at time $t$. However, one obviously can not transact in logarithmic currency, so a pending question is how to find a portfolio that is analogous to the first principal component and requires trading on the currency's original scale.

\cite{cmz20} proposed the following construction. Let 
$$R_{t,\Delta} = \big[(S^1_{t+\Delta}-S^1_t)/S^1_t,\cdots,(S^d_{t+\Delta}-S^d_t)/S^d_t\big]^\T$$ 
be the vector of returns in percentage. When $\|R_{t,\Delta}\|$ is small,
\[ \log\big(1 + q^{1,\T} R_{t,\Delta}\big) \approx q^{1,\T} R_{t,\Delta}, \]
and
\[ q^{1,j}\frac{S^j_{t+\Delta}-S^j_t}{S^j_t} \approx q^{1,j}\log\Big(1 + \frac{S^j_{t+\Delta}-S^j_t}{S^j_t}\Big) = q^{1,j}\log\Big(\frac{S^j_{t+\Delta}}{S^j_t}\Big), \]
so
\begin{equation*}
q^{1,\T} (X_{t+\Delta}-X_t) \approx \log\big(1 + q^{1,\T} R_{t,\Delta}\big).
\end{equation*}
Therefore, we consider $\sum_t\log(1+q^{1,\T} R_{t,\Delta})$ as an approximation to the first realized principal component. It is also the accumulated log return if there is no transaction cost and $\sum_j^d q_{1,j}=1$. In this case, $\exp\big\{\sum_t\log(1+q^{1,\T} R_{t,\Delta})\big\} - 1 = \prod_t(1+q^{1,\T} R_{t,\Delta}) - 1$
is the accumulated return of this self-financing portfolio in the absence of transaction cost.

Since the volatility matrix and the eigenvector $q^1$ are updated on every business day, the portfolio can be re-weighted daily. One caveat is that the numerical values of eigenvector estimates are relative less stable. Roughly speaking, the eigenvector viewed as a function of the corresponding positive semidefinite matrix is less smooth than the eigenvalue. Due to this reason, heuristically, an error in the covariance matrix estimation tends to result in a larger error in the eigenvector estimate, as compared to the eigenvalue counterpart. To stabilize eigenvector estimates, it is advisable to compute moving averages of the initial eigenvector estimates over a window of six business days. As a result we re-weight the portfolio on all the business days from 09/17/2019 to 12/31/2019.

Before the portfolio re-allocation according to the new weights on the next business day, one can either clear positions at the end of trading hours or let positions stand overnight. In the second scenario, the portfolio bears the overnight risk premium and absorbs overnight returns.

The accumulated log returns of the portfolio that mimics the first principal component are plotted in Figure \ref{fig.pcportfolio}. The portfolio returns were computed using daily prices of S\&P 100 constituents which can be obtained from the Compustat database. As we can see, return morphologies of the portfolio that mimics the first principal component resemble that of S\&P 100 index. If the portfolio was cleared at the end of trading hours, it slightly underperformed the index. Interestingly enough, if the portfolio that stood overnight, it manifestly outperformed the index. It is evident that this portfolio and the factor corresponding to the first principal component earned a significant risk premium during the overnight periods, and implies a difference in the intraday and overnight risk-reward patterns. This finding is consistent with the empirical analysis of \cite{cmz20}, though we use a method that is more accurate and analyze data over a longer time span.

\begin{figure}
	\centering
	\includegraphics[width=.9\textwidth]{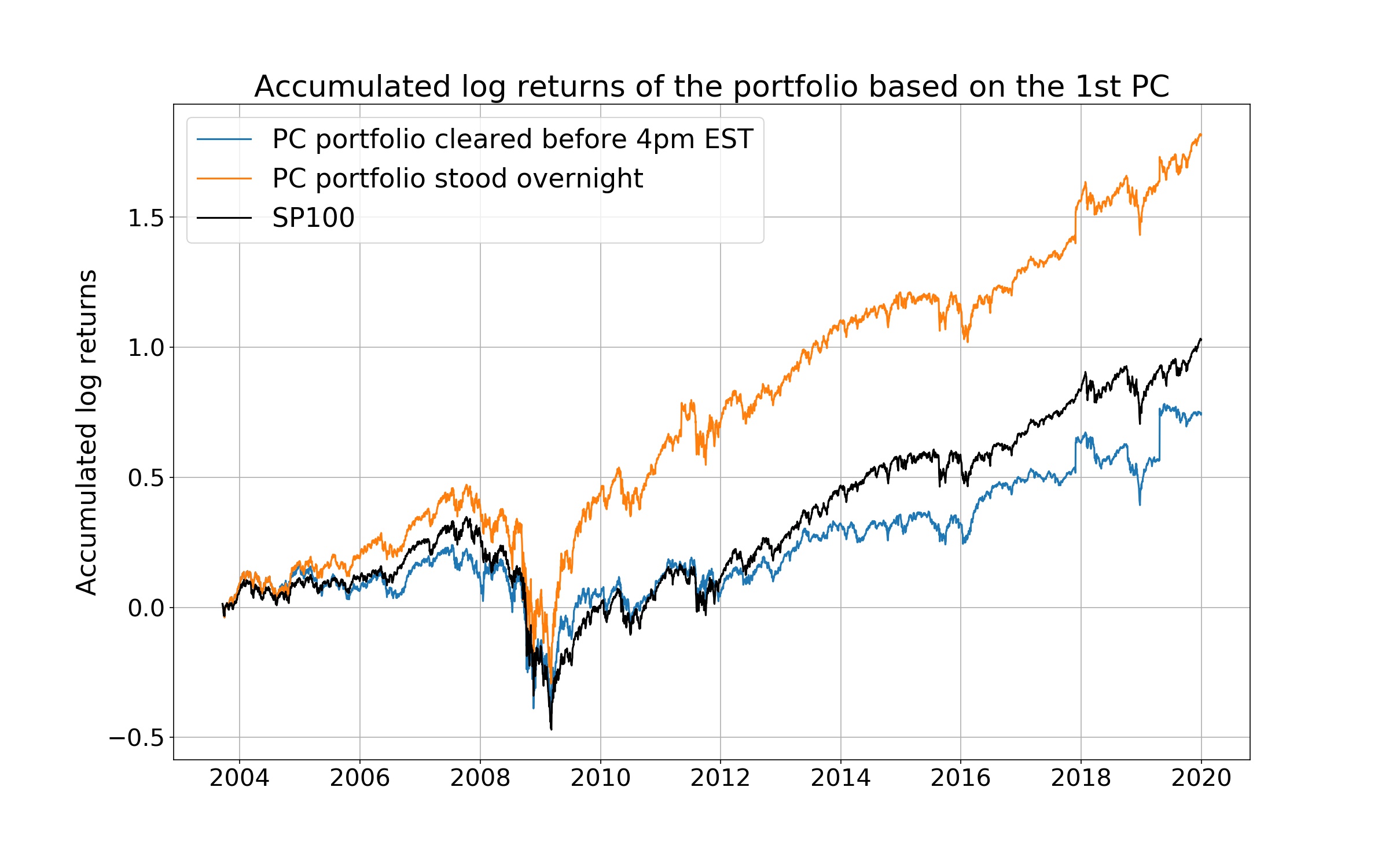}
	\caption{Accumulated log returns of the portfolio based on the first principal component} 
	\label{fig.pcportfolio}
\end{figure}


\section{Conclusion}\label{sec:conclude}
In this paper, we proposed statistical methods for general nonlinear functionals of volatility matrix, and developed statistical theory to guide bias correction, quantify statistical uncertainty and perform statistical inference. Because the methods can utilize data that contains microstructure noise and jumps, and because volatility matrix functionals form a unifying framework that accommodates many applications, this paper provides a viable answer to the open problem of noise and extends numerous statistical methods to much wider applicability. This brings significant statistical gains, such as principled approaches to take advantage of much bigger datasets and improved statistical accuracy.

The tuning choice is an essential component in the inference of volatility matrix functionals, and it has been the main motivation of several recent works under the no-noise assumption. The importance of tuning must not be neglected neither in the noisy setting. We derived tuning ranges for pre-averaging and jump truncation. We would like to stress that the tuning parameters need to be quite different from those under the no-noise assumption.

Asymptotic analysis shows there are non-negligible higher order biases beyond linear terms. We can effectively debias our estimation by quadratic approximation. Given the tuning choice and bias correction, we established consistency, convergence rates, and asymptotic normality by functional stable convergence of processes. We derived stable CLT for both the rate-optimal method and the method that uses positive semidefinite plug-ins. The asymptotic covariance matrices can be consistently estimated, which paves the way for feasible statistical inference.

Based on the general results, we then put forward the method and theory of PCA to analyze noisy high-frequency financial data. Particularly we gave a demonstration on the TAQ dataset and analyzed the temporal structures of eigenvalues and the first principal component of S\&P 100 stocks. Our PCA method is not only statistically sound in the presence of noise and jump, but also computationally practical for large datasets. PCA is one example among many possible applications of this paper. We demonstrate some success in pushing the inferential framework for general functionals into a new frontier of potentials and possibilities, and lending the effort to enable applications to employ noisy high-frequency data where exciting new discoveries await.

\bigskip

\section*{Acknowledgements}
The author is indebted to Jean Jacod and Per A. Mykland for encouragement, detailed discussion and comments. This work also benefits from discussions with Yoann Potiron, Viktor Todorov, Dacheng Xiu.\\
The author was supported by NSF grant DMS 17-13129 and 2019 Stevanovich Fellowship from the University of Chicago.
\bigskip

\appendix
\section{Assumptions}\label{sec:assump}
We present formally the model specification and assumptions in this appendix. 

The pure jump process $J$ in $X$ is modeled by
\begin{equation}\label{J}
J_t = \int_{(0,t]\times E}\delta(s,x)\, \mathfrak{p}(\mathrm{d}s,\ds x)
\end{equation}
where $\delta$ is a $\mathbb{R}^d$-valued predictable function on $\R^+\times E$, $E$ is a Polish space, $\mathfrak{p}$ is a Poisson random measure with compensator $\mathfrak{q}(\ds u, \ds x)=\ds u\otimes\lambda(\ds x)$, $\lambda$ is a $\sigma$-finite measure on $E$ and has no atom.

The stochastic volatility matrix is positive semidefinite almost surely and of the form
\begin{equation}\label{c}
c_t=c_0+\int_0^tb^{(c)}_s\ds s+\int_0^t\sigma^{(c)}_s\ds W_s + J^{(c)}_t
\end{equation}
where $b^{(c)} \in \R^{d\times d}$, $\sigma^{(c)} \in \R^{d \times d \times d'}$, $b^{(c)}$ and $\sigma^{(c)}$ are adapted and c\`adl\`ag; $J^{(c)}$ is the volatility jump, which is modeled by
\begin{multline}\label{Jc}
	J^{(c)}_t = \delta^{(c)}(s,x)\1{\|\delta^{(c)}(s,x)\|\le1}\, (\mathfrak{p}-\mathfrak{q})(\mathrm{d}s,\ds x) \\
	+ \int_{(0,t]\times E}\delta^{(c)}(s,x)\1{\|\delta^{(c)}(s,x)\|>1}\, \mathfrak{p}(\mathrm{d}s,\ds x)
\end{multline}
where $\delta^{(c)}$ is a $\mathbb{R}^{d\times d}$-valued predictable function on $\R^+\times E$.

To guarantee that $c_t$ is positive semidefinite, one needs to impose additional parametric or semiparametric restrictions in concrete applications. The results in this paper hold true for any process $c$ that is positive semidefinite and satisfies (\ref{c}).

The regularization on drift, diffusion coefficient and jump activity is given below.
\begin{assump}{A-$\nu$}[regularity]\label{A-v} 
	The drift process $b$ has $\frac{1}{2}$-H\"older sample path in the mean-square sense, i.e., $\forall t,s\ge0$,
	\begin{equation*}
	E\Big(\sup_{u\in[0,s]}\|b_{t+u}-b_t\|\big|\F^{(0)}_t\Big)\le Ks^{1/2},\,a.s.;
	\end{equation*}
	$c$ is of the form (\ref{c}), there is a sequence of triples $(\tau_m,\mathcal{S}_m,\Gamma_m)$, where $\tau_m$ is a stopping time and $\tau_m\nearrow\infty$;  $\s_m\subset\mathcal{S}^+_d$ is convex, compact such that 
	\begin{equation*}
	t\in[0,\tau_m] \Rightarrow c_t\in\s_m;
	\end{equation*} 
	$\Gamma_m$ is a sequence of bounded $\lambda$-integrable functions on $E$, such that
	\begin{equation*}
	t\in[0,\tau_m] \Longrightarrow \left\{
	\begin{array}{l}
	\|b_t\| + \|\sigma_t\| + \|b^{(c)}_t\| + \|\sigma^{(c)}_t\| \le m\\
	\|\delta(t,x)\|^\nu\wedge1\le\Gamma_m(x),\,\,\nu\in[0,1)\\
	\|\delta^{(c)}(t,x)\|^2\wedge1\le\Gamma_m(x)
	\end{array}\right.
	\end{equation*}
\end{assump}

Let $\big(\Omega^{(0)},\F^{(0)},(\F^{(0)}_t),\mathbb{P}^{(0)}\big)$ be a filtered probability space with respect to which $X$ and $c$ are $(\F^{(0)}_t)$-adapted; let $\big(\Omega^{(1)},\F^{(1)},(\F^{(1)}_t),\mathbb{P}^{(1)}\big)$ be another filtered probability space accommodating $Y$; $\forall t\ge0$, $\forall A\in\F^{(0)}$, let $Q_t(A,\cdot)$ be a conditional probability measure on $\left(\Omega^{(1)},\F^{(1)}\right)$. 

All the stochastic dynamics above can be described on the filtered extension $\left(\Omega,\F,(\mathcal{F}_t),\mathbb{P}\right)$, where
\begin{equation}\label{prob.sp}
\left\{\begin{array}{l}
	\Omega=\Omega^{(0)}\times\Omega^{(1)}\\
	\F=\F^{(0)}\otimes\F^{(1)}\\
	\F_t=\bigcap_{s>t}\big(\F^{(0)}_s\otimes\F_s^{(1)}\big),\hspace{5mm}  \widetilde{\F}_t=\bigcap_{s>t}\big(\F^{(0)}\otimes\F_s^{(1)}\big)\\
	\mathbb{P}\left(A\times \mathrm{d}\omega\right)=\mathbb{P}^{(0)}(A)\cdot\otimes_{t\ge0}Q_t(A,\ds\omega),\,\forall A\in\F^{(0)}
\end{array}\right.
\end{equation}

Let
\begin{equation*}
	\epsilon_t(\omega) = Y_t(\omega) - X_t
\end{equation*}
and the conditional covariance process of noise can be written as
\begin{equation}\label{def.r}
	\gamma_t = \int_{\Omega^{(1)}}\epsilon_t(\omega)\,\epsilon_t(\omega)^\T\,Q_t(\cdot,\mathrm{d}\omega)
\end{equation}
\begin{assump}{A-$\gamma$}[noise]\label{A-r}
$\forall t\in\mathbb{R}^+$,
	\[\int_{\Omega^{(1)}}\epsilon_t(\omega)\,Q_t(\cdot,\ds\omega) = 0\]
$\forall t\ne s$, $\forall A\in\F^{(0)}_{s\wedge t}$
	\[\int_{\Omega^{(1)}\times\Omega^{(1)}}\epsilon_t(\omega)\,\epsilon_s(\omega)^\T\,Q_t(A,\ds\omega)\,Q_s(A,\ds\omega)=0\]
furthermore,
\begin{equation*}
	\gamma_t = \gamma_0+\int_0^tb^{(r)}_s\ds s + \int_0^t\sigma^{(r)}_s\ds W_s + \int_{(0,t]\times E}\delta^{(r)}(s,x)\,\mathfrak{p}(\mathrm{d}s,\ds x)
\end{equation*}
for the same $\tau_m$, $\Gamma_m$ in assumption \ref{A-v} and $\forall q\le8$
\begin{equation*}
	t\in[0,\tau_m] \Longrightarrow \left\{
	\begin{array}{l}
		\|b^{(r)}_t\| + \|\sigma^{(r)}_t\| + \E_{\Omega^{(1)}}\|\epsilon_t(\omega)\|^q \le m\\
		\|\delta^{(r)}(t,x)\|^2\wedge1\le\Gamma_m(x)
	\end{array}\right.
\end{equation*}
\end{assump}

\section{Theoretical preliminaries for proofs}\label{apdx:prelim}
We use the notation and model described in Section 2 and Appendix A in the main article. In the sequel, the constant $K$ changes across lines but remains finite; $K_q$ is a constant depending on $q$; $a_n\asymp b_n$ means both $a_n/b_n$ and $b_n/a_n$ are bounded for large $n$; $\E(\cdot)$ denotes the expectation operator on $(\Omega^{(0)},\F^{(0)})$ or $(\Omega,\F)$; $E(\cdot|\mathcal{H})$ denotes the conditional expectation operator, with $\mathcal{H}$ being $\F^{(0)}_t$, $\F^{(1)}_t$, $\F_t$, $\widetilde{\F}_t$; $E^n_i(\cdot)$ denotes $E(\cdot|\F^n_i)$. Six useful results are stated below.

I. By a localization argument from section 4.4.1 in \cite{jp12}, without loss of generality we can assume $\exists$ a constant K, a bounded $\lambda$-integrable function $\Gamma$ on $E$, a convex compact subset $\mathcal{S}\in\mathcal{S}^+_d$ and $\epsilon>0$, $g\in\mathcal{C}^3(\mathcal{S}^\epsilon)$ where $\mathcal{S}^\epsilon$ denotes the $\epsilon$-enlargement of $\mathcal{S}$ (see (3.8) in the main article), such that
\begin{equation}\label{SA-v}
\left\{\begin{array}{l}
	\|b\| + \|\sigma\| + \|b^{(c)}\| + \|\sigma^{(c)}\| + \|b^{(r)}_t\| + \|\sigma^{(r)}_t\| + \E\|\epsilon_t\|^q \le K,\;\forall q\le8\\
	\|\delta(t,x)\|^\nu\wedge1\le\Gamma(x),\;\nu\in[0,1)\\
	\|\delta^{(c)}(t,x)\|^2\wedge1+\|\delta^{(r)}(t,x)\|^2\wedge1\le\Gamma(x)\\
	c\in\mathcal{S}
\end{array}\right.
\end{equation}

II. Define a continuous It\^o semimartingale with parameters identical to those of (2.1) in the main article
\begin{equation*}
	X'_t = X_0 + \int_0^tb_s\ds s + \int_0^t\sigma_s\ds W_s
\end{equation*}
Let $Y^*= Y - X + X'$. Based on (3.2) in the main article, define
\begin{eqnarray*}
	\widehat{c}^{*n}_i &=& \frac{1}{(k_n-l_n)\Delta_n}\sum_{h=1}^{k_n-l_n+1}\left(\widebar{Y}^{*n}_{i+h}\cdot\widebar{Y}^{*n,\T}_{i+h} - \widehat{Y}^{*n}_{i+h}\right) \\
	\widetilde{c}^{*n}_i &=& \frac{1}{(k_n-l_n)\Delta_n}\sum_{h=1}^{k_n-l_n+1} \widebar{Y}^{*n}_{i+h}\cdot\widebar{Y}^{*n,\T}_{i+h}
\end{eqnarray*}
In the upcoming derivation, $\|\widehat{c}^n_i-\widehat{c}^{*n}_i\|$ is tightly bounded provided $\nu_n$ is properly chosen. The focus then will be shifted from $\widehat{c}^n_i$ to $\widehat{c}^{*n}_i$. Ditto for $\widetilde{c}^n_i$ and $\widetilde{c}^{*n}_i$. The volatility ``estimators'' calculated on continuous sample paths are more tractable.

III. By estimates of It\^o semimartingale increments, $\forall$ finite stopping time $\tau$,
\begin{equation}\label{classic}
\left\{\begin{array}{ll}
	\big\|E\big(X'_{\tau+s}-X'_{\tau}|\F^{(0)}_\tau\big)\big\| + \\
	\hspace{4mm} \big\|E\big(c_{\tau+s}-c_{\tau}|\F^{(0)}_\tau\big)\big\| + \big\|E\big(\gamma_{\tau+s}-\gamma_\tau|\F^{(0)}_\tau\big)\big\|& \le Ks\\
	E\left(\sup_{u\in[0,s]}\left\|X'_{\tau+u}-X'_\tau\right\|^q|\F^{(0)}_\tau\right)& \le Ks^{q/2}\\
	E\left(\sup_{u\in[0,s]}\left\|c_{\tau+u}-c_\tau\right\|^q + \left\|\gamma_{\tau+u}-\gamma_\tau\right\|^q|\F^{(0)}_\tau\right)& \le Ks^{(q/2)\wedge1}
\end{array}\right.
\end{equation}
by Lemma 2.1.7, Corollary 2.1.9 in \cite{jp12},
\begin{equation}\label{jump.bounds}
\left\{\begin{array}{l}
	E\left(\sup_{u\in[0,s]}\|J_{\tau+u}-J_\tau\|^q|\F^{(0)}_\tau\right)\le K_q\,sE\big[\widehat{\delta}(q)_{\tau,s}|\F^{(0)}_\tau\big]\\
	E\left[\sup_{u\in[0,s\wedge1]}\left(\frac{\|J_{\tau+u}-J_\tau\|}{s^w}\wedge1\right)^q|\F^{(0)}_\tau\right]\le
	K\,s^{(1-w\nu)(q/\nu\wedge1)}a(s)
\end{array}\right.
\end{equation}
where $\widehat{\delta}(q)_{t,s}\equiv s^{-1}\int_t^{t+s}\int_E\|\delta(u,x)\|^q\,\lambda(\mathrm{d}x)\ds u$ and $a(s)\to0$ as $s\to0$.

IV. Let $\varphi_n(t)=\sum_{h=1}^{l_n-1}\varphi^n_h\mathds{1}_{((h-1)\Delta_n,h\Delta_n]}(t)$. For a generic process $U$, define
\begin{equation}\label{def.Un}
	U^n_{t,s} = \int_t^{t+s}\varphi_n(u-t)\ds U_u
\end{equation}
This quantity is useful in analyzing $\widebar{U}^n_i$, with $U$ being $X$ or $Y$.

V. For $p\in\mathbb{N}^+$, $l,m=0,1$, by (4.1) in the main article and Riemann summation,
\begin{equation}\label{sum.phi.quadratic}
\sum_{h=i}^{i+pl_n-2}\,\sum_{h'=h+1}^{i+pl_n-1}\phi_l\Big(\frac{h'-h}{l_n}\Big)\phi_m\Big(\frac{h'-h}{l_n}\Big) = l_n^2\left(p\Phi_{lm}-\Psi_{lm}\right) + O(pl_n)
\end{equation}

VI. By Jensen's inequality and Doob's maximal inequality, we have the following lemma:
\begin{lem}\label{Doob.max.ineq}
	Let $Z_i,i=1,\cdots,M$ be random variables, $\mathcal{H}_i=\sigma(Z_1,\cdots,Z_i)$ be the $\sigma$-algebra generated by $Z_1,\cdots,Z_i$, then
	\begin{equation*}
	\E\left(\sup_{m=1,\cdots,M}\bigg\|\sum_{i=1}^m\left[Z_i - E\left(Z_i|\mathcal{H}_i\right)\right]\bigg\|\right)
	\le K\left(\sum_{i=1}^{M}\E\left(\|Z_i\|^2\right)\right)^{1/2}
	\end{equation*}
\end{lem}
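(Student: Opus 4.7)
The plan is to read the statement with the standard martingale-difference convention, so the conditional expectation on the right should be $E(Z_i\mid\mathcal{H}_{i-1})$ (with $\mathcal{H}_0$ trivial); under the convention as literally written every term would be zero, so this is a harmless reindexing. With this reading, $M_m\equiv\sum_{i=1}^m[Z_i-E(Z_i\mid\mathcal{H}_{i-1})]$ is a martingale with respect to the filtration $(\mathcal{H}_m)_{m\le M}$, and the result is a routine application of Jensen's inequality followed by Doob's $L^2$ maximal inequality and orthogonality of martingale differences.

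Concretely, I would proceed in three short steps. First, apply Jensen's inequality to the concave map $x\mapsto x^{1/2}$ to obtain
\begin{equation*}
\E\Big(\sup_{m\le M}\|M_m\|\Big)\;\le\;\Big[\E\Big(\sup_{m\le M}\|M_m\|^2\Big)\Big]^{1/2}.
\end{equation*}
Second, apply Doob's $L^2$ maximal inequality to the martingale $(M_m)$ to get $\E(\sup_{m\le M}\|M_m\|^2)\le 4\,\E(\|M_M\|^2)$; since the $Z_i$'s take values in a fixed finite-dimensional space and $\|\cdot\|$ is equivalent to the Euclidean norm, one can either apply Doob coordinatewise or invoke the vector-valued version directly, absorbing the finite norm-equivalence constant into $K$. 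Third, use the conditional-expectation orthogonality $E\big[(Z_i-E(Z_i\mid\mathcal{H}_{i-1}))\cdot(Z_j-E(Z_j\mid\mathcal{H}_{j-1}))^{\T}\big]=0$ for $i\ne j$ to get
\begin{equation*}
\E(\|M_M\|^2)\;=\;\sum_{i=1}^M\E\big(\|Z_i-E(Z_i\mid\mathcal{H}_{i-1})\|^2\big)\;\le\;\sum_{i=1}^M\E(\|Z_i\|^2),
\end{equation*}
where the last inequality uses the fact that conditional expectation is the $L^2$ projection onto $\mathcal{H}_{i-1}$-measurable random variables, hence a contraction. Chaining the three bounds gives the claim with an absolute constant $K$.

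There is no real technical obstacle; the only thing to be careful about is the passage from the scalar to the vector-valued setting, where one has to commit to a specific norm and either invoke finite-dimensional norm equivalence or prove the Doob bound componentwise. Integrability of the $Z_i$'s (so that the conditional expectations and $L^2$ bounds make sense) is implicit in assuming the right-hand side is finite; otherwise the inequality holds trivially.
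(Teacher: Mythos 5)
Your proof is correct and follows essentially the same route the paper indicates for this lemma (Jensen's inequality, then Doob's $L^2$ maximal inequality applied coordinatewise or to the submartingale $\|M_m\|$, then orthogonality of martingale differences together with the $L^2$-contraction property of conditional expectation), the paper itself giving no further detail. Your reindexing remark is also apt: with $\mathcal{H}_i=\sigma(Z_1,\cdots,Z_i)$ the displayed terms vanish literally, and the intended (and actually used) reading is conditioning on the past, i.e.\ $E(Z_i|\mathcal{H}_{i-1})$, exactly as you treat it.
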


\section{Derivation for Theorem 1}\label{apdx:thm1}
\subsection{Properties of spot estimator: I. jumps}\label{apdx:jmp}
By Assumption A-$\nu$, A-$\rr$, (\ref{SA-v}), (\ref{classic}), (\ref{jump.bounds}), and $\widebar{J}^n_i = \psi_n^{-1/2}\widebar{J}^n_{i\Delta_n,(l_n-1)\Delta_n}$ from (\ref{def.Un}), we know that under the tunning (\ref{tuning})
\begin{equation}\label{est.Y.bars.hats}
\left\{\begin{array}{lcl}
	E^n_i\left(\big\|\widebar{Y}^{*n}_i\big\|^q\right) &\le& K_q\Delta_n^{q/2},\; \forall q\le8\\
	E^n_i\left(\big\|\widebar{Y}^n_i\big\|^q\right) &\le& K_q\Delta_n^{(q/2)\wedge(q/4+1/2)},\; \forall q\le8\\
	E^n_i\left(\big\|\widehat{Y}^{*n}_i\big\|^q \vee \big\|\widehat{Y}^n_i\big\|^q\right) &\le& K_q\Delta_n^q,\; \forall q\le4\\
	E^n_i\Big[\Big(\frac{\|\widebar{J}^n_i\|}{\Delta_n^w}\wedge1\Big)^q\Big] &\le& K_q\Delta_n^{[1/2-(w-1/4)\nu]\times[1\wedge(q/\nu)]}\,a_n,\; \forall q\le8
\end{array}\right.
\end{equation}
for some $a_n\to0$. We can write
	\[\left\|\big(\widebar{Y}^n_i\cdot\widebar{Y}^{n,\mathrm{T}}_i\1{\|\widebar{Y}^n_i\|\le\nu_n} - \widehat{Y}^n_i\big) - \big(\widebar{Y}^{*n}_i\cdot\widebar{Y}^{*n,\mathrm{T}}_i - \widehat{Y}^{*n}_i\big)\right\| \le \sum_{r=1}^3\eta^{n,r}_i\]
where
\begin{eqnarray}
	\eta^{n,1}_i &=& \Big\| \widebar{Y}^n_i\cdot\widebar{Y}^{n,\mathrm{T}}_i\1{\|\widebar{Y}^n_i\|\le\nu_n} - \widebar{Y}^{*n}_i\cdot\widebar{Y}^{*n,\mathrm{T}}_i\1{\|\widebar{Y}^{*n}_i\|\le\nu_n} \Big\| \label{def.etas}\\
	\eta^{n,2}_i &=& \big\|\widehat{Y}^n_i - \widehat{Y}^{*n}_i\big\| \nonumber\\
	\eta^{n,3}_i &=& \big\|\widebar{Y}^{*n}_i\big\|^2 \1{\|\widebar{Y}^{*n}_i\|>\nu_n} \nonumber
\end{eqnarray}

Let $u_n=\nu_n/\Delta_n^{1/2}$, $Z^n_i=\|\widebar{Y}^{*n}_i\|/\Delta_n^{1/2}$, $Q^n_i=(\|\widebar{J}^n_i\|/\Delta_n^{1/2})\wedge1$, $V^n_i=(\|\widebar{J}^n_i\|/\Delta_n^{\rho})\wedge1$, we have
	\[\Delta_n\eta^{n,1}_i \le u_n^{-2/(1-2\rho)}(Z^n_i)^{2+2/(1-2\rho)} + (1+Z^n_i) \left[Q^n_i + u_n^2 (V^n_i)^2\right]\]
By successive conditioning and (\ref{jump.bounds}), there is a sequence $a_n\to0$ such that
\begin{equation*}
	E^n_i\big[(\eta^{n,1}_i)^q\big] \le K_q\,\Delta_n^{2\rho q + 1/2-(\rho-1/4)\nu}a_n
\end{equation*}
Analyzing $\eta^{n,2}_i$ with (\ref{classic}), (\ref{jump.bounds}), analyzing $\eta^{n,3}_i$ with Cauchy-Schwarz inequality, Markov's inequality, (\ref{est.Y.bars.hats}), we can get the following lemma:

\begin{lem}\label{est.chat-chat*}
	Assume (\ref{SA-v}), (\ref{tuning}), Assumption A-$\nu$ and A-$\rr$, then $\exists\,a_n\to0$ such that $\forall q\le4$,
	\begin{equation*}
		E^n_i\left(\|\widehat{c}^n_i-\widehat{c}^{*n}_i\|^q\right) \le K_q\left(a_n\Delta_n^{1/2-(\rho-1/4)\nu-(1-2\rho)q} + \Delta_n^{1/2} \right)
	\end{equation*}
\end{lem}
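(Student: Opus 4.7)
The strategy is to combine the pointwise decomposition already carved out in the preceding paragraphs with uniform moment bounds on each of the three error pieces $\eta^{n,1}_j,\eta^{n,2}_j,\eta^{n,3}_j$ in (\ref{def.etas}). The starting point is the identity
\[
\widehat{c}^n_i-\widehat{c}^{*n}_i=\frac{1}{(k_n-l_n)\Delta_n}\sum_{h=1}^{k_n-l_n+1}D^n_{i+h},
\qquad \|D^n_j\|\le \eta^{n,1}_j+\eta^{n,2}_j+\eta^{n,3}_j.
\]
Applying Jensen's inequality to the normalized average and taking conditional expectation, the problem reduces to proving uniform-in-$j$ bounds of the form $E^n_i[(\eta^{n,r}_j)^q]\le K_q\,a_n\Delta_n^{2\rho q+1/2-(\rho-1/4)\nu}$ for $r=1,3$ and $E^n_i[(\eta^{n,2}_j)^q]\le K_q\Delta_n^{q+1/2}$, after which division by $(k_n\Delta_n)^q$ and repackaging gives exactly the two exponents appearing in the statement.

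For $\eta^{n,1}_j$ I would justify the pointwise inequality
\[
\Delta_n\,\eta^{n,1}_j\le u_n^{-2/(1-2\rho)}(Z^n_j)^{2+2/(1-2\rho)}+(1+Z^n_j)\bigl[Q^n_j+u_n^2(V^n_j)^2\bigr]
\]
by casework on whether each of the two truncation indicators $\mathbf{1}_{\{\|\widebar{Y}^n_j\|\le\nu_n\}}$ and $\mathbf{1}_{\{\|\widebar{Y}^{*n}_j\|\le\nu_n\}}$ is on. When both are on, $\widebar{Y}^n_j-\widebar{Y}^{*n}_j=\widebar{J}^n_j$ is bounded and the difference of outer products expands into terms controlled by $\|\widebar{J}^n_j\|(\|\widebar{Y}^n_j\|+\|\widebar{Y}^{*n}_j\|)$, giving the $(1+Z^n_j)Q^n_j$ contribution; when one indicator is off, $\|\widebar{Y}^{*n}_j\|$ must be at least $\nu_n-\|\widebar{J}^n_j\|$ (or vice versa), so either $Z^n_j\gtrsim u_n$ (yielding the Markov-type term $u_n^{-2/(1-2\rho)}(Z^n_j)^{2+2/(1-2\rho)}$) or $\|\widebar{J}^n_j\|\gtrsim\nu_n$, i.e.\ $V^n_j=1$. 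Taking $q$-th moments then uses Hölder and the third and fourth lines of (\ref{est.Y.bars.hats}), where the jump bound with $w=\rho$ supplies both the small factor $a_n\to 0$ and the exponent $1/2-(\rho-1/4)\nu$; choosing the conjugate exponents to balance the jump-activity factor and the purely diffusive one produces the claimed rate.

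For $\eta^{n,2}_j$ the contribution $\widehat{Y}^n_j-\widehat{Y}^{*n}_j$ is a quadratic form whose cross and pure jump parts involve increments $\Delta^n_h J$. Using (\ref{classic}), (\ref{jump.bounds}) and the fact that each $(\varphi^n_{h+1}-\varphi^n_h)^2=O(l_n^{-2})$ while the sum has $l_n$ terms, one obtains $E^n_i[(\eta^{n,2}_j)^q]=O(\Delta_n^{q+1/2})$; after the factor $\Delta_n^{-q}$ this generates the residual $\Delta_n^{1/2}$ term in the lemma. For $\eta^{n,3}_j$, Cauchy--Schwarz gives
\[
E^n_i[(\eta^{n,3}_j)^q]\le \bigl(E^n_i[\|\widebar{Y}^{*n}_j\|^{4q}]\bigr)^{1/2}\bigl(\mathbb{P}(\|\widebar{Y}^{*n}_j\|>\nu_n\mid\mathcal{F}^n_i)\bigr)^{1/2},
\]
and the tail probability is driven to $a_n\Delta_n^{\text{large}}$ by a high-moment Markov inequality combined with the first line of (\ref{est.Y.bars.hats}); matching the exponent to the one coming from $\eta^{n,1}_j$ is straightforward since $\nu_n\gg\Delta_n^{1/2}$ under (\ref{tuning}).

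\textbf{Main obstacle.} The delicate piece is $\eta^{n,1}_j$: the casework has to interleave the diffusive Markov bound (where the exponent $2+2/(1-2\rho)$ is engineered so that $u_n^{-2/(1-2\rho)}\cdot E[(Z^n_j)^{2+2/(1-2\rho)}]$ remains integrable for $\rho<1/2$) with the jump-activity bound from (\ref{jump.bounds}) (which gives the $(\rho-1/4)\nu$ loss). Balancing the two so that the final exponent on $\Delta_n$ is exactly $2\rho q+1/2-(\rho-1/4)\nu$ and so that a nontrivial $a_n\to 0$ survives is the place where the tuning restriction $\rho\ge\frac14+\frac{1-\kappa}{2-\nu}$ from (\ref{tuning}) enters and where the argument must be carried out most carefully.
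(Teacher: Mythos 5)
Your plan follows the paper's own proof essentially verbatim: the same decomposition into $\eta^{n,1}_i,\eta^{n,2}_i,\eta^{n,3}_i$ from (\ref{def.etas}), the same pointwise bound $\Delta_n\eta^{n,1}_i \le u_n^{-2/(1-2\rho)}(Z^n_i)^{2+2/(1-2\rho)}+(1+Z^n_i)\big[Q^n_i+u_n^2(V^n_i)^2\big]$ handled by successive conditioning and (\ref{jump.bounds}) with $w=\rho$, the same treatment of $\eta^{n,2}_i$ via (\ref{classic}), (\ref{jump.bounds}) and of $\eta^{n,3}_i$ via Cauchy--Schwarz/Markov with (\ref{est.Y.bars.hats}), followed by averaging over the window. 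One bookkeeping correction: after Jensen the effective normalization is $\Delta_n^{-q}$, not $(k_n\Delta_n)^{-q}$, since the sum has $k_n-l_n+1$ terms which cancel the $k_n-l_n$ in the denominator; with $\Delta_n^{-q}$ your per-term bound $K_q\,a_n\Delta_n^{2\rho q+1/2-(\rho-1/4)\nu}$ yields exactly the exponent $1/2-(\rho-1/4)\nu-(1-2\rho)q$ in the statement, whereas dividing by $(k_n\Delta_n)^q$ would not.
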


\subsection{Properties of spot estimator: II. continuous part}\label{apdx:contin}
\subsubsection{variables}
``If there is a rifle handing on the wall in act one, it must be fired in the next act. Otherwise it has no business being there'', said the Russian playwright Anton Chekhov. Define
\begin{eqnarray*}
	\widebar{C}^n_i&=&\frac{1}{\psi_n}\sum_{h=1}^{l_n-1}(\varphi^n_h)^2\Delta^n_{i+h}C,\,\, C_t=\int_0^tc_s\ds s\\
	D^n_i&=&\widebar{C}^n_i-c^n_i\Delta_n\\
	\Gamma^n_h&=&\Gamma^n_{h,h},\,\,\Gamma^n_{h,h'}=\frac{1}{\psi_n}\sum_{v=h\vee h'}^{h\wedge h'+l_n-1}(\varphi^n_{v-h+1}-\varphi^n_{n-h})(\varphi^n_{v-h'+1}-\varphi^n_{v-h'})\gamma^n_v\\
	R^n_i&=&\widehat{Y}^{*n}_i - \Gamma^n_i\\
	\zeta^n_i&=&\widebar{Y}^{*n}_i\cdot\widebar{Y}^{*n,\T}_i-\widebar{C}^n_i-\Gamma^n_i
\end{eqnarray*}
given $p\in \mathbb{N}^+$, define
\begin{eqnarray*}
	\zeta(W,p)^n_i &=& \sum_{h=i}^{i+pl_n-1}\big[(\sigma^n_i\widebar{W}^n_h)\cdot(\sigma^n_i\widebar{W}^n_h)^\T-\widebar{C}^n_h\big]\\
	\zeta(X,p)^n_i &=& \sum_{h=i}^{i+pl_n-1}\big(\widebar{X}^n_h\cdot\widebar{X}^{n,\T}_h - \widebar{C}^n_h\big)\\
	\zeta(X,p)'^n_i &=& \sum_{h=i}^{i+pl_n-2}\,\sum_{h'=h+1}^{i+pl_n-1}\widebar{X}^n_h\cdot\widebar{X}^{n,\T}_{h'}\phi_1\Big(\frac{h'-h}{l_n}\Big)\\
	\zeta(p)^n_i &=& \sum_{h=i}^{i+pl_n-1}\zeta^n_h
\end{eqnarray*}

Let $m(n,p)=\left\lfloor\frac{k_n-l_n}{(p+1)l_n}\right\rfloor$, $a(n,p,h)=1+h(p+1)l_n$, $b(n,p,h)=a(n,p,h)+pl_n$, then the estimation error of $\widehat{c}^{*n}_i$ can be decomposed as
\begin{equation}\label{errors.chat*}
	\beta^n_i \coloneqq \widehat{c}^{*n}_i-c^n_i = \xi^{n,0}_i + \xi^{n,1}_i + \xi^{n,2}_i + N(p)^n_i + M(p)^n_i
\end{equation}
where
\begin{eqnarray*}
	\xi^{n,0}_i&=&\frac{1}{k_n-l_n}\sum_{h=1}^{k_n-l_n+1}c^n_{i+h}-c^n_i\\
	\xi^{n,1}_i&=&\frac{1}{(k_n-l_n)\Delta_n}\sum_{h=1}^{k_n-l_n+1}D^n_{i+h}\\
	\xi^{n,2}_i&=&\frac{-1}{(k_n-l_n)\Delta_n}\sum_{h=1}^{k_n-l_n+1}R^n_{i+h}\\
	N(p)^n_i&=&\frac{1}{(k_n-l_n)\Delta_n}\Big(\sum_{h=0}^{m(n,p)-1}\zeta(1)^n_{i+b(n,p,h)}+\sum_{h=m(n,p)(p+1)l_n}^{k_n-l_n}\zeta^n_{i+1+h}\Big)\\
	M(p)^n_i&=&\frac{1}{(k_n-l_n)\Delta_n}\sum_{h=0}^{m(n,p)-1}\zeta(p)^n_{i+a(n,p,h)}
\end{eqnarray*}

\subsubsection{bounds on $\|\xi^{n,r}_i\|$}
By Assumption A-$\nu$, (\ref{SA-v}), (\ref{classic}),
\begin{equation}\label{est.xi(0)n}
\left\{\begin{array}{lcl}
	\left\| E\left(\xi^{n,0}_{i}|\F^{(0),n}_{i}\right) \right\| &\le& Kk_n\Delta_n\\
	E\left( \|\xi^{n,0}_{i}\|^q|\F^{(0),n}_{i}\right) &\le& K_q(k_n\Delta_n)^{(q/2)\wedge1},\, q\ge0
\end{array}\right.
\end{equation}
combined with (\ref{tuning}),
\begin{equation}\label{est.xi(1)n}
\left\{\begin{array}{lcl}
	\left\| E\left(\xi^{n,1}_i|\F^{(0),n}_i\right) \right\| &\le& K\Delta_n^{1/2}\\
	E\left(\|\xi^{n,1}_i\|^q|\F^{(0),n}_i\right) &\le& K_q\Delta_n^{[(q/2)\wedge1]/2},\,q\in\mathbb{N}^+
\end{array}\right.
\end{equation}
By Assumption A-$\rr$,
\begin{equation}\label{est.xi(2)n}
\begin{array}{l}
	\left\| E^n_i\big(\xi^{n,2}_i\big) \right\| \le K\Delta_n^{-1}\\
	E^n_i\big(\|\xi^{n,2}_i\|^q\big) \le
	\left\{\begin{array}{l}
		K\,k_n^{-1/2},\, q=1;\\
		K_q\left(k_n^{-q+1} + k_n^{-q}\Delta_n^{-q/2+1}\right), q=2,3,4 
	\end{array}\right.
\end{array}
\end{equation}

\subsubsection{estimates of $\zeta(X,p)^n_i$ \& $\zeta(X,p)'^n_i$}
$\widebar{C}^n_i=(\psi^n)^{-1}C^n_{i\Delta_n,(l_n-1)\Delta_n}$ in view of (\ref{def.Un}),
hence by (\ref{SA-v})
\begin{equation}\label{est.Cbar}
	\|\widebar{C}^n_i\| \le K\Delta_n
\end{equation}
According to (3.1) in the main article we have $\widebar{X}^n_i=-\psi_n^{-1/2}\sum_{h=0}^{l_n-1}(\varphi^n_{h+1}-\varphi^n_h)(X^n_{i+h}-X^n_i)$, then by (\ref{classic})
\begin{equation}\label{est.Xbar}
E\left(\|\widebar{X}^n_i\|^q|\F^{(0),n}_i\right)\le K_q\,\Delta_n^{q/2}
\end{equation}

Adopt the argument for (5.21) in \cite{j09} in the multivariate setting, we have
\begin{equation}\label{est.Wbar}
\begin{array}{lcl}
	E\big(\widebar{W}^n_h\widebar{W}^{n,\T}_{h'}|\F^{(0),n}_i\big) &=& \frac{l_n\Delta_n}{\psi_n}\phi_0\big(\frac{|h'-h|}{l_n}\big)\I + O_p(l_n^{-1/2}\Delta_n)\\
	E\big(\|\widebar{W}^n_h\|^{2m}|\F^{(0),n}_i\big) &=& \Delta_n^m(2m-1)!! + O_p(l_n^{-1}\Delta_n^m),\,\,m\in\mathbb{N}^+
\end{array}
\end{equation}
Let $U^n_i(p)=\sum_{h=i}^{i+pl_n-1}(\sigma^n_i\widebar{W}^n_h)(\sigma^n_i\widebar{W}^n_h)^\T$, $S^n_i(p)=\sum_{h=i}^{i+pl_n-1}\widebar{C}^n_h$, then
\begin{multline}\label{zeta.W.p.2}
	\zeta(W,p)^{n,jk}_i\zeta(W,p)^{n,lm}_i = U^n_i(p)^{jk}U^n_i(p)^{lm} + S^n_i(p)^{jk}S^n_i(p)^{lm}\\ 
	- U^n_i(p)^{jk}S^n_i(p)^{lm} - U^n_i(p)^{lm}S^n_i(p)^{jk}
\end{multline}
By (\ref{sum.phi.quadratic}), (\ref{est.Cbar}), (\ref{est.Wbar}), (\ref{zeta.W.p.2}), and through similar arguments in section 5.3 of \cite{j09} with a modification for multi-dimension, and exploit the connection between $\zeta(W,p)^n_i$ and $\zeta(X,p)^n_i$, in view of (4.2) in the main article, we have the following lemma:
\begin{lem}\label{est.zetaX}
Assume Assumption A-$\nu$, (\ref{SA-v}), $l_n$ satisfies (\ref{tuning}), then
\begin{eqnarray*}
	E\left(\|\zeta(X,p)^n_i\|^4|\F^{(0),n}_i\right)&\le&Kp^4\Delta_n^2\\
	\left\|E\left[\zeta(X,p)^n_i|\F^{(0),n}_i\right]\right\| &\le& Kp\Delta_n\\
	E\left[\zeta(X,p)'^n_i|\F^{(0),n}_i\right] &=& \frac{\theta^2l_n}{\psi_n}(p\Phi_{01}-\Psi_{01})\,c^n_i + p^2\,O_p(\Delta_n^{1/4})\\
	E\left[\zeta(X,p)^{n,jk}_i\zeta(X,p)^{n,lm}_i|\F^{(0),n}_i\right] &=& \frac{2\theta^4}{\psi_n^2}(p\Phi_{00}-\Psi_{00})\,\Sigma(c^n_i)^{jk,lm} + p^2\,O_p(\Delta_n^{5/4})
\end{eqnarray*}
\end{lem}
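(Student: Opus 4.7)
The strategy is to reduce all four identities to estimates on the frozen-coefficient proxy $\sigma^n_i\widebar{W}^n_h$, using the Gaussian formulas (\ref{est.Wbar}), the kernel identity (\ref{sum.phi.quadratic}), and the bound (\ref{est.Cbar}) on $\widebar{C}^n_h$. Writing $\widebar{X}^n_h = \sigma^n_i\widebar{W}^n_h + r^n_h$, the residual $r^n_h$ captures the pre-averaged drift and the temporal variation of $\sigma$ over $[i\Delta_n,(i+pl_n)\Delta_n]$; by (\ref{classic}) and the mean-square smoothness from Assumption \ref{A-v}, $r^n_h$ is small enough to absorb into the stated $O_p$ error terms after summation.

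The first two bounds are then direct. For the fourth moment, by (\ref{est.Xbar}) and Cauchy--Schwarz the summand $\widebar{X}^n_h\widebar{X}^{n,\T}_h - \widebar{C}^n_h$ has $L^4$-norm of order $\Delta_n$, so Minkowski across $pl_n$ indices yields $\|\zeta(X,p)^n_i\|_{L^4}\le Kpl_n\Delta_n = Kp\theta\Delta_n^{1/2}$, and raising to the fourth power gives the claimed $Kp^4\Delta_n^2$. For the conditional mean, the martingale-square part of $\widebar{X}^n_h\widebar{X}^{n,\T}_h$ has conditional expectation $\widebar{C}^n_h$ up to an $O_p(\Delta_n^{3/2})$ correction from the variation of $\sigma$, while the drift and cross contributions are similarly $O_p(\Delta_n^{3/2})$ per summand by (\ref{classic}); summing $pl_n$ such contributions with $l_n\asymp\Delta_n^{-1/2}$ gives the $Kp\Delta_n$ bound.

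For $\zeta(X,p)'^n_i$, coefficient freezing and (\ref{est.Wbar}) give $E[\widebar{X}^n_h\widebar{X}^{n,\T}_{h'}|\F^{(0),n}_i] = (l_n\Delta_n/\psi_n)\phi_0(|h-h'|/l_n)\,c^n_i + O_p(l_n^{-1/2}\Delta_n)$. Multiplying by $\phi_1((h'-h)/l_n)$ and summing over $i\le h<h'\le i+pl_n-1$ via (\ref{sum.phi.quadratic}) with $(l,m)=(0,1)$ yields the factor $l_n^2(p\Phi_{01}-\Psi_{01})$; combined with the prefactor $l_n\Delta_n/\psi_n$ and the relation $l_n^2\Delta_n\to\theta^2$, this produces the stated leading term $\frac{\theta^2 l_n}{\psi_n}(p\Phi_{01}-\Psi_{01})c^n_i$. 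The residuals from freezing and from the $O_p(l_n^{-1/2}\Delta_n)$ correction accumulate to $p^2\,O_p(\Delta_n^{1/4})$ over the double sum.

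The conditional second moment is the main obstacle. After freezing, the key ingredient is Isserlis' theorem applied to $E[\widebar{W}^{n,j}_h\widebar{W}^{n,k}_h\widebar{W}^{n,l}_{h'}\widebar{W}^{n,m}_{h'}|\F^{(0),n}_i]$: the diagonal pairing $E[\widebar{W}^{n,j}_h\widebar{W}^{n,k}_h]\,E[\widebar{W}^{n,l}_{h'}\widebar{W}^{n,m}_{h'}]$ cancels against the $S^n_i(p)^{jk}S^n_i(p)^{lm} - U^n_i(p)^{jk}S^n_i(p)^{lm} - U^n_i(p)^{lm}S^n_i(p)^{jk}$ pieces of (\ref{zeta.W.p.2}) up to sub-leading order, while the two pairing-cross terms produce weights $(l_n\Delta_n/\psi_n)^2\phi_0(|h-h'|/l_n)^2\bigl[(c^n_i)^{jl}(c^n_i)^{km} + (c^n_i)^{jm}(c^n_i)^{kl}\bigr]$; (\ref{sum.phi.quadratic}) with $(l,m)=(0,0)$ delivers $l_n^2(p\Phi_{00}-\Psi_{00})$ and recovers the stated leading term. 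The delicate step is to verify uniformly that the coefficient-freezing residual, the $O_p(l_n^{-1/2}\Delta_n)$ correction in (\ref{est.Wbar}), and the $O(pl_n)$ Riemann-sum error in (\ref{sum.phi.quadratic}) consolidate into the $p^2\,O_p(\Delta_n^{5/4})$ envelope, and that the diagonal-pairing cancellation is exact at this precision.
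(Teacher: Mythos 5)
Your proposal follows essentially the same route as the paper: reduce to the frozen-coefficient proxy $\sigma^n_i\widebar{W}^n_h$ (i.e.\ pass from $\zeta(X,p)^n_i$ to $\zeta(W,p)^n_i$), compute the Gaussian moments via (\ref{est.Wbar}) and Wick/Isserlis pairings — which the paper delegates to the arguments of Section 5.3 of \cite{j09} and the expansion (\ref{zeta.W.p.2}) — and convert the kernel double sums through (\ref{sum.phi.quadratic}), with (\ref{classic}), (\ref{est.Cbar}), (\ref{est.Xbar}) absorbing the drift and coefficient-freezing residuals into the stated error terms. One small caution: for the conditional-mean bound, a bare Cauchy--Schwarz on the drift--martingale cross term gives only $O(\Delta_n^{5/4})$ per summand, so you need the standard refinement (freeze the drift at time $i\Delta_n$ so the Brownian part has zero conditional mean, then use the $\tfrac12$-H\"older regularity of $b$) to reach the claimed $Kp\Delta_n$ after summation, exactly as in the cited pre-averaging arguments.
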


\subsubsection{estimates of $\zeta(p)^n_i$}
For $i\le h,h'\le i+pl_n-1$, by (\ref{sum.phi.quadratic})
\begin{eqnarray}\label{est.Gamma}
	\Gamma^n_{h,h'} &=& \frac{1}{\psi_nl_n}\phi_1\Big(\frac{|h'-h|}{l_n}\Big)\rr_i+O_p(\Delta_n^{5/4})\\
	\sum_{h=i}^{i+pl_n-2}\,\sum_{h'=h+1}^{i+pl_n-1}\Gamma^{n,jk}_{h,h'}\Gamma^{n,lm}_{h,h'}&=&\frac{1}{\psi_n^2}\left(p\Phi_{11}-\Psi_{11}\right)\gamma^{n,jk}_i\gamma^{n,lm}_i + p^2\,O_p(\Delta_n^{5/4}) \nonumber
\end{eqnarray}

Let $\xi^{n,j_1\cdots j_q}_{h_1\cdots h_q}=\prod_{v=1}^q\big(\widebar{Y}^{n,j_v}_{h_v}-\widebar{X}^{n,j_v}_{h_v}\big)$. By Assumption A-$\gamma$,
\begin{multline}\label{est.Ybar-Xbar}
	E\left[\big(\widebar{Y}^{n,j}_h-\widebar{X}^{n,j}_h\big)^q\big(\widebar{Y}^{n,k}_{h'}-\widebar{X}^{n,k}_{h'}\big)^r|\widetilde{\F}^n_{h\wedge h'-1}\right] =
	\left\{\begin{array}{ll}
		0                    & q+r=1\\
		\Gamma^{n,jk}_{h,h'} & q=r=1\\
		O_p\big(l_n^{-7/2}\big)\1{|h-h'|\le l_n} & q+r=3\\
		O_p(l_n^{-8})        & q+r=8
	\end{array}\right.\\
	E\left[\xi^{n,jklm}_{hhh'h'}|\widetilde{\F}^n_{h\wedge h'}\right] = \Gamma^{n,jk}_h\Gamma^{n,lm}_{h'} + \Gamma^{n,jl}_{h,h'}\Gamma^{n,km}_{h,h'} +  \Gamma^{n,jm}_{h,h'}\Gamma^{n,kl}_{h,h'} + O_p\left(l_n^{-5}\right)
\end{multline}
then by (\ref{est.Xbar}), (\ref{est.Ybar-Xbar})
\begin{multline*}
	E(\zeta^n_h|\widetilde{\F}^n_h) = \widebar{X}^n_h{\widebar{X}^n_h}^\T - \widebar{C}^n_h,\,\,\,
	\|E^n_h\left(\zeta^n_h\right)\| \le K\Delta_n^{3/2},\,\,\,
	E^n_h\left(\|\zeta^n_h\|^4\right) \le K\Delta_n^4 \nonumber\\
	E\big(\zeta^{n,jk}_h\zeta^{n,lm}_{h'}|\widetilde{\F}^n_{h\wedge h'}\big) = \sum_{r=1}^3\vartheta(r)^{n,jk,lm}_{h,h'}\\
	+ \big(\widebar{X}^{n,j}_h\widebar{X}^{n,k}_h - \widebar{C}^{n,jk}_h\big)\big(\widebar{X}^{n,l}_{h'}\widebar{X}^{n,m}_{h'} - \widebar{C}^{n,lm}_{h'}\big)
\end{multline*}
where
\begin{eqnarray*}
	\vartheta(1)^{n,jk,lm}_{h,h'}&=&\widebar{X}^{n,j}_h\widebar{X}^{n,l}_{h'}\xi^{n,km}_{hh'}+\widebar{X}^{n,j}_h\widebar{X}^{n,m}_{h'}\xi^{n,kl}_{hh'}\\
	&&\hspace{30mm} +\widebar{X}^{n,k}_h\widebar{X}^{n,l}_{h'}\xi^{n,jm}_{hh'}+\widebar{X}^{n,k}_h\widebar{X}^{n,m}_{h'}\xi^{n,jl}_{hh'}\\
	\vartheta(2)^{n,jk,lm}_{h,h'}&=&\xi^{n,jklm}_{hhh'h'} - \xi^{n,jk}_{hh}\Gamma^{n,lm}_{h'} - \xi^{n,lm}_{h'h'}\Gamma^{n,jk}_h + \Gamma^{n,jk}_h\Gamma^{n,lm}_{h'}\\
	\vartheta(3)^{n,jk,lm}_{h,h'}&=&\widebar{X}^{n,j}_h\xi^{n,klm}_{hh'h'}+\widebar{X}^{n,k}_h\xi^{n,jlm}_{hh'h'}+\widebar{X}^{n,l}_{h'}\xi^{n,jkm}_{hhh'}+\widebar{X}^{n,m}_{h'}\xi^{n,jkl}_{hhh'}
\end{eqnarray*}

Let $\Upsilon^{n,jk,lm}_{h,h'} = \Theta(\widebar{X}^n_h\widebar{X}^{n,\T}_{h'},\Gamma^n_{h,h'})^{jk,lm}$ in light of (4.2) in the main article, then
\begin{multline*}
	E\left(\zeta^{n,jk}_h\zeta^{n,lm}_{h'}|\widetilde{\F}^n_{h\wedge h'}\right) = \big(\widebar{X}^{n,j}_h\widebar{X}^{n,k}_h - \widebar{C}^{n,jk}_h\big)\big(\widebar{X}^{n,l}_{h'}\widebar{X}^{n,m}_{h'} - \widebar{C}^{n,lm}_{h'}\big)\\
	+ \Upsilon^{n,jk,lm}_{h,h'} + \Sigma(\Gamma^n_{h,h'})^{jk,lm} + O_p\big(l_n^{-5} + (\|\widebar{X}^n_h\|+\|\widebar{X}^n_{h'}\|)l_n^{-7/2}\big)
\end{multline*}

hence 
\begin{multline*}
	\zeta(p)^{n,jk}_i\zeta(p)^{n,lm}_i = \zeta(X,p)^{n,jk}_i\zeta(X,p)^{n,lm}_i\\
	+ \sum_{h=i}^{i+pl_n-2}\sum_{h'=h+1}^{i+pl_n-1}\Big[\Upsilon^{n,jk,lm}_{h,h'} + \Upsilon^{n,lm,jk}_{h,h'} + \Sigma(\Gamma^n_{h,h'})^{jk,lm} + \Sigma(\Gamma^n_{h,h'})^{lm,jk}\Big]\\
	+ \sum_{h=i}^{i+pl_n-1}\Big[\Upsilon^{n,jk,lm}_{h,h'} + \Sigma(\Gamma^n_{h,h'})^{jk,lm}\Big] + p^2\,O_p(\Delta_n^{5/4})
\end{multline*}
then by (\ref{est.Xbar}), (\ref{est.Gamma})
\begin{multline*}
	E\left[\zeta(p)^{n,jk}_i\zeta(p)^{n,lm}_i|\widetilde{\F}^n_i\right] = \zeta(X,p)^{n,jk}_i\zeta(X,p)^{n,lm}_i +\\
	\frac{2}{\psi_nl_n}\Theta(\zeta(X,p)'^n_i,\rr^n_i)^{jk,lm} + \frac{2}{\psi_n^2}(p\Phi_{11}-\Psi_{11})\,\Sigma(\rr^n_i)^{jk,lm} + p^2\,O_p(\Delta_n^{5/4})
\end{multline*}
According to these results and Lemma \ref{est.zetaX}, one can get the following lemma:
\begin{lem}\label{est.zetaY}
Assume Assumption A-$\nu$, A-$\gamma$, (\ref{SA-v}), $l_n$ satisfies (\ref{tuning}), then
\begin{eqnarray*}
	E[\zeta(p)^n_i|\widetilde{\F}^n_i] &=& \zeta(X,p)^n_i \\
	\|E^n_i[\zeta(p)^n_i]\| &\le& Kp\Delta_n\\
	E^n_i(\|\zeta(p)^n_i\|^q) &\le& K_q\,p^{\lfloor q/2\rfloor\vee1}\Delta_n^{q/2},\hspace{2mm} q=1,2,3,4
\end{eqnarray*}
moreover
\begin{equation*}
	\left|E^n_i\left[\zeta(p)^{n,jk}_i\zeta(p)^{n,lm}_i\right] - (p+1)\theta\Delta_n\,\Xi(c^n_i,\gamma^n_i; p)^{jk,lm}\right| \le K\Delta_n^{5/4}
\end{equation*}
where
\begin{equation}\label{def.Xi(p)}
	\Xi(x,z;p) = \frac{p\Phi_{00}-\Psi_{00}}{(p+1)\Phi_{00}}\,\Sigma(x) + \frac{p\Phi_{01}-\Psi_{01}}{(p+1)\Phi_{01}}\,\Theta(x,z) + \frac{p\Phi_{11}-\Psi_{11}}{(p+1)\Phi_{11}}\,\Sigma(z)
\end{equation}
\end{lem}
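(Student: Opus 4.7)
The first identity $E[\zeta(p)^n_i|\widetilde{\F}^n_i]=\zeta(X,p)^n_i$ I obtain by summing the per-summand identity $E[\zeta^n_h|\widetilde{\F}^n_h]=\widebar{X}^n_h\widebar{X}^{n,\T}_h-\widebar{C}^n_h$ already derived in the display immediately preceding the lemma (via (\ref{est.Ybar-Xbar}) and Assumption A-$\rr$). The key observation is that $\widetilde{\F}^n_i$ contains the full signal $\sigma$-algebra $\F^{(0)}$, so $\widebar{X}^n_h\widebar{X}^{n,\T}_h-\widebar{C}^n_h$ is $\widetilde{\F}^n_i$-measurable for every $h\geq i$; applying the tower property $E[\,\cdot\,|\widetilde{\F}^n_i]=E[E[\,\cdot\,|\widetilde{\F}^n_h]|\widetilde{\F}^n_i]$ to each summand and then adding up then yields the claim. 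The bound $\|E^n_i[\zeta(p)^n_i]\|\leq Kp\Delta_n$ follows by a second tower step that reduces the assertion to $\|E^n_i[\zeta(X,p)^n_i]\|\leq Kp\Delta_n$, i.e., the first-moment estimate already supplied by Lemma \ref{est.zetaX}.

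For the $L^q$ moment bounds ($q=1,2,3,4$), my plan is to split each $\zeta^n_h$ into three pieces: the signal piece $A^n_h=\widebar{X}^n_h\widebar{X}^{n,\T}_h-\widebar{C}^n_h$, a signal--noise bilinear piece $B^n_h$ linear in $\widebar{Y}^{*n}_h-\widebar{X}^n_h$, and a pure-noise piece $N^n_h=(\widebar{Y}^{*n}_h-\widebar{X}^n_h)\otimes(\widebar{Y}^{*n}_h-\widebar{X}^n_h)-\Gamma^n_h$. The $A$-sum equals $\zeta(X,p)^n_i$ and is controlled directly by Lemma \ref{est.zetaX}. For the $B$- and $N$-sums, Assumption A-$\rr$ combined with (\ref{est.Ybar-Xbar}) ensures that their conditional expectations given $\widetilde{\F}^n_{h-1}$ vanish outside overlap windows of length $l_n$; after blocking into chunks of length $l_n$ these become martingale-difference arrays in a coarser filtration. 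Applying Lemma \ref{Doob.max.ineq} (or Burkholder--Davis--Gundy) with the per-term moment controls from (\ref{est.Ybar-Xbar}) and (\ref{est.Xbar}) yields the claimed $Kp\Delta_n$ scaling for $L^2$ and $Kp^2\Delta_n^2$ for $L^4$; the $L^1$ bound is Jensen from $L^2$, and the $L^3$ bound follows from H\"older interpolation between the explicit $L^2$ leading value and the $L^4$ estimate, combined with the uniform boundedness of $c$ and $\rr$ under (\ref{SA-v}).

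The final asymptotic identity rests on the display
\[
E[\zeta(p)^{n,jk}_i\zeta(p)^{n,lm}_i|\widetilde{\F}^n_i]=\zeta(X,p)^{n,jk}_i\zeta(X,p)^{n,lm}_i+\tfrac{2}{\psi_nl_n}\Theta(\zeta(X,p)'^n_i,\rr^n_i)^{jk,lm}+\tfrac{2}{\psi_n^2}(p\Phi_{11}-\Psi_{11})\Sigma(\rr^n_i)^{jk,lm}+p^2O_p(\Delta_n^{5/4})
\]
already established in the text preceding the lemma. Taking $E^n_i$ of both sides, substituting the Lemma \ref{est.zetaX} identities for $E^n_i[\zeta(X,p)^{n,jk}_i\zeta(X,p)^{n,lm}_i]$ and $E^n_i[\zeta(X,p)'^n_i]$, and then invoking the Riemann-sum asymptotics $\psi_n=l_n\phi_0(0)+O(1)$ together with $l_n\Delta_n^{1/2}\to\theta$ from (\ref{tuning}), each prefactor collapses to $(p+1)\theta\Delta_n$ times the corresponding coefficient of $\Sigma(c),\Theta(c,\rr),\Upsilon(\rr)$ appearing in (\ref{def.Xi(p)}). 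The hard part will be the bookkeeping of the lower-order remainders: the $p^2$-scaled residuals from Lemma \ref{est.zetaX}, the $O(1)$ absolute Riemann-sum error in $\psi_n-l_n\phi_0(0)$, and the overlap residuals in the cross sums must all be shown to stay within the declared $K\Delta_n^{5/4}$ budget, and this tracking hinges on the piecewise Lipschitz regularity of $\varphi'$ from (\ref{phi.cond}) being strong enough to replace every Riemann sum by its integral at the required rate.
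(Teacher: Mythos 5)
Your proposal follows essentially the same route as the paper: the conditional-expectation identity and the first-moment bound via the tower property combined with Lemma \ref{est.zetaX}, the moment bounds from the per-term noise estimates (\ref{est.Ybar-Xbar})/(\ref{est.Xbar}) (the paper organizes this through the $\vartheta(r)^{n,jk,lm}_{h,h'}$ cross-moment expansion rather than explicit martingale blocking, but the content is the same), and the final covariance identity by inserting Lemma \ref{est.zetaX} into the conditional second-moment display and passing $\psi_n\approx l_n\phi_0(0)$, $l_n\Delta_n^{1/2}\to\theta$ through Riemann-sum asymptotics, which is exactly the paper's concluding step. The one quantitative wrinkle is $q=3$: H\"older interpolation between the $q=2$ and $q=4$ bounds yields $K\,p^{3/2}\Delta_n^{3/2}$, not the stated $K\,p\,\Delta_n^{3/2}$; note, however, that the stated exponent is itself hard to reconcile with the lemma's own covariance asymptotics (Lyapunov's inequality forces $E^n_i\|\zeta(p)^n_i\|^3\gtrsim(p\Delta_n)^{3/2}$ when the noise is nondegenerate), and the weaker $p^{3/2}$ bound is all that is needed for the downstream estimates (\ref{est.M(p).N(p)}) and Lemma \ref{est.beta} under the tuning $2/3<\kappa<3/4$, so this does not affect the substance of your argument.
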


Let $p\asymp\Delta_n^{-1/12}$, based on Lemma \ref{est.zetaY},
\begin{eqnarray}\label{est.M(p).N(p)}
	\left\|E^n_i\left[M(p)^n_i\right] \right\| &\le& K\Delta_n^{1/2} \nonumber\\
	\left\|E^n_i\left[N(p)^n_i\right] \right\| &\le& K p^{-1}\Delta_n^{1/2} \nonumber\\
	E^n_i\left[\|M(p)^n_i\|^q\right] &\le&
	\left\{\begin{array}{ll}
		K_q\big(k_n\Delta_n^{1/2}\big)^{-q/2}, & q=1,2,4\\
		K\big(k_n\Delta_n^{1/2}\big)^{-2}, & q=3
	\end{array}\right.\\
	E^n_i\left[\|N(p)^n_i\|^q\right] &\le&
	\left\{\begin{array}{ll}
		K p\big(k_n\Delta_n^{1/2}\big)^{-1}, & q=1\\
		K_q p^{-q/2}\big(k_n\Delta_n^{1/2}\big)^{-q/2}, & q=2,4\\
		K p^{-1}\big(k_n\Delta_n^{1/2}\big)^{-2}, & q=3
	\end{array}\right. \nonumber
\end{eqnarray}

\subsubsection{estimates of $\beta^n_i$}
We need to define more variables:
\begin{equation}\label{def.betavars}
\begin{array}{ll}
	\zeta(p)^n_{i,h} = \zeta(p)^n_{i+a(n,p,h)} & A(p)^n_{i+v} = \sum_{h=v}^{v+pl_n-1}(c^n_{i+h}-c^n_i)\Delta_n\\
	D(p)^n_{i+v} = \sum_{h=v}^{v+pl_n-1}D^n_{i+h} & R(p)^n_{i+v} = \sum_{h=v}^{v+pl_n-1}R^n_{i+h}
\end{array}
\end{equation}
we have
\begin{table}[H]
\centering
\caption{Estimates of ingredients under (\ref{tuning})}\label{ingredient.estimates}
\begin{tabular}{c|ll}
	scaling properties & $E(\|\cdot\|^2|\F^n_i)$ & $\|E(\cdot|\F^n_i)\|$\\
	\hline
	$R(p)^n_{i+v}$     & $p\Delta_n^{3/2}$       & $p\Delta_n^{3/2}$\\
	$D(p)^n_{i+v}$     & $p\Delta_n^{3/2}$       & $p\Delta_n$ \\
	$A(p)^n_{i+v}$     & $p^2\Delta_n^2\big(p\Delta_n^{-1/2}+v\big)$ & $p\Delta_n^{3/2}\big(p\Delta_n^{-1/2}+v\big)$\\
	$\zeta(p)^n_{i,h}$ & $p\Delta_n$             & $p\Delta_n$\\
\end{tabular}
\end{table}
Define 
\begin{multline*}
	\alpha(p)^n_{i,h} = - R(p+1)^n_{i+a(n,p,h)} + D(p+1)^n_{i+a(n,p,h)} 
	 + A(p+1)^n_{i+a(n,p,h)} + \zeta(p+1)^n_{i,h}
\end{multline*}
By Table \ref{ingredient.estimates} and Cauchy-Schwarz inequality,
\begin{multline}\label{a(p)2-zeta(p)2}
	E^n_i\left(\left|\alpha(p)^{n,jk}_{i,h}\alpha(p)^{n,lm}_{i,h}-\zeta(p+1)^{n,jk}_{i,h}\zeta(p+1)^{n,lm}_{i,h}\right|\right)\\
\le K\big(p^2\Delta_n^{5/4}+p^{3/2}\Delta_n^{3/2}v^{1/2}\big)
\end{multline}

Given $j,k,l,m=1,\cdots,d$, in the light of (\ref{errors.chat*}) and (\ref{def.betavars}), by Table \ref{ingredient.estimates} we have
\begin{equation*}
	\left| E^n_i\big(\beta^{n,jk}_i\beta^{n,lm}_i\big) - (k_n\Delta_n^{1/2})^{-1}\Xi(c^n_i,\rr_i)^{jk,lm} \right| = \sum_{r=1}^5\mu^{n,r}_i + p\,O_p((k_n^2\Delta_n)^{-1})
\end{equation*}
where
\begin{eqnarray*}
	\mu^{n,1}_i &=&
	\frac{1}{(k_n-l_n)^2\Delta_n^2}\sum_{h=0}^{m(n,p)-1}E^n_i\left(\left|\alpha(p)^{n,jk}_{i,h}\alpha(p)^{n,lm}_{i,h}
	- \zeta(p+1)^{n,jk}_{i,h}\zeta(p+1)^{n,lm}_{i,h}\right|\right)\\
	\mu^{n,2}_i &=&
	\frac{1}{(k_n-l_n)^2\Delta_n^2}\sum_{h=0}^{m(n,p)-2}\sum_{h'=h+1}^{m(n,p)-1}\left|E^n_i\left[\alpha(p)^{n,jk}_{i,h}\alpha(p)^{n,lm}_{i,h'}
	+\alpha(p)^{n,lm}_{i,h}\alpha(p)^{n,jk}_{i,h'}\right]\right|\\
	\mu^{n,3}_i &=& \frac{1}{(k_n-l_n)^2\Delta_n^2}\sum_{h=0}^{m(n,p)-1}E^n_i\left(\left|\zeta(p+1)^{n,jk}_{i,h}\zeta(p+1)^{n,lm}_{i,h}\right.\right.\\
	&&\hspace{36mm} \left.\left. -(p+2)\theta\Delta_n\,\Xi\big(c^n_{i+a(n,p,h)},\rr^n_{i+a(n,p,h)}; p+1\big)^{jk,lm}\right|\right)\\
	\mu^{n,4}_i &=& \frac{(p+2)\theta}{(k_n-l_n)^2\Delta_n}\sum_{h=0}^{m(n,p)-1}\left|E^n_i\left[\Xi\big(c^n_{i+a(n,p,h)},\rr^n_{i+a(n,p,h)}; p+1\big)^{jk,lm}\right.\right. \\
	&&\hspace{76mm} \left.\left. - \Xi(c^n_i,\rr^n_i; p+1)^{jk,lm}\right]\right|
\end{eqnarray*}
\begin{eqnarray*}
	\mu^{n,5}_i &=& \frac{1}{(k_n-l_n)\Delta_n^{1/2}}\left|\frac{(p+2)\theta}{(k_n-l_n)\Delta_n^{1/2}}\left\lfloor\frac{k_n}{(p+1)l_n}\right\rfloor-\frac{(k_n-l_n)}{k_n}\right|\cdot\left|\Xi(c^n_i,\rr^n_i; p+1)^{jk,lm}\right|\\
	&&\hspace{46mm} + \frac{1}{k_n\Delta_n^{1/2}}\left|\Xi(c^n_i, \rr^n_i; p+1)^{jk,lm} - \Xi(c^n_i,\rr^n_i)^{jk,lm}\right|
\end{eqnarray*}
Use Table \ref{ingredient.estimates} and (\ref{a(p)2-zeta(p)2}) to get bounds on $\mu^{n,r}_i,\,r=1,2,3,4,5$; combine (\ref{errors.chat*}), (\ref{est.xi(0)n}), (\ref{est.xi(1)n}), (\ref{est.xi(2)n}), (\ref{est.M(p).N(p)}), we get the following lemma:

\begin{lem}\label{est.beta} 
Assume (\ref{SA-v}), (\ref{tuning}) and Assumption A-$\nu$, A-$\gamma$, given $p\in\mathbb{N}^+$,
\begin{eqnarray*}
	\|E^n_i(\beta^n_i)\| &\le& K k_n\Delta_n\\
	E^n_i(\|\beta^n_i\|^q) &\le& 
	\left\{\begin{array}{ll}
		K_q\Big[(k_n\Delta_n)^{(q/2)\wedge1} + (k_n\Delta_n^{1/2})^{-q/2}\Big],& q = 1,2,4\\
		Kk_n\Delta_n, & q = 3
	\end{array}\right.
\end{eqnarray*}
additionally
\begin{equation*}
	\left|E^n_i\big(\beta^{n,jk}_i\beta^{n,lm}_i\big) - (k_n\Delta_n)^{-1/2}\Xi(c^n_i,\gamma^n_i)^{jk,lm}\right| \le K\big[k_n\Delta_n + p^{-1}(k_n\Delta_n^{1/2})^{-1}\big]
\end{equation*}
\end{lem}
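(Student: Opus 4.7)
My plan is to prove the three estimates in Lemma~\ref{est.beta} by leveraging the decomposition (\ref{errors.chat*}), namely $\beta^n_i = \xi^{n,0}_i + \xi^{n,1}_i + \xi^{n,2}_i + N(p)^n_i + M(p)^n_i$, where each summand has been individually controlled in (\ref{est.xi(0)n})--(\ref{est.xi(2)n}) and (\ref{est.M(p).N(p)}). The first two claims amount to assembling these bounds via the triangle inequality (for $\|E^n_i(\beta^n_i)\|$) and Minkowski's inequality (for $E^n_i(\|\beta^n_i\|^q)$), after choosing the auxiliary block size $p$. With the choice $p \asymp \Delta_n^{-1/12}$ already fixed in the derivation of (\ref{est.M(p).N(p)}), the term $k_n\Delta_n$ dominates the bias estimate since (\ref{tuning}) gives $k_n\Delta_n \gg \Delta_n^{1/2}$ and $k_n\Delta_n \gg p^{-1}(k_n\Delta_n^{1/2})^{-1}$; for the $L^q$ bounds, the target error $\xi^{n,0}_i$ contributes the $(k_n\Delta_n)^{(q/2)\wedge 1}$ piece while the martingale $M(p)^n_i$ furnishes the statistical $(k_n\Delta_n^{1/2})^{-q/2}$ piece, with $\xi^{n,1}_i$, $\xi^{n,2}_i$ and $N(p)^n_i$ being strictly dominated.

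For the refined cross-moment estimate, I would expand $E^n_i(\beta^{n,jk}_i\beta^{n,lm}_i)$ and show that only the diagonal contribution $E^n_i[M(p)^{n,jk}_i M(p)^{n,lm}_i]$ matters at leading order. Using the ``big block'' definition of $M(p)^n_i$ together with the quantity $\alpha(p)^n_{i,h}$ assembled from $A(p+1), D(p+1), R(p+1)$ and $\zeta(p+1)$, the difference between the actual cross-moment and the target $(k_n\Delta_n^{1/2})^{-1}\Xi(c^n_i,\gamma^n_i)^{jk,lm}$ is decomposed into the five error pieces $\mu^{n,r}_i$ already listed just before the lemma. I would then bound each $\mu$ term: $\mu^{n,1}$ is handled by (\ref{a(p)2-zeta(p)2}) combined with the scaling table; $\mu^{n,2}$ is the cross-block term between disjoint big blocks, which is controlled by the martingale-like conditional independence and Table~\ref{ingredient.estimates} estimates on each $\alpha(p)^n_{i,h}$; $\mu^{n,3}$ is the sharp second-moment deviation bound supplied directly by Lemma~\ref{est.zetaY}; $\mu^{n,4}$ measures the replacement of the local $(c^n_{i+a(n,p,h)},\gamma^n_{i+a(n,p,h)})$ by the initial $(c^n_i,\gamma^n_i)$, which is handled by (\ref{classic}) and the Lipschitz continuity of the tensor function $\Xi(\cdot,\cdot;p+1)$ together with summation over $h=0,\dots,m(n,p)-1$; $\mu^{n,5}$ is the elementary comparison between the discretized ratios $\lfloor k_n/((p+1)l_n)\rfloor$ versus $k_n/((p+1)l_n)$ and between $\Xi(\cdot,\cdot;p+1)$ and $\Xi(\cdot,\cdot)$, the latter converging as $p\to\infty$ by (\ref{def.Xi(p)}).

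Aggregating these, one obtains a bound of the form $K[k_n\Delta_n + p^{-1}(k_n\Delta_n^{1/2})^{-1}]$, matching the claim: the first piece traces back to $\mu^{n,4}$ (where each summand contributes order $k_n\Delta_n\cdot m(n,p)^{-1}$ after normalization, i.e.\ $k_n\Delta_n/(k_n\Delta_n^{1/2})$ which simplifies to the stated rate), and the second piece traces back to $\mu^{n,5}$ and the $\Xi(\cdot,\cdot;p+1)\to\Xi(\cdot,\cdot)$ limit, which decays at rate $p^{-1}$ in view of (\ref{def.Xi(p)}).

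The main obstacle I anticipate is the careful verification of $\mu^{n,2}$, the cross-block term. Although the big blocks are separated by small blocks, the overlap structure induced by the pre-averaging kernel means $\alpha(p)^n_{i,h}$ and $\alpha(p)^n_{i,h'}$ are not strictly conditionally independent across $h\ne h'$: their $A(p+1)$ and $D(p+1)$ components retain non-trivial correlations, and the $\zeta(p+1)$ martingale-like component still couples through shared noise and Brownian increments at block boundaries. Tracking the resulting cross-terms, showing that the dominant correlations telescope or vanish under conditioning on $\widetilde{\F}^n$ at each big-block start-time, and bounding the surviving remainder by $K(k_n\Delta_n + p^{-1}(k_n\Delta_n^{1/2})^{-1})$ requires a second application of Lemma~\ref{est.zetaY}--style identities together with the successive-conditioning Cauchy--Schwarz bookkeeping used throughout Appendix~\ref{apdx:contin}. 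Once this term is controlled the remaining estimates fall into place by the scaling table and the choice $p\asymp\Delta_n^{-1/12}$.
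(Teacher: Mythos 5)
Your proposal follows the paper's own route almost exactly: assemble the first two estimates from the decomposition (\ref{errors.chat*}) together with (\ref{est.xi(0)n})--(\ref{est.xi(2)n}) and (\ref{est.M(p).N(p)}), and then for the refined cross-moment estimate use precisely the paper's five-term decomposition $\mu^{n,1}_i,\dots,\mu^{n,5}_i$ built from $\alpha(p)^n_{i,h}$, the scaling Table~\ref{ingredient.estimates}, the bound (\ref{a(p)2-zeta(p)2}), and Lemma~\ref{est.zetaY}. The points you flag as remaining work (the cross-block term $\mu^{n,2}_i$ handled by the big-block/small-block conditional-independence bookkeeping, and the $p^{-1}$-rate comparison $\Xi(\cdot,\cdot;p)\to\Xi(\cdot,\cdot)$ for $\mu^{n,5}_i$) are exactly the ingredients the paper also relies on, so this is the same proof modulo the explicit term-by-term verification.
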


\subsection{Structure}
Define
\begin{eqnarray*}
	\lambda(x,z) &=& \sum^d_{j,k,l,m=1}\partial^2_{jk,lm}g(x)\times\Xi(x,z)^{jk,lm}\\
	\eta^n_i &=& \lambda(\widehat{c}^n_i,\widehat{\gamma}^n_i)-\lambda(\widehat{c}^{*n}_i,\widehat{\gamma}^n_i)
\end{eqnarray*}
As $\Delta_n^{-1/4}|a^n_t-1|<k_n\Delta_n^{3/4}\to0$, letting $a^n_t=1$ 
doesn't not affect the asymptotic analysis. By Cram\'er-Wold theorem, we can suppose $g$ is $\R$-valued. We have
\begin{equation}\label{decomp}
	\Delta_n^{-1/4}\big[\widehat{S}(g)^n-S(g)\big] = \widebar{S}^{n,0} + \widebar{S}^{n,1} + \widebar{S}(p)^{n,2} + \widebar{S}^{n,3} + \widebar{S}(p)^{n,4}
\end{equation}
where
\begin{eqnarray*}
	\widebar{S}^{n,0}_t&=&\Delta_n^{-1/4}\Bigg[\sum_{i=0}^{N^n_t-1}\int_{ik_n\Delta_n}^{(i+1)k_n\Delta_n}g(c^n_{ik_n})-g(c_s)\ds s-\int_{N^n_tk_n\Delta_n}^tg(c_s)\ds s\Bigg]\\
	\widebar{S}^{n,1}_t&=&k_n\Delta_n^{3/4}\sum_{i=0}^{N^n_t-1}\Big[g(\widehat{c}^n_{ik_n})-g(\widehat{c}^{*n}_{ik_n}) - (2k_n\Delta_n^{1/2})^{-1}\eta^n_{ik_n}\Big]\\
	\widebar{S}(p)^{n,2}_t&=&k_n\Delta_n^{3/4}\sum_{i=0}^{N^n_t-1}\sum^d_{j,k=1}\partial_{jk}g(c^n_{ik_n})\Bigg[\sum_{r=0}^2\xi^{n,r,jk}_{ik_n}+N(p)^{n,jk}_{ik_n}\Bigg]
\end{eqnarray*}
\begin{eqnarray*}
	\widebar{S}^{n,3}_t&=&k_n\Delta_n^{3/4}\sum_{i=0}^{N^n_t-1}\Big[g(\widehat{c}^{*n}_{ik_n})-g(c^n_{ik_n})-\sum^d_{j,k=1}\partial_{jk}g(c^n_{ik_n}) \beta^{n,jk}_{ik_n}\\
	&&\hspace{66mm} - (2k_n\Delta_n^{1/2})^{-1}\lambda(\widehat{c}^{*n}_i,\widehat{\rr}^n_i)\Big]\\
	\widebar{S}(p)^{n,4}_t&=&k_n\Delta_n^{3/4}\sum_{i=0}^{N^n_t-1}\sum^d_{j,k=1}\partial_{jk}g(c^n_{ik_n})\times M(p)^{n,jk}_{ik_n}
\end{eqnarray*}

\subsection{Asymptotic negligibility}\label{apdx:asymneg}
First of all, we need to get bounds on $\partial^r g(\widehat{c}^n_i),\,r\le3$, where $\partial^rg$ denotes the $r$-th order partial derivatives. Let $\widebar{c}^n_i = (k_n-l_n)^{-1}\sum_{h=1}^{k_n-l_n+1}c^n_i$ and $I^n_t=\{0,1,\cdots,N^n_t-1\}$, note that $|I^n_t|\asymp (k_n\Delta_n)^{-1}$, according to Lemma \ref{est.chat-chat*}, there is a sequence $a_n\to0$ such that
\begin{equation*}
	\E\Big(\sup_{i\in I_n}\|\widehat{c}^n_i-\widehat{c}^{*n}_i\|\Big) \le K\left(a_n\Delta_n^{\kappa-1/2-(\rho-1/4)\nu - (1-2\rho)} + \Delta_n^{\kappa-1/2}\right)
\end{equation*}
Note $\widehat{c}^{*n}_i-\widebar{c}^{n}_i = \xi^{n,1}_i + \xi^{n,2}_i + N(p)^n_i + M(p)^n_i$, by (\ref{est.xi(1)n}), (\ref{est.xi(2)n}), (\ref{est.M(p).N(p)}) and $\kappa<3/4$, $E^n_i\big(\|\widehat{c}^{*n}_i-\widebar{c}^{n}_i\|^4\big)\le K\Delta_n^{2\kappa-1}$, so
\begin{equation*}
	\E\Big(\sup_{i\in I_n}\|\widehat{c}^{*n}_i-\widebar{c}^n_i\|^4\Big) \le K\Delta_n^{3\kappa-2}
\end{equation*}
hence by (\ref{tuning}) and Markov's inequality
\begin{equation}\label{chat-c.uniform}
	\sup_{i\in I_n} \|\widehat{c}^n_i-\widebar{c}^n_i\| = o_p(1)
\end{equation}
According to (\ref{SA-v}) and convexity, $\widebar{c}^n_i\in\mathcal{S}$. By (\ref{chat-c.uniform}), $\widehat{c}^n_i\in\mathcal{S}^\epsilon$ when $n$ is sufficiently large. Therefore by (3.8) in the main article, for asymptotic analysis we can assume
\begin{equation}\label{g.cond.strong}
	\|\partial^rg(\widehat{c}^n_i)\| \le K, \hspace{5mm} \forall r=0,1,2,3,\, \forall i\in I^n_t
\end{equation}

Through an almost identical argument for Lemma 4.4 in \cite{jr13},
\begin{equation}\label{v0}
	\widebar{S}^{n,0}\overset{u.c.p.}{\longrightarrow}0
\end{equation}

Define function $g_n$ on $\mathcal{S}^+_d\times\mathcal{S}^+_d$ as $g_n(x,z) = g(x) - (2k_n\Delta_n^{1/2})^{-1}\lambda(x,z)$. Based on (\ref{g.cond.strong}),
\begin{multline*}
	\|g_n(x,z)-g_n(y,z)\| \le K\|x-y\|\\
	+ K(k_n\Delta_n^{1/2})^{-1}\|x-y\|\left(\|x\|^2+\|z\|^2 + \|x-y\|^2+\|z\|\|x-y\|\right)
\end{multline*}
so $\|g_n(x,z)-g_n(y,z)\| \le K\|x-y\|$ when $n$ is sufficiently large. By Lemma \ref{est.chat-chat*}
\begin{multline*}
	\E\left(\sup_{u\in[0,t]}\big\|\widebar{S}^{n,1}_u\big\|\right) \le k_n\Delta_n^{3/4}\sum_{i=0}^{N^n_t-1}\left\|g_n(\widehat{c}^n_{ik_n},\widehat{\rr}^n_{ik_n})-g_n(\widehat{c}^{*n}_{ik_n},\widehat{\rr}^n_{ik_n})\right\|\\
	\le Kt\left(a_n\Delta_n^{1/4-(\rho-1/4)\nu-(1-2\rho)} + \Delta_n^{1/4}\right)
\end{multline*}
Since $\rho>\frac{3-\nu}{4(2-\nu)}$, $1/4-(\rho-1/4)\nu-(1-2\rho)>0$, we have the following lemma:
\begin{lem}\label{lemma.v1}
	Assume Assumption A-$\nu$, A-$\rr$, (\ref{tuning}), (\ref{g.cond.strong}) then
	$$\widebar{S}^{n,1}\overset{u.c.p.}{\longrightarrow}0$$
\end{lem}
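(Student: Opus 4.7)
The strategy is to absorb the bias-correction term into a deterministic modification of $g$. Define $g_n(x,z) := g(x) - (2k_n\Delta_n^{1/2})^{-1}\lambda(x,z)$ so that each summand in $\widebar{S}^{n,1}_t$ becomes $g_n(\widehat{c}^n_{ik_n},\widehat{\gamma}^n_{ik_n}) - g_n(\widehat{c}^{*n}_{ik_n},\widehat{\gamma}^n_{ik_n})$. Since $\lambda$ involves only derivatives of $g$ up to order two, and these are uniformly bounded on the enlargement $\mathcal{S}^\epsilon$ by (\ref{g.cond.strong}) — which is the relevant domain thanks to (\ref{chat-c.uniform}) — I would first verify $\|g_n(x,z)-g_n(y,z)\| \leq K\|x-y\|$ uniformly in $n$, once $n$ is large enough that the prefactor $(k_n\Delta_n^{1/2})^{-1}$ dominates the polynomial growth in $(x,z,y)$ coming from $\lambda$.

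Given this Lipschitz estimate, I would apply Lemma \ref{est.chat-chat*} at $q=1$ to control $\E\|\widehat{c}^n_{ik_n} - \widehat{c}^{*n}_{ik_n}\|$ summand-by-summand. Summing over the $N^n_t \leq t/(k_n\Delta_n)$ disjoint blocks, using the prefactor $k_n\Delta_n^{3/4}$, and passing to the supremum via monotonicity of the partial sums in $t$ yields
\begin{equation*}
\E\Big(\sup_{u\leq t}\big|\widebar{S}^{n,1}_u\big|\Big) \leq Kt\bigl(a_n\Delta_n^{1/4-(\rho-1/4)\nu-(1-2\rho)} + \Delta_n^{1/4}\bigr),
\end{equation*}
after which Markov's inequality delivers the u.c.p.\ convergence.

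The delicate step — and the one that pins down the lower bound on $\rho$ in (\ref{tuning}) — is proving positivity of the exponent $1/4 - (\rho-1/4)\nu - (1-2\rho)$. Rearranging, this is equivalent to $\rho(2-\nu) > (3-\nu)/4$, i.e.\ $\rho > (3-\nu)/[4(2-\nu)]$. Using $\kappa<3/4$ together with (\ref{tuning}) gives $\rho \geq 1/4 + (1-\kappa)/(2-\nu) > 1/4 + 1/[4(2-\nu)] = (3-\nu)/[4(2-\nu)]$, closing the argument. The Lipschitz verification itself is routine bookkeeping; the real work was done upstream in calibrating the truncation exponent $\rho$ against the jump activity index $\nu$ and the spot-window exponent $\kappa$, and in securing the moment bound of Lemma \ref{est.chat-chat*}, which isolates the interaction between jump truncation and pre-averaging noise offset.
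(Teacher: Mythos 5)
Your proposal is correct and follows essentially the same route as the paper: define $g_n(x,z) = g(x) - (2k_n\Delta_n^{1/2})^{-1}\lambda(x,z)$, use (\ref{g.cond.strong}) to get the uniform Lipschitz bound $\|g_n(x,z)-g_n(y,z)\|\le K\|x-y\|$ for large $n$, control each summand via Lemma \ref{est.chat-chat*} with $q=1$, and conclude from positivity of the exponent $1/4-(\rho-1/4)\nu-(1-2\rho)$, which, as you note, follows from $\rho>\tfrac{3-\nu}{4(2-\nu)}$ under (\ref{tuning}). Your explicit derivation of that inequality from $\kappa<3/4$ simply spells out what the paper states directly.
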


Given $e^n_i\in\R^{d\times d}$, consider the process
\begin{equation*}
	\widebar{S}^{*n}_t = k_n\Delta_n^{3/4} \sum_{i=0}^{N^n_t-1} \sum^d_{j,k=1}\partial_{jk}g(c^n_{ik_n})\times e^{n,jk}_{ik_n}
\end{equation*}
suppose $e^n_i$ satisfies
\begin{equation}\label{xi.cond}
	\left\{\begin{array}{lcl}
	\left\| E^n_i\left(e^n_{i}\right) \right\| &\le& K\Delta_n^{1/4} a_n \\
	E^n_i\left( \|e^n_{i}\|^2\right) &\le& K(k_n\Delta_n^{1/2})^{-1} b_n
\end{array}\right.
\end{equation}
where $a_n,b_n\to0$. Since $\partial g$ is bounded by (\ref{SA-v}),
\begin{multline*}
	\E\left(\sup_{u\in[0,t]} \big\|\widebar{S}^{*n}_u\big\|\right) \le Kk_n\Delta_n^{3/4}\sum_{i=0}^{N^n_t-1}\E\left(\left\|E^n_{ik_n}\left(e^n_{ik_n}\right)\right\|\right) \\ 
	+ Kk_n\Delta_n^{3/4}\E\left(\sup_{s\in[0,t]}\left\|\sum_{i=0}^{N^n_s-1}\left[ e^n_{ik_n} - E^n_{ik_n}\left(e^n_{ik_n}\right) \right]\right\|\right)
\end{multline*}
by Lemma \ref{Doob.max.ineq},
\begin{equation*}
	\E\left(\sup_{s\in[0,t]}\left\|\sum_{i=0}^{N^n_s-1}\left[ e^n_{ik_n} - E^n_{ik_n}\left(e^n_{ik_n}\right) \right]\right\|\right)
	\le K\left(\sum_{i=0}^{N^n_t-1}\E\left(\|e^n_{ik_n}\|^2\right)\right)^{1/2}
\end{equation*}
note that $k_n\Delta_nN^n_t\asymp t$, we have
\begin{equation*}
	\E\left(\sup_{u\in[0,t]} \big\|\widebar{S}^{*n}_u\big\|\right) \le K\left(ta_n + \sqrt{tb_n}\right)\to0
\end{equation*}
To show the asymptotic negligibility of $\widebar{S}(p)^{n,2}$, we need to show $e_i$ satisfies (\ref{xi.cond}) in each of the following 4 cases:
\begin{itemize}
	\item[\textcircled{1}] when $e^n_i=\xi^{n,0}_i$, 
	by (\ref{est.xi(0)n}), $a_n=k_n\Delta_n^{3/4},\,b_n=(k_n\Delta_n^{3/4})^2$;
	\item[\textcircled{2}] when $e^n_i=\xi^{n,1}_i$,
	by (\ref{est.xi(1)n}), $a_n=\Delta_n^{1/4}$, $b_n=k_n\Delta_n$;
	\item[\textcircled{3}] when $e^n_i=\xi^{n,2}_i$,
	by (\ref{est.xi(2)n}), $a_n=\Delta_n^{3/4}$, $b_n=\Delta_n^{1/2}$;
	\item[\textcircled{4}] when $e^n_i=N(p)^n_i$, by (\ref{est.M(p).N(p)}), $a_n=p^{-1}\Delta_n^{1/4}$, $b_n=p^{-1}$.
\end{itemize}
Hence we have the following lemma:
\begin{lem}\label{lemma.v2}
	Assume Assumption A-$\nu$, A-$\gamma$, (\ref{tuning}), (\ref{g.cond.strong}), and let $p\asymp\Delta_n^{-1/12}$, then
	$$\widebar{S}(p)^{n,2}\overset{u.c.p.}{\longrightarrow}0$$
\end{lem}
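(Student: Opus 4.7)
The plan is to decompose $\widebar{S}(p)^{n,2}$ into four pieces, one for each of the contributions $\xi^{n,0}$, $\xi^{n,1}$, $\xi^{n,2}$, and $N(p)^n$, and to dispose of each piece through a single template. For any adapted array $e^n_i$ taking the role of one of these contributions, consider the generic process
\begin{equation*}
	\widebar{S}^{*n}_t \;=\; k_n\Delta_n^{3/4}\sum_{i=0}^{N^n_t-1}\sum^d_{j,k=1}\partial_{jk}g(c^n_{ik_n})\,e^{n,jk}_{ik_n}
\end{equation*}
and split $e^n_i = E^n_i(e^n_i) + [e^n_i - E^n_i(e^n_i)]$ into a predictable drift and a martingale difference residual. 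Under (\ref{g.cond.strong}) the derivative factor is uniformly bounded, so the drift contribution is at most $Kk_n\Delta_n^{3/4}\sum_{i=0}^{N^n_t-1}\|E^n_{ik_n}(e^n_{ik_n})\|$, while Lemma~\ref{Doob.max.ineq} controls the supremum of the martingale part by $Kk_n\Delta_n^{3/4}\bigl(\sum_i \E\|e^n_{ik_n}\|^2\bigr)^{1/2}$. Using $N^n_t\asymp t/(k_n\Delta_n)$, a direct computation shows that the drift piece is $O_p(a_n)$ and the martingale piece is $O_p(\sqrt{b_n})$ whenever $e^n_i$ satisfies (\ref{xi.cond}) with $a_n, b_n\to0$.

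The four individual cases are then handled by reading off $a_n$ and $b_n$ from estimates already established in Appendix~\ref{apdx:contin}. For $e^n_i=\xi^{n,0}_i$, the bounds (\ref{est.xi(0)n}) supply $a_n=k_n\Delta_n^{3/4}$ and effectively $b_n\asymp(k_n\Delta_n^{3/4})^2$; both vanish exactly because $\kappa<3/4$ under (\ref{tuning}). For $e^n_i=\xi^{n,1}_i$, (\ref{est.xi(1)n}) yields $a_n=\Delta_n^{1/4}$ and $b_n=k_n\Delta_n$. For $e^n_i=\xi^{n,2}_i$, (\ref{est.xi(2)n}) produces $a_n=\Delta_n^{3/4}$ and $b_n=\Delta_n^{1/2}$. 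Finally, for $e^n_i=N(p)^n_i$, the estimates (\ref{est.M(p).N(p)}) give $a_n = p^{-1}\Delta_n^{1/4}$ and $b_n = p^{-1}$.

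The main obstacle is the last case, where both $a_n$ and $b_n$ carry a factor of $p^{-1}$ coming from the small-block parameter in the big-block/small-block decomposition (\ref{errors.chat*}). Thus $p$ must diverge to make $N(p)^n$ negligible, but simultaneously $pl_n$ must remain $o(k_n)$ so that the counts $m(n,p)=\lfloor(k_n-l_n)/((p+1)l_n)\rfloor$ underlying $M(p)^n$ and $N(p)^n$ are well defined. The prescribed choice $p\asymp\Delta_n^{-1/12}$ is precisely calibrated to achieve this balance under the tuning (\ref{tuning}): since $l_n\asymp\Delta_n^{-1/2}$ and $k_n\asymp\Delta_n^{-\kappa}$ with $\kappa>2/3$, one has $pl_n\asymp\Delta_n^{-7/12}\ll\Delta_n^{-\kappa}\asymp k_n$, while $p\to\infty$ drives $a_n, b_n\to0$. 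With all four $(a_n,b_n)$ pairs vanishing, the template yields the uniform convergence and completes the lemma.
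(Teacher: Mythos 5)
Your proposal matches the paper's own argument: the same generic process $\widebar{S}^{*n}$ with the drift/martingale split, the bound via Lemma \ref{Doob.max.ineq}, the criterion (\ref{xi.cond}) yielding $\E\big(\sup_{u\le t}\|\widebar{S}^{*n}_u\|\big)\le K\big(ta_n+\sqrt{tb_n}\big)$, and the same four cases with exactly the pairs $(a_n,b_n)$ read off from (\ref{est.xi(0)n}), (\ref{est.xi(1)n}), (\ref{est.xi(2)n}), (\ref{est.M(p).N(p)}). The extra remarks about $p\asymp\Delta_n^{-1/12}$ forcing $p\to\infty$ while $pl_n=o(k_n)$ are consistent with the tuning (\ref{tuning}) and do not change the argument, so the proof is correct and essentially identical to the paper's.
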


Let $\chi^n_i = \widehat{\rr}^n_i - \rr^n_i$. By (\ref{classic}), the choice of $m_n$ and Jensen's inequality
\begin{equation}\label{est.chi}
	E^n_i(\|\chi^n_i\|^q) \le K_q\Delta_n^{q/4},\, q=1,2
\end{equation}
Let $\iota^n_i=\lambda(\widehat{c}^{*n}_i,\widehat{\gamma}^n_i)-\lambda(c^n_i,\gamma^n_i)$, then by (4.2) and (4.3) in the main article
\begin{equation*}
	\|\iota^n_i\| \le K\left(\|\beta^n_i\| + \|\chi^n_i\| + \|\beta^n_i\|^2 + \|\beta^n_i\|\|\chi^n_i\| + \|\chi^n_i\|^2\right)
\end{equation*}
hence by Lemma \ref{est.beta}, (\ref{tuning}), (\ref{est.chi}),
\begin{equation}\label{est.eta}
	E^n_i(\|\iota^n_i\|) \le K k_n^{-1/2}\Delta_n^{-1/4}
\end{equation}

We can rewrite $\widebar{S}^{n,3}$ as
\[\widebar{S}^{n,3}=G^n+H^n\]
where
\begin{eqnarray*}
	G^n_t&=&k_n\Delta_n^{3/4}\sum_{i=0}^{N^n_t-1}\left[s^n_{ik_n}+\frac{1}{2k_n\Delta_n^{1/2}}\iota^n_{ik_n}+E^n_{ik_n}\left(v^n_{ik_n}\right)\right] \\
	H^n_t&=&k_n\Delta_n^{3/4}\sum_{i=0}^{N^n_t-1}\left[v^n_{ik_n}-E^n_{ik_n}\left(v^n_{ik_n}\right)\right]
\end{eqnarray*}
\begin{eqnarray*}
	s^n_i &=& g(c^n_i+\beta^n_i)-g(c^n_i)-\sum^d_{j,k=1}\partial_{jk}g(c^n_i)\beta_i^{n,jk} - \frac{1}{2}\sum^d_{j,k,l,m=1}\partial^2_{jk,lm}g(c^n_i)\beta_i^{n,jk}\beta_i^{n,lm}\\
	%
	%
	v^n_i&=&\frac{1}{2}\sum^d_{j,k,l,m=1}\partial^2_{jk,lm}g(c^n_i)\left[\beta_i^{n,jk}\beta_i^{n,lm} - (k_n\Delta_n^{1/2})^{-1}\Xi(c^n_i,\rr^n_i)^{jk,lm}\right]
\end{eqnarray*}

By Lemma \ref{est.beta}, (\ref{g.cond.strong}), (\ref{est.eta}), if we let $p\asymp\Delta_n^{-12}$, 
\begin{equation}\label{est.G}
	\E\left(\sup_{s\in[0,t]}\|G^n_s\|\right) \le Kt\left[k_n\Delta_n^{3/4} + (k_n\Delta_n^{2/3})^{-1}\right]
\end{equation}
and
\begin{multline*}
	E^n_i(\|v^n_i\|^2) \le K\sum^d_{j,k,l,m=1} E^n_i\left(\left|\beta_i^{n,jk}\beta_i^{n,lm} - \frac{1}{k_n\Delta_n^{1/2}}\Xi(c^n_i,\rr^n_i)^{jk,lm}\right|^2\right)\\
	\le K\left[k_n\Delta_n + (k_n\Delta_n^{1/2})^{-2}\right]
\end{multline*}
then Lemma \ref{Doob.max.ineq} implies
\begin{multline}\label{est.H}
	\E\left(\sup_{s\in[0,t]}\|H^n_s\|\right)\le Kk_n\Delta_n^{3/4}\left(\sum_{i=0}^{N^n_t-1}\E(\|v^n_{ik_n}\|^2)\right)^{1/2}\\
	\le K\sqrt{t}\left[k_n\Delta_n^{3/4} + (k_n\Delta_n^{1/2})^{-1/2}\right] 
\end{multline}

According to (\ref{est.G}) and (\ref{est.H}), we have the following lemma:
\begin{lem}\label{lemma.v3}
	Assume Assumption A-$\nu$, A-$\gamma$, (\ref{tuning}), (\ref{g.cond.strong}), then
	$$\widebar{S}^{n,3}\overset{u.c.p.}{\longrightarrow}0$$
\end{lem}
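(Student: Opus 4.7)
The plan is to split $\widebar{S}^{n,3}$ into a predictable-drift piece $G^n$ plus a martingale-difference-like piece $H^n$, and bound each in $L^1(\sup_{[0,t]})$. The key algebraic step is a third-order Taylor expansion of $g$ around $c^n_i$: defining $s^n_i$ as the cubic remainder, we have $g(\widehat{c}^{*n}_i) - g(c^n_i) - \sum \partial_{jk} g(c^n_i)\beta^{n,jk}_i = \tfrac{1}{2}\sum \partial^2_{jk,lm} g(c^n_i)\beta^{n,jk}_i\beta^{n,lm}_i + s^n_i$. Adding and subtracting the bias-correction anchor $\lambda(c^n_i,\gamma^n_i)/(2k_n\Delta_n^{1/2})$ turns the $i$-th summand in $\widebar{S}^{n,3}$ into $s^n_i - \iota^n_i/(2k_n\Delta_n^{1/2}) + v^n_i$, with $v^n_i$ the centered quadratic form. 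I would then split $v^n_i = E^n_i(v^n_i) + [v^n_i - E^n_i(v^n_i)]$, letting $G^n$ absorb $s^n_i$, the $\iota^n_i$ term, and $E^n_i(v^n_i)$, while $H^n$ collects only the martingale-difference tail.

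For $G^n$, I would apply $\|s^n_i\| \le K\|\beta^n_i\|^3$ with the third-moment bound $E^n_i(\|\beta^n_i\|^3) \le Kk_n\Delta_n$ from Lemma \ref{est.beta}; use \eqref{est.eta} to control $\iota^n_i$; and use the sharpened quadratic-form estimate of Lemma \ref{est.beta} with $p \asymp \Delta_n^{-1/12}$ to bound $E^n_i(v^n_i)$. Summing $N^n_t \asymp (k_n\Delta_n)^{-1}$ terms weighted by $k_n\Delta_n^{3/4}$ delivers \eqref{est.G}. For $H^n$, since $v^n_{ik_n} - E^n_{ik_n}(v^n_{ik_n})$ is an $(\F^n_{ik_n})$-martingale-difference, Lemma \ref{Doob.max.ineq} reduces the supremum bound to a sum of second moments, which Lemma \ref{est.beta} controls by $K[k_n\Delta_n + (k_n\Delta_n^{1/2})^{-2}]$; scaling gives \eqref{est.H}.

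To conclude u.c.p. convergence, I would verify the tuning window $\kappa \in (2/3, 3/4)$ of \eqref{tuning} makes every right-hand side vanish: $k_n\Delta_n^{3/4} = O(\Delta_n^{3/4-\kappa})$ forces $\kappa < 3/4$, $(k_n\Delta_n^{2/3})^{-1} = O(\Delta_n^{\kappa-2/3})$ forces $\kappa > 2/3$, and $(k_n\Delta_n^{1/2})^{-1/2} = O(\Delta_n^{(\kappa-1/2)/2})$ requires only $\kappa > 1/2$ (subsumed). Markov's inequality then yields $G^n, H^n \overset{u.c.p.}{\longrightarrow} 0$, whence $\widebar{S}^{n,3} \overset{u.c.p.}{\longrightarrow} 0$.

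The main obstacle is the bookkeeping around the auxiliary block-size parameter $p$: Lemma \ref{est.beta}'s quadratic-form estimate carries a residual of order $p^{-1}(k_n\Delta_n^{1/2})^{-1}$, so the choice $p \asymp \Delta_n^{-1/12}$ must simultaneously be small enough to balance this residual against the $k_n\Delta_n$ drift contribution from $E^n_i(v^n_i)$, and large enough not to inflate the $M(p)^n_i$ and $N(p)^n_i$ remainders already absorbed in Lemmas \ref{lemma.v1}--\ref{lemma.v2}. Once the exponent $1/12$ is seen to satisfy both sides of this squeeze, the remaining moment calculations are routine.
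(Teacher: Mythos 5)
Your proposal matches the paper's own argument essentially line for line: the same $G^n+H^n$ split with $s^n_i$ the cubic Taylor remainder, $\iota^n_i$ the plug-in error in $\lambda$, and $v^n_i$ the centered quadratic form, bounded via Lemma \ref{est.beta}, (\ref{est.eta}) and Lemma \ref{Doob.max.ineq} with $p\asymp\Delta_n^{-1/12}$, and the same tuning check $\kappa\in(2/3,3/4)$ yielding (\ref{est.G})--(\ref{est.H}). This is correct and is the paper's proof in all but notation.
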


\subsection{Stable convergence in law to a continuous It\^o semimartingale}
Recall (\ref{def.betavars}), let
\begin{equation}\label{def.Lambda}
	\Lambda(p)^n_{i,h} = \partial g(c^n_{ik_n})\odot\zeta(p)^n_{ik_n,h}	
\end{equation}
where $\odot$ denotes Hadamard product, i.e., $\Lambda(p)^{n,jk}_{i,h}=\partial_{jk}g(c^n_{ik_n})\odot\zeta(p)^{n,jk}_{ik_n,h}$. We can write 
\begin{equation*}
	\widebar{S}(p)^{n,4}_t=\frac{k_n}{k_n-l_n}\sum^d_{j,k=1}\Delta_n^{-1/4}\sum^{N^n_t-1}_{i=0}\sum_{h=0}^{m(n,p)-1} \Lambda(p)^{n,jk}_{i,h}
\end{equation*}

Let $\mathcal{H}(p)^n_{i,h}=\F^n_{ik_n+a(n,p,h)}$, by Lemma \ref{est.zetaY},
\begin{equation*}
	\Delta_n^{-1/2}\sum^{N^n_t-1}_{i=0}\sum_{h=0}^{m(n,p)-1}\left\|E\big[\zeta(p)^n_{ik_n,h}|\mathcal{H}(p)^n_{i,h}\big]\right\|^2 \le Ktp\Delta_n
\end{equation*}

Let $N$ be a bounded martingale orthogonal to $W$ or $N=W^l$ for some $l=1,\cdots,d'$, and $\Delta N(p)^n_{i,h}=N^n_{ik_n+b(n,p,h)}-N^n_{ik_n+a(n,p,h)}$. The following four statements about convergence in probability for any indices $j,k,l,m$ can verify the conditions of theorem IX.7.28 in \cite{js03}:
\begin{eqnarray}
	\Delta_n^{-1/4}\sum_{i=0}^{N^n_t-1}\sum_{h=0}^{m(n,p)-1}\left\|E\left[\Lambda(p)^n_{i,h}|\mathcal{H}(p)^n_{i,h}\right]\right\|&\overset{\mathbb{P}}{\longrightarrow}&0 \label{S4.c1}\\
	\Delta_n^{-1}\sum_{i=0}^{N^n_t-1}\sum_{h=0}^{m(n,p)-1}E\left[\left\|\Lambda(p)^n_{i,h}\right\|^4|\mathcal{H}(p)^n_{i,h}\right]&\overset{\mathbb{P}}{\longrightarrow}&0 \label{S4.c2}\\
	\Delta_n^{-1/4}\sum_{i=0}^{N^n_t-1}\sum_{h=0}^{m(n,p)-1}\left\|E\big[\Lambda(p)^n_{i,h}\Delta N(p)^n_{i,h}|\mathcal{H}(p)^n_{i,h}\big]\right\| &\overset{\mathbb{P}}{\longrightarrow}&0 \label{S4.c3}
\end{eqnarray}
\begin{multline}\label{S4.c4}
	\Delta_n^{-1/2}\sum_{i=0}^{N^n_t-1}\sum_{h=0}^{m(n,p)-1} E\left[\Lambda(p)^{n,jk}_{i,h}\,\Lambda(p)^{n,lm}_{i,h}|\mathcal{H}(p)^n_{i,h}\right]\\
	\overset{\mathbb{P}}{\longrightarrow} \int_0^t\partial_{jk}g(c_s)\partial_{lm}g(c_s)^\T\,\Xi(c_s, \gamma_s; p)^{jk,lm}\,\mathrm{d}s
\end{multline}

Under (\ref{g.cond.strong}), one can verify (\ref{S4.c1}), (\ref{S4.c2}) by the second and third claims of Lemma \ref{est.zetaY}, respectively. The same argument as that for (5.58) in \cite{j09} leads to (\ref{S4.c3}). By the last claim of Lemma \ref{est.zetaY}, the left-hand side of (\ref{S4.c4}) equals
\begin{multline*}
	\sum_{i=0}^{N^n_t-1}\sum_{h=0}^{m(n,p)-1}\partial_{jk}g(c^n_{ik_n})\partial_{lm}g(c^n_{ik_n})^\T\times\\ \Xi\big(c^n_{ik_n+a(n,p,h)}, \rr^n_{ik_n+a(n,p,h)}; p\big)^{jk,lm}(p+1)l_n\Delta_n + tp\,O_p(\Delta_n^{1/4})
\end{multline*}
then (\ref{S4.c4}) is verified by Riemann summation. By Theorem IX.7.28 in \cite{js03} we have the following lemma:

\begin{lem}\label{lemma.v4}
	Assume Assumption A-$\nu$, A-$\gamma$, (\ref{tuning}), (\ref{g.cond.strong}), then for $\forall p\in\mathbb{N}^+$,
	\[\widebar{S}(p)^{n,4}\overset{\mathcal{L}-s(f)}{\longrightarrow}Z(p)\]
	where $Z(p)$ is a process defined on an extension of the space  $\left(\Omega,\mathcal{F},(\mathcal{F}_t),\mathbb{P}\right)$,
	such that conditioning on $\mathcal{F}$ it is a  mean-0 continuous It\^o martingale with variance
	\[\widetilde{E}[Z(p)Z(p)^\T|\F]=\int_0^t\sum^d_{j,k,l,m=1}\partial_{jk}g(c_s)\partial_{lm}g(c_s)^\T\,\Xi(c_s, \gamma_s; p)^{jk,lm}\ds s\]
	where $\widetilde{E}$ is the conditional expectation operator on the extended probability space and $\Xi(x,z;p)$ is defined in (\ref{def.Xi(p)}).
\end{lem}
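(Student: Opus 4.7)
The plan is to apply Theorem IX.7.28 of \cite{js03}, the stable CLT for triangular arrays of locally square-integrable martingale differences converging to a continuous It\^o process. I would work with the array $\{\Delta_n^{-1/4}\Lambda(p)^n_{i,h}\colon 0\le i\le N^n_t-1,\, 0\le h\le m(n,p)-1\}$ adapted to the filtration $\mathcal{H}(p)^n_{i,h}=\F^n_{ik_n+a(n,p,h)}$; the $\Delta_n^{-1/4}$ scaling is fixed by matching the block prefactor $k_n\Delta_n^{3/4}$ in $\widebar{S}(p)^{n,4}$ against the per-summand order $\theta\Delta_n$ of $\zeta(p)^n_{i,h}$, while the finite-sample factor $k_n/(k_n-l_n)\to 1$ under (\ref{tuning}) is asymptotically inert. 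Theorem IX.7.28 then reduces the lemma to verifying the four conditions (\ref{S4.c1})--(\ref{S4.c4}).

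Conditions (\ref{S4.c1}) and (\ref{S4.c2}) are routine given Lemma \ref{est.zetaY} and the uniform derivative bound (\ref{g.cond.strong}). The number of summands in the double sum is $\asymp p^{-1}\Delta_n^{-1/2}$, so the bound $\|E[\zeta(p)^n_{i,h}|\mathcal{H}(p)^n_{i,h}]\|\le Kp\Delta_n$ paired with the $\Delta_n^{-1/4}$ rescaling yields $O(\Delta_n^{1/4})\to 0$, and $E[\|\zeta(p)^n_{i,h}\|^4|\mathcal{H}(p)^n_{i,h}]\le Kp^2\Delta_n^2$ paired with the $\Delta_n^{-1}$ rescaling yields $O(p\Delta_n^{1/2})\to 0$. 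For condition (\ref{S4.c3}), I would follow the template of (5.58) in \cite{j09}: decompose $\zeta(p)^n_{i,h}$ into its Brownian pre-averaged part $\zeta(X,p)^n_{i,h}$ and the noise-driven remainder, then exploit stochastic-calculus orthogonality of $W$ and $N$ for the former and conditional independence of the noise from $(W,N)$ under Assumption \ref{A-r} for the latter, so that the conditional cross-covariations with $\Delta N(p)^n_{i,h}$ vanish to leading order.

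The crux is (\ref{S4.c4}), the convergence of the predictable quadratic variation. The last claim of Lemma \ref{est.zetaY} gives
\begin{equation*}
E^n_i[\zeta(p)^{n,jk}_i\zeta(p)^{n,lm}_i] = (p+1)\theta\Delta_n\,\Xi(c^n_i,\gamma^n_i;p)^{jk,lm} + O(\Delta_n^{5/4}).
\end{equation*}
Multiplying by the bounded factor $\partial_{jk}g(c^n_{ik_n})\partial_{lm}g(c^n_{ik_n})^\T$ and summing over $h\in\{0,\ldots,m(n,p)-1\}$ produces the per-block total $m(n,p)(p+1)\theta\Delta_n\cdot\partial g\,\partial g^\T\cdot\Xi + O(m(n,p)\Delta_n^{5/4})$. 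Since $m(n,p)(p+1)\theta\Delta_n = (k_n-l_n)\theta\Delta_n/l_n\approx k_n\Delta_n^{3/2}$, the outer $\Delta_n^{-1/2}$ prefactor converts the expression into a Riemann sum of mesh $k_n\Delta_n\to 0$; continuity of $(c,\gamma)$ under Assumptions \ref{A-v}, \ref{A-r} together with smoothness of $(\partial g,\Xi)$ on $\mathcal{S}$ then permits passage to the target integral $\int_0^t\partial_{jk}g(c_s)\partial_{lm}g(c_s)^\T\Xi(c_s,\gamma_s;p)^{jk,lm}\ds s$, while the aggregate remainder is $O(p^{-1}\Delta_n^{1/4})\to 0$. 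With (\ref{S4.c1})--(\ref{S4.c4}) in hand, Theorem IX.7.28 delivers the stable functional convergence $\widebar{S}(p)^{n,4}\overset{\mathcal{L}-s(f)}{\longrightarrow}Z(p)$ with the prescribed conditional covariance.
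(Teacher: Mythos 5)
Your proposal is correct and follows essentially the same route as the paper: the same martingale-difference array $\Delta_n^{-1/4}\Lambda(p)^n_{i,h}$ with filtration $\mathcal{H}(p)^n_{i,h}$, verification of (\ref{S4.c1})--(\ref{S4.c2}) from Lemma \ref{est.zetaY} and (\ref{g.cond.strong}), the argument of (5.58) in \cite{j09} for (\ref{S4.c3}), the last claim of Lemma \ref{est.zetaY} plus Riemann summation for (\ref{S4.c4}), and Theorem IX.7.28 of \cite{js03} to conclude. The only cosmetic slip is invoking ``continuity'' of $(c,\gamma)$, which are merely c\`adl\`ag under Assumptions \ref{A-v}, \ref{A-r}; the Riemann-sum convergence used in (\ref{S4.c4}) holds for c\`adl\`ag paths, so nothing substantive changes.
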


By (\ref{v0}), Lemma \ref{lemma.v1}, \ref{lemma.v2}, \ref{lemma.v3}, \ref{lemma.v4}, and $\Xi(x,z;p)\to\Xi(x,z)$ as $p\to\infty$, we arrive at the asymptotic result in Theorem 1.

\section{Derivation for Theorem 2}\label{apdx:thm2}
As a counterpart to (\ref{est.Y.bars.hats}), we have under (\ref{tuning.psd})
\begin{equation}\label{est.Y.bars.psd}
\left\{\begin{array}{lcl}
	E^n_i\left(\big\|\widebar{Y}^{*n}_i\big\|^q\right) &\le& K_q\Delta_n^{q/2},\; \forall q\le8\\
	E^n_i\left(\big\|\widebar{Y}^n_i\big\|^q\right) &\le& K_q\Delta_n^{(q/2)\wedge[1/2-\delta+(1/2+\delta)q/2]},\; \forall q\le8\\
	E^n_i\Big[\Big(\frac{\|\widebar{J}^n_i\|}{\Delta_n^w}\wedge1\Big)^q\Big] &\le& K_q\Delta_n^{[1/2-\delta-(w-1/4-\delta/2)\nu]\times[1\wedge(q/\nu)]}\,a_n,\; \forall q\le8
\end{array}\right.
\end{equation}
for some $a_n\to0$. It follows that
\begin{equation*}
	\big\|\widebar{Y}^n_i\cdot\widebar{Y}^{n,\mathrm{T}}_i\1{\|\widebar{Y}^n_i\|\le\nu_n} - \widebar{Y}^{*n}_i\cdot\widebar{Y}^{*n,\mathrm{T}}_i\big\| \le \eta^{n,1}_i + \eta^{n,3}_i
\end{equation*}
where $\eta^{n,1}_i$ and $\eta^{n,3}_i$ are defined in (\ref{def.etas}).

Through a similar argument to that in Section \ref{apdx:jmp}, for $1\le q\le4$ there is a sequence $a_n\to0$ such that under (\ref{tuning.psd}),
\begin{eqnarray*}
	E^n_i\big[(\eta^{n,1}_i)^q\big] &\le& K_q\big[\Delta_n^{2q} + a_n\Delta_n^{2q\rho+1/2-\delta-(\rho-1/4-\delta/2)\nu}\big] \\
	E^n_i\big[(\eta^{n,3}_i)^q\big] &\le& K_{q,q'}\Delta_n^{q+q'(1-2\rho)/2},\; q'>1/(1-2\rho)
\end{eqnarray*}
Note $\|\widetilde{c}^n_i-\widetilde{c}^{*n}_i\| \le [(k_n-l_n)\Delta_n]^{-1}\sum_{h=1}^{k_n-l_n+1}\big[\eta^{n,1}_{i+h}+\eta^{n,3}_{i+h}\big]$, then
\begin{equation}\label{est.chat-chat*.psd}
	E^n_i\left(\|\widetilde{c}^n_i-\widetilde{c}^{*n}_i\|^q\right) \le K_q\left(a_n\Delta_n^{1/2-\delta-(\rho-1/4-\delta/2)\nu-(1-2\rho)q} + \Delta_n^q\right)
\end{equation}

Define the following quantities
\begin{eqnarray*}
	\xi'^{n,2}_i &=& \frac{1}{(k_n-l_n)\Delta_n}\sum_{h=1}^{k_n-l_n+1}\Gamma^n_{i+h} \\
	\beta'^n_i   &=& \xi^{n,0}_i + \xi^{n,1}_i + \xi'^{n,2}_i + N(p)^n_i + M(p)^n_i
\end{eqnarray*}
where $\xi^{n,0}_i,\, \xi^{n,1}_i,\, N(p)^n_i,\, M(p)^n_i$ are defined in (\ref{errors.chat*}). Note $\widetilde{c}^{*n}_i-c^n_i=\beta'^n_i$.

Under the tuning (\ref{tuning.psd}),
\begin{equation}\label{est.xi(1)n.psd}
\left\{\begin{array}{lcl}
	\left\| E\left(\xi^{n,1}_i|\F^{(0),n}_i\right) \right\| &\le& K\Delta_n^{1/2-\delta}\\
	E\left(\|\xi^{n,1}_i\|^q|\F^{(0),n}_i\right) &\le& K_q\,\Delta_n^{[(q/2)\wedge1](1/2-\delta)},\; q\in\mathbb{N}^+
\end{array}\right.
\end{equation}
By (3.1) and Assumption A-$\rr$ in the main article,
\begin{equation}\label{est.xi(2)n.psd}
\begin{array}{l}
	\left\| E^n_i\big(\xi'^{n,2}_i\big) \right\| \le K\Delta_n^{2\delta}\\
	E^n_i\big(\|\xi'^{n,2}_i\|^q\big) \le K_q\,\Delta_n^{2\delta q}
\end{array}
\end{equation}

Based on the same derivations in Section \ref{apdx:contin} that lead to Lemma \ref{est.zetaY}, we know
\begin{eqnarray}\label{est.N(p).psd}
	\left\|E^n_i\left[N(p)^n_i\right] \right\| &\le& K\big(p^{-1}\Delta_n^{1/2-\delta} + pk_n^{-1}\Delta_n^{-2\delta}\big)\nonumber\\
	E^n_i\left[\|N(p)^n_i\|^q\right] &\le&
	\left\{\begin{array}{ll}
		K_q\,p^{-q/2}\big(k_n\Delta_n^{1/2+\delta}\big)^{-q/2}, & q=1,2,4\\
		K p^{-1}\big(k_n\Delta_n^{1/2+\delta}\big)^{-2}, & q=3
	\end{array}\right.
\end{eqnarray}
and
\begin{eqnarray}\label{est.M(p).psd}
	\left\|E^n_i\left[M(p)^n_i\right] \right\| &\le& K\Delta_n^{1/2-\delta} \nonumber\\
	E^n_i\left[\|M(p)^n_i\|^q\right] &\le&
	\left\{\begin{array}{ll}
		K_q\big(k_n\Delta_n^{1/2+\delta}\big)^{-q/2}, & q=1,2,4\\
		K\big(k_n\Delta_n^{1/2+\delta}\big)^{-2}, & q=3
	\end{array}\right.
\end{eqnarray}

Since $\widetilde{c}^{*n}_i-\widebar{c}^{n}_i = \xi^{n,1}_i + \xi'^{n,2}_i + N(p)^n_i + M(p)^n_i$, based on (\ref{est.xi(1)n.psd}), (\ref{est.xi(2)n.psd}), (\ref{est.N(p).psd}), (\ref{est.M(p).psd}),
\begin{equation*}
\E\Big(\sup_{i\in I_n}\|\widetilde{c}^{*n}_i-\widebar{c}^n_i\|^4\Big) \le K\Delta_n^{3\kappa-2-2\delta}
\end{equation*}
then use $\|\widetilde{c}^n_i-\widebar{c}^{n}_i\|\le\|\widetilde{c}^n_i-\widetilde{c}^{*n}_i\|+\|\widetilde{c}^{*n}_i-\widebar{c}^{n}_i\|$ and (\ref{est.chat-chat*.psd}) with $q=1$, under (\ref{tuning.psd})
\begin{equation*}
\sup_{i\in I_n} \|\widetilde{c}^n_i-\widebar{c}^n_i\| = o_p(1)
\end{equation*}
Therefore, just as explained at the beginning of Section \ref{apdx:asymneg}, when the p.s.d. plug-in $\widetilde{c}^n_i$ and the tuning (\ref{tuning.psd}) are used, we can still assume (\ref{g.cond.strong}).

We know by combining (\ref{est.xi(0)n}), (\ref{est.xi(1)n.psd}), (\ref{est.xi(2)n.psd}), (\ref{est.N(p).psd}) and (\ref{est.M(p).psd}), 
\begin{eqnarray}\label{est.beta.psd}
	\left\|E^n_i\left(\beta'^n_i\right) \right\| &\le& K\big(k_n\Delta_n + \Delta_n^{2\delta}\big) \nonumber\\
	E^n_i\left(\|\beta'^n_i\|^q\right) &\le&
\left\{\begin{array}{ll}
	K_q \big[ (k_n\Delta_n)^{(q/2)\wedge1} + \Delta_n^{[(q/2)\wedge1](1/2-\delta)} & \\
	\hspace{30mm} + \Delta^{2\delta q} + \big(k_n\Delta_n^{1/2+\delta}\big)^{-q/2} \big], & q=1,2,4 \\
	K \big[ k_n\Delta_n + \Delta_n^{1/2-\delta} + \big(k_n\Delta_n^{1/2+\delta}\big)^{-2} \big], & q=3
\end{array}\right.
\end{eqnarray}

Now we study the quadratic form of $\beta'^n_i$. Recall the definitions in (\ref{def.betavars}) and the estimates in Table \ref{ingredient.estimates}. Let $R(p)'^n_{i+v} = \sum_{h=v}^{v+pl_n-1}\Gamma^n_{i+h}$. Analogously when $l_n\asymp\Delta_n^{-(1/2+\delta)}$ we have the following estimates
\begin{table}[H]
	\centering
	\caption{Estimates of ingredients under (\ref{tuning.psd})}\label{ingredient.estimates.psd}
	\begin{tabular}{c|ll}
		scaling properties & $E(\|\cdot\|^2|\F^n_i)$   & $\|E(\cdot|\F^n_i)\|$\\
		\hline
		$R(p)'^n_{i+v}$    & $p^2\Delta_n^{1+2\delta}$ & $p\Delta_n^{1/2+\delta}$\\
		$D(p)^n_{i+v}$     & $p\Delta_n^{3/2-3\delta}$ & $p\Delta_n^{1-2\delta}$ \\
		$A(p)^n_{i+v}$     & $p^2\Delta_n^{2-2\delta}\big(p\Delta_n^{-1/2-\delta}+v\big)$ & $p\Delta_n^{3/2-\delta}\big(p\Delta_n^{-1/2-\delta}+v\big)$\\
		$\zeta(p)^n_{i,h}$ & $p\Delta_n^{1-2\delta}$   & $p^{3/2}\Delta_n^{1-2\delta}$\\
	\end{tabular}
\end{table}
Let
\begin{multline*}
	\alpha(p)'^n_{i,h} = R(p+1)'^n_{i+a(n,p,h)} + D(p+1)^n_{i+a(n,p,h)} + A(p+1)^n_{i+a(n,p,h)} + \zeta(p+1)^n_{i,h}
\end{multline*}
Given $j,k,l,m=1,\cdots,d$, by Table \ref{ingredient.estimates.psd} we have
\begin{equation*}
\left| E^n_i\big(\beta'^{n,jk}_i\beta'^{n,lm}_i\big) - (k_n\Delta_n^{1/2+\delta})^{-1}\Sigma(c^n_i)^{jk,lm} \right| = \sum_{r=1}^5\mu'^{n,r}_i + O_p\big( p(k_n^2\Delta_n^{1+2\delta})^{-1} \big)
\end{equation*}
where
\begin{eqnarray*}
	\mu'^{n,1}_i &=&
	\frac{1}{(k_n-l_n)^2\Delta_n^2}\sum_{h=0}^{m(n,p)-1}E^n_i\left(\left|\alpha(p)'^{n,jk}_{i,h}\alpha(p)'^{n,lm}_{i,h} - \zeta(p+1)^{n,jk}_{i,h}\zeta(p+1)^{n,lm}_{i,h}\right|\right) \\
	\mu'^{n,2}_i &=&
	\frac{1}{(k_n-l_n)^2\Delta_n^2}\sum_{h=0}^{m(n,p)-2}\sum_{h'=h+1}^{m(n,p)-1}\left|E^n_i\left[\alpha(p)'^{n,jk}_{i,h}\alpha(p)'^{n,lm}_{i,h'} + \alpha(p)'^{n,lm}_{i,h}\alpha(p)'^{n,jk}_{i,h'}\right]\right| \\
	\mu'^{n,3}_i &=& \frac{1}{(k_n-l_n)^2\Delta_n^2}\sum_{h=0}^{m(n,p)-1}E^n_i\left(\left|\zeta(p+1)^{n,jk}_{i,h}\zeta(p+1)^{n,lm}_{i,h}\right.\right.\\
	&&\hspace{30mm} \left.\left. - \Delta_n^{1/2-\delta}\frac{(p+1)\Phi_{00}-\Psi_{00}}{(p+2)\Phi_{00}}\,\Sigma\big(c^n_{i+a(n,p,h)}\big)^{jk,lm}(p+2)l_n\Delta_n\right|\right)
\end{eqnarray*}
\begin{eqnarray*}
	\mu'^{n,4}_i &=& \frac{[(p+1)\Phi_{00}-\Psi_{00}]\,l_n}{\Phi_{00}(k_n-l_n)^2\Delta_n^{1/2+\delta}}\sum_{h=0}^{m(n,p)-1}\left|E^n_i\left[\Sigma\big(c^n_{i+a(n,p,h)}\big)^{jk,lm} - \Sigma(c^n_i)^{jk,lm}\right]\right| \\
	\mu'^{n,5}_i &=& \frac{(p+1)\Phi_{00}-\Psi_{00}}{(p+1)\Phi_{00}}\left|\frac{(p+1)l_n}{(k_n-l_n)^2\Delta_n^{1/2+\delta}}\left\lfloor\frac{k_n}{(p+1)l_n}\right\rfloor-\frac{1}{k_n\Delta_n^{1/2+\delta}}\right|\cdot\left|\Sigma(c^n_i)^{jk,lm}\right|\\
	&&\hspace{44mm} + \frac{1}{k_n\Delta_n^{1/2+\delta}}\left|\frac{(p+1)\Phi_{00}-\Psi_{00}}{(p+1)\Phi_{00}}-1\right| \cdot\left|\Sigma(c^n_i)^{jk,lm}\right|
\end{eqnarray*}
Using Table \ref{ingredient.estimates.psd} to analyze $\mu'^{n,r}_i$ for $r=1,\cdots,5$ one by one, we get
\begin{equation}\label{est.beta2.psd}
	\left| E^n_i\big(\beta'^{n,jk}_i\beta'^{n,lm}_i\big) - (k_n\Delta_n^{1/2+\delta})^{-1}\Sigma(c^n_i)^{jk,lm} \right| \le K\big[k_n\Delta_n + (pk_n\Delta_n^{1/2+\delta})^{-1}\big]
\end{equation}

Define
\begin{eqnarray*}
	\lambda'(x) &=& \sum^d_{j,k,l,m=1}\partial^2_{jk,lm}g(x)\times\Sigma(x)^{jk,lm} \\
	g'_n(x)     &=& g(x) - (2k_n\Delta_n^{1/2+\delta})^{-1}\lambda'(x)
\end{eqnarray*}
We have a decomposition analogous to (\ref{decomp}):
\begin{equation}\label{decomp.psd}
	\Delta_n^{-1/4+\delta/2}\big[\widetilde{S}(g)^n-S(g)\big] = \check{S}^{n,0} + \check{S}^{n,1} + \check{S}(p)^{n,2} + \check{S}^{n,3} + \check{S}(p)^{n,4}
\end{equation}
where
\begin{eqnarray*}
	\check{S}^{n,0}_t&=&\Delta_n^{-1/4+\delta/2}\Bigg[\sum_{i=0}^{N^n_t-1}\int_{ik_n\Delta_n}^{(i+1)k_n\Delta_n}g(c^n_{ik_n})-g(c_s)\ds s-\int_{N^n_tk_n\Delta_n}^tg(c_s)\ds s\Bigg]\\
	\check{S}^{n,1}_t&=&k_n\Delta_n^{3/4+\delta/2}\sum_{i=0}^{N^n_t-1}\Big[ g'_n(\widetilde{c}^n_{ik_n}) - g'_n(\widetilde{c}^{*n}_{ik_n}) \Big]\\
	\check{S}(p)^{n,2}_t&=&k_n\Delta_n^{3/4+\delta/2}\sum_{i=0}^{N^n_t-1}\sum^d_{j,k=1}\partial_{jk}g(c^n_{ik_n})\left[\xi^{n,0}_i + \xi^{n,1}_i + \xi'^{n,2}_i+N(p)^{n,jk}_{ik_n}\right]
\end{eqnarray*}
\begin{eqnarray*}
	\check{S}^{n,3}_t&=&k_n\Delta_n^{3/4+\delta/2}\sum_{i=0}^{N^n_t-1}\Big[g(\widetilde{c}^{*n}_{ik_n})-g(c^n_{ik_n})-\sum^d_{j,k=1}\partial_{jk}g(c^n_{ik_n}) \beta'^{n,jk}_{ik_n}\\
	&&\hspace{66mm} - (2k_n\Delta_n^{1/2+\delta})^{-1}\lambda'(\widetilde{c}^{*n}_i)\Big]\\
	\check{S}(p)^{n,4}_t&=&k_n\Delta_n^{3/4+\delta/2}\sum_{i=0}^{N^n_t-1}\sum^d_{j,k=1}\partial_{jk}g(c^n_{ik_n})\times M(p)^{n,jk}_{ik_n}
\end{eqnarray*}

As in the proof of Theorem 1, by the argument for Lemma 4.4 in \cite{jr13},
\begin{equation}\label{v0.psd}
	\check{S}^{n,0}\overset{u.c.p.}{\longrightarrow}0
\end{equation}

According to (\ref{g.cond.strong}),
\begin{equation*}
	\|g'_n(x)-g'_n(y)\| \le K\|x-y\| + K(k_n\Delta_n^{1/2+\delta})^{-1}\|x-y\|\left(\|x\|^2 + \|x-y\|^2\right)
\end{equation*}
and for large $n$, $\|g'_n(x)-g'_n(y)\| \le K\|x-y\|$. By (\ref{est.chat-chat*.psd})
\begin{multline*}
	\E\left(\sup_{u\in[0,t]}\big\|\check{S}^{n,1}_u\big\|\right) \le Kk_n\Delta_n^{3/4+\delta/2}\sum_{i=0}^{N^n_t-1}\left\|\widetilde{c}^n_{ik_n} - \widetilde{c}^{*n}_{ik_n}\right\|\\
	\le Kt\left[a_n\Delta_n^{(2-\nu)\rho + (1+2\delta)\nu/4 - (3/4+\delta/2)} + \Delta_n^{3/4+\delta/2}\right]
\end{multline*}
Because of (\ref{tuning.psd}), $(2-\nu)\rho + (1+2\delta)\nu/4 - (3/4+\delta/2) > 0$, we have
\begin{equation}\label{v1.psd}
	\check{S}^{n,1}\overset{u.c.p.}{\longrightarrow}0
\end{equation}

Given $e^n_i\in\R^{d\times d}$, consider the process
\begin{equation*}
	\check{S}^{*n}_t = k_n\Delta_n^{3/4+\delta/2} \sum_{i=0}^{N^n_t-1} \sum^d_{j,k=1}\partial_{jk}g(c^n_{ik_n})\times e^{n,jk}_{ik_n}
\end{equation*}
Assume the following holds
\begin{equation}\label{xi.cond.psd}
\left\{\begin{array}{lcl}
	\left\| E^n_i\left(e^n_{i}\right) \right\| &\le& K\Delta_n^{1/4-\delta/2} a_n \\
	E^n_i\left( \|e^n_{i}\|^2\right) &\le& K(k_n\Delta_n^{1/2+\delta})^{-1} b_n
\end{array}\right.
\end{equation}
where $a_n,b_n\to0$. Since $\partial g$ is bounded by (\ref{SA-v}),
\begin{multline*}
	\E\left(\sup_{u\in[0,t]} \big\|\check{S}^{*n}_u\big\|\right) \le Kk_n\Delta_n^{3/4+\delta/2}\sum_{i=0}^{N^n_t-1}\E\left(\left\|E^n_{ik_n}\left(e^n_{ik_n}\right)\right\|\right) \\ 
	+ Kk_n\Delta_n^{3/4+\delta/2}\E\left(\sup_{s\in[0,t]}\left\|\sum_{i=0}^{N^n_s-1}\left[ e^n_{ik_n} - E^n_{ik_n}\left(e^n_{ik_n}\right) \right]\right\|\right)
\end{multline*}
note that $k_n\Delta_nN^n_t\asymp t$, by Lemma \ref{Doob.max.ineq}, $\E\left(\sup_{u\in[0,t]} \big\|\check{S}^{*n}_u\big\|\right) \le K\left(ta_n + \sqrt{tb_n}\right)\to0$.

We consider the following 4 cases,
\begin{itemize}
\item[\textcircled{1}] when $e^n_i=\xi^{n,0}_i$: by (\ref{est.xi(0)n}), (\ref{xi.cond.psd}) is satisfied with $a_n=k_n\Delta_n^{3/4+\delta/2},\,b_n=(k_n\Delta_n^{3/4+\delta/2})^2$;
\item[\textcircled{2}] when $e^n_i=\xi^{n,1}_i$: by (\ref{est.xi(1)n.psd}), (\ref{xi.cond.psd}) is satisfied with $a_n=\Delta_n^{1/4-\delta/2}$, $b_n=k_n\Delta_n$;
\item[\textcircled{3}] when $e^n_i=\xi'^{n,2}_i$: by (\ref{est.xi(2)n.psd}), (\ref{xi.cond.psd}) is satisfied with $a_n=\Delta_n^{5/(2\delta)-1/4}$, $b_n=k_n\Delta_n^{1/2+5\delta}$;
\item[\textcircled{4}] when $e^n_i=N(p)^n_i$: by (\ref{est.N(p).psd}), (\ref{xi.cond.psd}) is satisfied with $a_n=p^{-1}\Delta_n^{1/4-\delta/2}$, $b_n=p^{-1}$.
\end{itemize}
Therefore, provided we abide by the tuning choice (\ref{tuning.psd}) and let $p\asymp\Delta_n^{-1/12+\delta/6}$, we have
\begin{equation}\label{v2.psd}
	\check{S}(p)^{n,2}\overset{u.c.p.}{\longrightarrow}0
\end{equation}

Let $\iota'^n_i = \lambda'(\widehat{c}^{*n}_i) - \lambda'(c^n_i)$. By the definition (4.2) in the main article,
\begin{equation*}
	\|\iota'^n_i\| \le K\left(\|\beta'^n_i\| + \|\beta'^n_i\|^2\right)
\end{equation*}
then by (\ref{tuning.psd}), (\ref{est.beta.psd}),
\begin{equation}\label{est.eta.psd}
	E^n_i(\|\iota'^n_i\|) \le K k_n^{-1/2} \Delta_n^{-1/4-\delta/2}
\end{equation}

We can rewrite $\check{S}^{n,3}$ as
	\[\check{S}^{n,3} = G'^n + H'^n\]
where
\begin{eqnarray*}
	G'^n_t &=& k_n\Delta_n^{3/4+\delta/2} \sum_{i=0}^{N^n_t-1}\left[s'^n_{ik_n} + \frac{1}{2k_n\Delta_n^{1/2+\delta}}\iota'^n_{ik_n} + E^n_{ik_n}\left(v'^n_{ik_n}\right)\right]\\
	H'^n_t &=& k_n\Delta_n^{3/4+\delta/2} \sum_{i=0}^{N^n_t-1}\left[v'^n_{ik_n}-E^n_{ik_n}\left(v'^n_{ik_n}\right)\right]
\end{eqnarray*}
\begin{eqnarray*}
	s'^n_i &=& g(c^n_i+\beta'^n_i)-g(c^n_i)-\sum^d_{j,k=1}\partial_{jk}g(c^n_i)\beta'^{n,jk}_i - \frac{1}{2}\sum^d_{j,k,l,m=1}\partial^2_{jk,lm}g(c^n_i)\beta'^{n,jk}_i\beta'^{n,lm}_i\\
	v'^n_i&=&\frac{1}{2}\sum^d_{j,k,l,m=1}\partial^2_{jk,lm}g(c^n_i)\left[\beta'^{n,jk}_i\beta'^{n,lm}_i - (k_n\Delta_n^{1/2+\delta})^{-1}\Sigma(c^n_i)^{jk,lm}\right]
\end{eqnarray*}

By (\ref{g.cond.strong}), (\ref{est.beta.psd}), (\ref{est.eta.psd}), if we let $p\asymp\Delta_n^{-12+\delta/6}$, 
\begin{equation}\label{est.G.psd}
	\E\left(\sup_{s\in[0,t]}\|G'^n_s\|\right) \le Kt\left[k_n\Delta_n^{3/4+\delta/2} + (k_n\Delta_n^{2(1+\delta)/3})^{-1} + \Delta_n^{25\delta/12-1/24}\right]
\end{equation}
and
\begin{multline*}
	E^n_i(\|v'^n_i\|^2) \le K E^n_i\left[\|\beta'^n_i\|^4 + (k_n\Delta_n^{1/2+\delta})^{-1}\|\beta'^n_i\|^2 + (k_n\Delta_n^{1/2+\delta})^{-2}\right] \\
	\le K\left[k_n\Delta_n + (k_n\Delta_n^{1/2+\delta})^{-2}\right]
\end{multline*}
then Lemma \ref{Doob.max.ineq} implies
\begin{equation}\label{est.H.psd}
	\E\left(\sup_{s\in[0,t]}\|H'^n_s\|\right)\le K\sqrt{t}\left[k_n\Delta_n^{3/4+\delta/2} + (k_n\Delta_n^{1/2+\delta})^{-1/2}\right] 
\end{equation}

Because of (\ref{tuning.psd}), (\ref{est.G.psd}), (\ref{est.H.psd}), we have
\begin{equation}\label{v3.psd}
	\check{S}^{n,3}\overset{u.c.p.}{\longrightarrow}0
\end{equation}

Recall the definition (\ref{def.Lambda}), rewrite $\check{S}(p)^{n,4}_t$ as
\begin{equation*}
	\check{S}(p)^{n,4}_t = \frac{k_n}{k_n-l_n}\sum^d_{j,k=1}\Delta_n^{-1/4+\delta/2}\sum^{N^n_t-1}_{i=0}\sum_{h=0}^{m(n,p)-1}\Lambda(p)^{n,jk}_{i,h}
\end{equation*}

By the last line of Table \ref{ingredient.estimates.psd} and Jensen's inequality,
\begin{eqnarray}
	\Delta_n^{-1/4+\delta/2}\sum_{i=0}^{N^n_t-1}\sum_{h=0}^{m(n,p)-1}\left\|E\left[\Lambda(p)^n_{i,h}|\mathcal{H}(p)^n_{i,h}\right]\right\| &\le& Kp^{1/2}\Delta_n^{1/4-\delta} \label{S4.c1.psd}\\
	\Delta_n^{-1+2\delta}\sum_{i=0}^{N^n_t-1}\sum_{h=0}^{m(n,p)-1}E\left[\left\|\Lambda(p)^n_{i,h}\right\|^4|\mathcal{H}(p)^n_{i,h}\right] &\le& Kp\Delta_n^{1/2-\delta} \label{S4.c2.psd}
\end{eqnarray}
The same argument resulting in (5.58) in \cite{j09} gives us
\begin{equation}\label{S4.c3.psd}
	\Delta_n^{-1/4+\delta/2}\sum_{i=0}^{N^n_t-1}\sum_{h=0}^{m(n,p)-1}\left\|E\big[\Lambda(p)^n_{i,h}\Delta N(p)^n_{i,h}|\mathcal{H}(p)^n_{i,h}\big]\right\| \overset{\mathbb{P}}{\longrightarrow} 0
\end{equation}
where $\Delta N(p)^n_{i,h}=N^n_{ik_n+b(n,p,h)}-N^n_{ik_n+a(n,p,h)}$ and $N$ is a bounded martingale orthogonal to $W$ or $N=W^l$ for some $l=1,\cdots,d'$.

Calculation similar to that of $\mu'^{n,3}_i$ shows
\begin{multline}\label{S4.c4.psd}
	\Delta_n^{-1/2+\delta}\sum_{i=0}^{N^n_t-1}\sum_{h=0}^{m(n,p)-1} E\left[\Lambda(p)^{n,jk}_{i,h}\,\Lambda(p)^{n,lm}_{i,h}|\mathcal{H}(p)^n_{i,h}\right] = \\
	\sum_{i=0}^{N^n_t-1}\partial_{jk}g(c^n_{ik_n})\partial_{lm}g(c^n_{ik_n}) \times \\
	\sum_{h=0}^{m(n,p)-1}\frac{p\Phi_{00}-\Psi_{00}}{(p+1)\Phi_{00}}\,\Sigma\big(c^n_{ik_n+a(n,p,h)}\big)^{jk,lm}(p+1)l_n\Delta_n + tp^{3/2}\,O_p(\Delta_n^{1/4-\delta/2}) \\
	\overset{\mathbb{P}}{\longrightarrow} \int_0^t\partial_{jk}g(c_s)\partial_{lm}g(c_s)^\T\,\Sigma(c_s; p)^{jk,lm}\,\mathrm{d}s
\end{multline}

To establish stable convergence, we then only need to combine (\ref{S4.c1.psd}), (\ref{S4.c2.psd}), (\ref{S4.c3.psd}), (\ref{S4.c4.psd}) and apply Theorem IX.7.28 in \cite{js03}. We get for $\forall p\in\mathbb{N}^+$,
\begin{equation}\label{v4.psd}
	\check{S}(p)^{n,4}\overset{\mathcal{L}-s(f)}{\longrightarrow}Z(p)
\end{equation}
where $Z(p)$ is a process defined on an extension of the space $\left(\Omega,\mathcal{F},(\mathcal{F}_t),\mathbb{P}\right)$,
such that conditioning on $\mathcal{F}$ it is a  mean-0 continuous It\^o martingale with variance
	\[\widetilde{E}[Z(p)Z(p)^\T|\F]=\int_0^t\sum^d_{j,k,l,m=1}\partial_{jk}g(c_s)\partial_{lm}g(c_s)^\T\,\frac{p\Phi_{00}-\Psi_{00}}{(p+1)\Phi_{00}}\,\Sigma(c_s)^{jk,lm}\ds s\]
where $\widetilde{E}$ is the conditional expectation operator on the extended probability space.

By (\ref{v0}), (\ref{v1.psd}), (\ref{v2.psd}), (\ref{v3.psd}), (\ref{v4.psd}), as $p\to\infty$, we arrive at the asymptotic result of Theorem 2 in the main article.

\section{Derivation for Proposition 3 and 4}\label{apdx:pca}
In the following lemmas and derivations, we suppress the time index in the subscripts to reduce notational clutter. From now on, we write $c=c_t$, $\lambda^r = \lambda^r_t$, $q^r = q^r_t$ is the eigenvector associated with eigenvalue $\lambda^r$, $q^r_k = q^r_{t,k}$ is the $k$-th entry in $q^r$, and the eigenvalue factorization is written as
\begin{equation}\label{eigen.sim}
c = \left[q^1,\cdots,q^d \right]
\left[\begin{array}{ccc}
	\lambda^1 & & \\
	& \ddots & \\
	& & \lambda^d
\end{array}\right]
\left[\begin{array}{c}
q^{1,\T} \\ \vdots \\ q^{d,\T}
\end{array}\right]
\end{equation}
where $\lambda_1\ge \cdots \ge \lambda_d$.

The following lemma is a restatement of Lemma 2 in \cite{ax19a}.
\begin{lem}\label{lem.derivatives.simple}
When $\lambda^r$ is a simple eigenvalue in (\ref{eigen.sim}), the functions $\lambda^r(c)$ and $q^r(c)$ are twice continuously differentiable. We have the following result about gradients:
\begin{eqnarray*}
	\partial_{jk}\lambda^r(c) &=& q^r_j\, q^r_k \\
	\partial_{jk}q^r(c) &=& \big(\lambda^r\,\I_d - c\big)^\dagger \times \partial_{jk}c \times q^r
\end{eqnarray*}
where the superscript notation $\dagger$ refers to Moore-Penrose inverse, equivalently,
\begin{equation*}
	\partial_{jk}q^r(c) = \sum_{v\ne r} \frac{q^v_j\, q^r_k}{\lambda^r-\lambda^v} \times q^v
\end{equation*}
Moreover, the Hessians are
\begin{equation*}
	\partial^2_{jk,lm}\lambda^r(c) = \sum_{v\ne r} \frac{q^v_j\, q^v_l\, q^r_k\, q^r_m +  q^v_k\, q^v_l\, q^r_j\, q^r_m}{\lambda^r - \lambda^v}
\end{equation*}
and
\begin{multline*}
	\partial^2_{jk,lm}q^r(c) = \sum_{v\ne r} \frac{q^v_l\, q^v_m\, q^v_j\, q^r_k - \, q^v_j\, q^r_k\, q^r_l\, q^r_m}{(\lambda^r-\lambda^v)^2}\times q^v
	+ \sum_{v\ne r}\sum_{b\ne v} \frac{q^b_l\, q^v_m\, q^v_j\, q^r_k}{(\lambda^r-\lambda^v)(\lambda^v-\lambda^b)}\times q^b \\
	+ \sum_{v\ne r}\sum_{b\ne v} \frac{q^b_j\, q^b_l\, q^v_m\, q^r_k}{(\lambda^r-\lambda^v)(\lambda^v-\lambda^b)}\times q^v
	+ \sum_{v\ne r}\sum_{b\ne r} \frac{q^b_k\, q^b_l\, q^r_m\, q^v_j}{(\lambda^r-\lambda^v)(\lambda^r-\lambda^b)}\times q^v
	\end{multline*}
\end{lem}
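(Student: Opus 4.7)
The plan is to proceed by classical matrix perturbation calculus, since the lemma is declared to be a restatement of a known result in \cite{ax19a}. Because $\lambda^r$ is a simple eigenvalue, the Dunford--Kato resolvent representation (equivalently, the implicit function theorem applied to the characteristic polynomial at a simple root) guarantees that $\lambda^r(c)$ and a consistently oriented unit eigenvector $q^r(c)$ depend real-analytically, hence $C^2$, on $c$ in an open neighbourhood of any reference matrix with $\lambda^r$ simple, so all partial derivatives below are well defined and symmetric in the mixed indices.

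For the first-order derivatives I would start from the defining identities
\begin{equation*}
c\, q^r = \lambda^r q^r, \qquad q^{r,\T} q^r = 1,
\end{equation*}
and differentiate in the entry $c^{jk}$. Writing $E^{jk}$ for the elementary matrix with a $1$ in position $(j,k)$ and zeros elsewhere, the first identity yields
\begin{equation*}
E^{jk} q^r + c\, \partial_{jk}q^r = (\partial_{jk}\lambda^r)\, q^r + \lambda^r \partial_{jk}q^r,
\end{equation*}
while the normalization gives $q^{r,\T}\partial_{jk}q^r = 0$. Left-multiplying the displayed equation by $q^{r,\T}$ and using $q^{r,\T} c = \lambda^r q^{r,\T}$ (symmetry of $c$) immediately produces $\partial_{jk}\lambda^r = q^{r,\T} E^{jk} q^r = q^r_j q^r_k$. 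Expanding $\partial_{jk}q^r = \sum_v \alpha^{jk}_v\, q^v$ in the orthonormal eigenbasis and projecting the vector identity onto $q^v$ for $v\ne r$ isolates $\alpha^{jk}_v = (\lambda^r-\lambda^v)^{-1} q^v_j q^r_k$; the coefficient $\alpha^{jk}_r$ vanishes by the normalization identity. This proves both the sum formulation and the Moore--Penrose formulation of $\partial_{jk} q^r$, since $(\lambda^r \I_d - c)^\dagger$ acts as $(\lambda^r-\lambda^v)^{-1}$ on $q^v$ for $v\ne r$ and as $0$ on $q^r$.

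For the Hessians I would differentiate the first-order identities once more in $c^{lm}$. For $\lambda^r$, the product rule gives
\begin{equation*}
\partial^2_{jk,lm}\lambda^r = (\partial_{lm}q^r)^{\T} E^{jk} q^r + q^{r,\T} E^{jk}(\partial_{lm}q^r),
\end{equation*}
into which I substitute the first-order formula for $\partial_{lm} q^r$; the two terms collect into the stated symmetric expression with summands $q^v_j q^v_l q^r_k q^r_m + q^v_k q^v_l q^r_j q^r_m$ over $v \ne r$. For $q^r$ the derivation is structurally the same but more laborious: differentiating $\partial_{jk} q^r = \sum_{v\ne r}(\lambda^r - \lambda^v)^{-1} q^v_j q^r_k\, q^v$ in $c^{lm}$ produces three families of contributions, corresponding to derivatives of (i) the denominator, via $\partial_{lm}(\lambda^r - \lambda^v) = q^r_l q^r_m - q^v_l q^v_m$, (ii) the scalar factors $q^v_j q^r_k$, and (iii) the outer vector $q^v$. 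Substituting the first-order formulas for $\partial_{lm} q^v$ and $\partial_{lm} q^r$ and regrouping by the eigen-index of the outer vector yields the three double sums in the statement.

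The principal obstacle is purely bookkeeping in the Hessian of the eigenvector: many superficially distinct terms appear, and one must track the eigen-indices of each factor carefully to collapse them into the three clean sums presented in the statement. Given that the formulas are already derived in \cite{ax19a}, the most economical route in the paper is to invoke that reference; the outline above serves as a verification path if a self-contained derivation is desired.
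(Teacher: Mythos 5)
Your proposal is correct and is consistent with the paper, which offers no independent derivation of this lemma and simply states it as a restatement of Lemma 2 in \cite{ax19a} --- exactly the route you flag as the most economical. Your perturbation-calculus outline (differentiating $c\,q^r=\lambda^r q^r$ and $q^{r,\T}q^r=1$, expanding in the eigenbasis, then differentiating once more) does reproduce all the stated formulas: the eigenvalue Hessian collects as you say, and the denominator, scalar-factor and outer-vector contributions give precisely the single sum and the three double sums of the eigenvector Hessian.
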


The following lemmas is a direct consequence Lemma 4 in \cite{ax19a}.
\begin{lem}\label{lem.derivatives.repeated}
Suppose the eigenvalues of $c$ form $K$ clusters in the sense that
\begin{equation*}
	\lambda^{r_0+1} =\cdots= \lambda^{r_1} > \lambda^{r_1+1} =\cdots= \lambda^{r_2} > \cdots\cdots > \lambda^{r_{K-1}+1} =\cdots= \lambda^{r_K}
\end{equation*}
where $r_0=0,\, r_K=d$. We define the following functions on $\R^d$
\begin{equation*}
	f_h(x) = \frac{1}{r_h-r_{h-1}} \sum_{j=r_{h-1}+1}^{r_h} x^j,\; h=1,\cdots,K
\end{equation*}
Define function $\lambda(c)=(\lambda^1, \cdots, \lambda^d)$, and functions $F_h=f_h\circ\lambda,\; h=1,\cdots,K$, where $\circ$ denotes function composition. Then we have
\begin{eqnarray*}
	\partial_{jk} F_h(c) &=& \frac{1}{r_h-r_{h-1}} \sum_{r=r_{h-1}+1}^{r_h} q^r_j\, q^r_k \\
	\partial^2_{jk,lm} F_h(c) &=& \frac{1}{r_h-r_{h-1}} \sum_{r=r_{h-1}+1}^{r_h} \Big[\big(\lambda^r\I_d-c\big)^{\dagger,km}\, q^r_j\, q^r_l + \big(\lambda^r\I_d-c\big)^{\dagger,jl}\, q^r_k\,q^r_m \Big]
\end{eqnarray*}
where the superscript notation $\dagger$ stands for Moore-Penrose inverse, equivalently,
\begin{equation*}
	\partial^2_{jk,lm} F_h(c) = \frac{1}{r_h-r_{h-1}} \sum_{r=r_{h-1}+1}^{r_h} \sum_{\lambda^v\ne\lambda^r} \frac{q^r_j\, q^r_l\, q^v_k\, q^v_m + q^v_j\, q^v_l\, q^r_k\, q^r_m}{\lambda^r - \lambda^v}
\end{equation*}
\end{lem}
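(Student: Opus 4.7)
The plan is to reduce the claim to an application of Lemma 4 of \cite{ax19a}, which handles symmetric functions of eigenvalues, specialized to the case where $f_h$ is the arithmetic mean over the indices $r_{h-1}+1,\dots,r_h$. The first observation is that $f_h$ is invariant under any permutation of its arguments within the block $\mathcal{K}_h$, so the composition $F_h = f_h\circ\lambda$ is a symmetric function of the eigenvalues and hence is well-defined and twice continuously differentiable at any symmetric matrix $c$ exhibiting the prescribed clustering pattern. In particular, although individual repeated eigenvalues $\lambda^r$ are not differentiable at such $c$, block averages are.

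For the gradient, I would invoke the Hellmann–Feynman identity together with the chain rule: $\partial_{jk}F_h = (r_h-r_{h-1})^{-1}\sum_{r\in\mathcal{K}_h}\partial_{jk}\lambda^r$, and each contribution $\partial_{jk}\lambda^r = q^r_jq^r_k$ is meaningful in the generalized sense because only the sum $\sum_{r\in\mathcal{K}_h}q^r_jq^r_k$ enters, which equals the $(j,k)$ entry of the spectral projector onto the eigenspace associated with the cluster and is therefore canonical (independent of the choice of orthonormal basis inside the degenerate eigenspace).

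For the Hessian the reasoning is essentially the same but the bookkeeping is more delicate, which is the main obstacle. Second-order perturbation theory gives formally
\begin{equation*}
\partial^2_{jk,lm}\lambda^r \;=\; \sum_{v\neq r}\frac{q^v_j q^v_l q^r_k q^r_m + q^v_k q^v_l q^r_j q^r_m}{\lambda^r-\lambda^v},
\end{equation*}
which is singular whenever $v$ lies in the same cluster as $r$. The key point is that after summing $r$ over $\mathcal{K}_h$, the singular contributions from pairs $(r,v)$ with $\lambda^v=\lambda^r$ cancel pairwise by the antisymmetry of $1/(\lambda^r-\lambda^v)$ against the symmetry of the numerator under $(r,v)\leftrightarrow(v,r)$. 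What survives are precisely the terms with $\lambda^v\neq\lambda^r$, which is exactly the content of the stated formula; the compact expression using the Moore–Penrose pseudoinverse $(\lambda^r\I_d-c)^{\dagger}$ encodes this restriction automatically because that pseudoinverse annihilates the eigenspace of $c$ for the eigenvalue $\lambda^r$.

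Finally, I would conclude by matching the two displays: $\big(\lambda^r\I_d-c\big)^{\dagger,km}=\sum_{\lambda^v\neq\lambda^r}(\lambda^r-\lambda^v)^{-1}q^v_kq^v_m$, and substituting into Lemma 4 of \cite{ax19a} (applied coordinatewise to $f_h$, whose first partials are $(r_h-r_{h-1})^{-1}\mathbf{1}_{\{r\in\mathcal{K}_h\}}$ and whose mixed second partials vanish) yields the two equivalent forms of $\partial^2_{jk,lm}F_h$ displayed in the statement. The main subtlety, as noted, is the within-cluster cancellation; once that is handled by \cite{ax19a}, the rest is a direct substitution.
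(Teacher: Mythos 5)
Your proposal takes essentially the same route as the paper, which proves this lemma simply by declaring it a direct consequence of Lemma 4 in \cite{ax19a}; your reduction to that lemma (with $f_h$ the within-cluster arithmetic mean, whose first partials are constant on the cluster and whose mixed second partials vanish) is exactly the intended argument. Your intermediate "pairwise cancellation" heuristic for the within-cluster terms is not literally valid as stated, since the formal per-eigenvalue second-order expansion breaks down at a degenerate eigenvalue, but because you defer the rigorous handling of the degenerate case to the cited Lemma 4 — precisely as the paper does — the proof stands.
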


\begin{remk}\label{remk.spectral.differentiability}
	If $r_h = r_{h-1}+1$, $F_h(c) = \lambda^{r_h}(c)$. From Lemma \ref{lem.derivatives.simple}, \ref{lem.derivatives.repeated} we see that $\lambda^r(c)$ is first-order differentiable regardless of $\lambda^r$ being a simple or repeated eigenvalue of $c$. However, when $\lambda^r$ is a repeated eigenvalue, $\lambda^r(c)$ is not second-order differentiable; as for the eigenvector, $q^r(c)$ is not differentiable at all.
\end{remk}

By some calculation, we can see that Proposition 3 and 4 are consequences of Theorem 2 in the main article and Lemma \ref{lem.derivatives.simple}, \ref{lem.derivatives.repeated}.

\section{Simulations}
\subsection{Simulation of the rate-optimal CLT}
We simulate the following noisy observations of scalar jump-diffusion with stochastic volatility via the Euler scheme:
\begin{equation*}
\left\{\begin{array}{lcl}
	Y^n_i        &=&X^n_i + \e^n_i \\
	\mathrm{d}X_t&=&.03\ds t + \sqrt{c_t}\ds W_t + J^X_t\ds N^X_t\\
	\mathrm{d}c_t&=&6(.16-c_t)\ds t + .5\sqrt{c_t}\ds B_t + \sqrt{c_{t-}}J^c_t\ds N^c_t
\end{array}\right.
\end{equation*}
where $\e^n_i\overset{\text{i.i.d.}}{\sim}N(0,.005^2)$, $\E[(W_{t+\Delta}-W_t)(B_{t+\Delta}-B_t)]=-.6\Delta$, $J^X_t\sim N(-.01,.02^2)$, $N^X_{t+\Delta}-N^X_t\sim\mathrm{Poisson}(36\Delta)$, $\log(J^c_t)\sim N(-5,.8)$, $N^c_{t+\Delta}-N^c_t\sim\mathrm{Poisson}(12\Delta)$. 

Each simulation employs $23400\times21$ data points with $\Delta_n=1$ second. We choose the following tuning parameters:
\begin{table}[H]
	\begin{tabular}{l|lll}
		functionals    & $l_n$ & $k_n$ & $\nu_n$ \\
		\hline
		$g(c)=c^2$     & $\lfloor\Delta_n^{-.5}\rfloor$ & $\lfloor\Delta_n^{-.69}\rfloor$ & $1.6\overline{\sigma}^2\Delta_n^{.47}$ \\
		$g(c)=\log(c)$ & $\lfloor\Delta_n^{-.5}\rfloor$ & $\lfloor\Delta_n^{-.7}\rfloor$  & $1.5\overline{\sigma}^2\Delta_n^{.47}$
	\end{tabular}
\end{table}
where $\overline{\sigma}^2$ is an estimate of the average volatility by bipower variation \cite{pv09a}.

We compute estimation errors in the simulation, and normalize the errors by the corresponding asymptotic variances. The results are shown in Figure \ref{MC}.
\begin{figure}[H]
	\centering
	\caption{Simulation of estimators of scalar volatility functionals}\label{MC}
	\includegraphics[width=.5\textwidth]{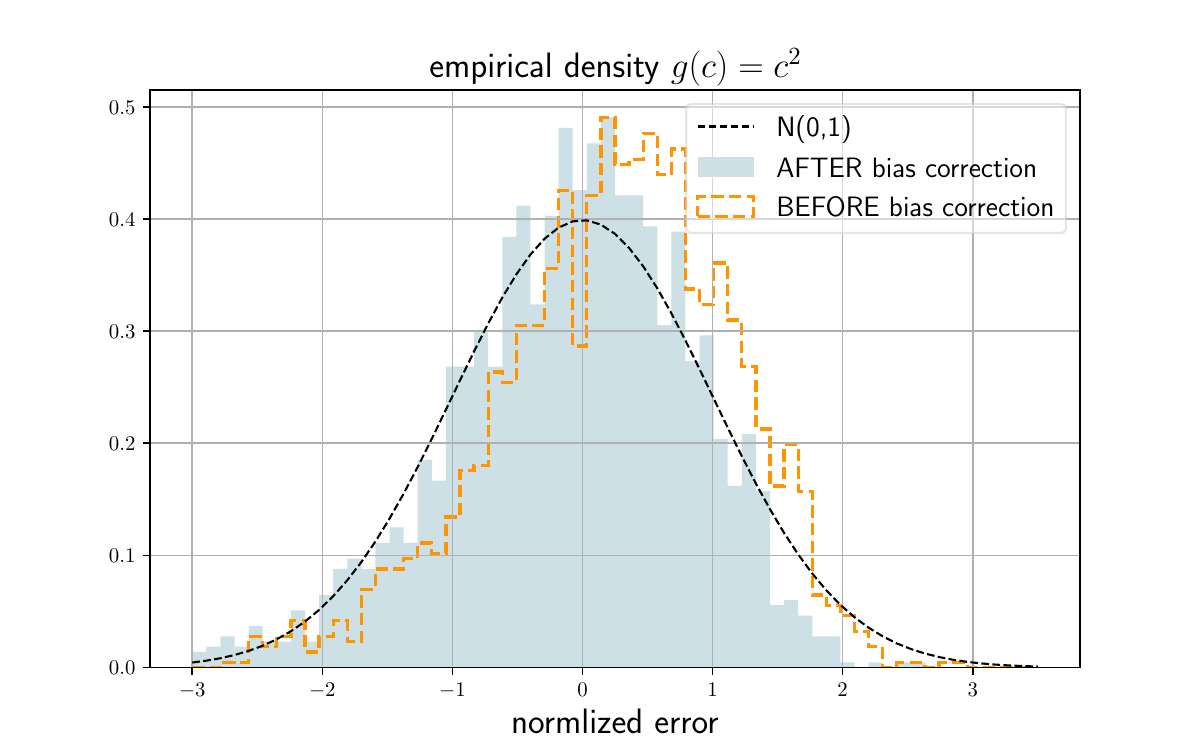}\includegraphics[width=.5\textwidth]{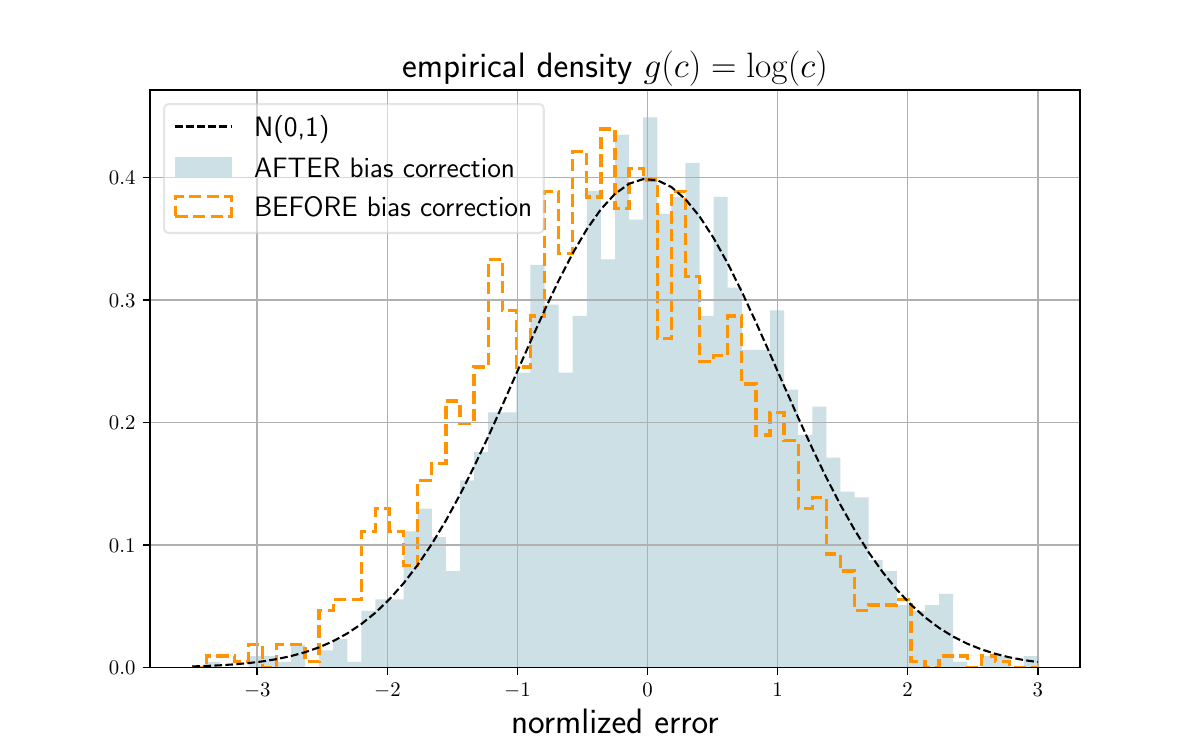}
\end{figure}

\subsection{Simulation of PCA of a factor model}
We simulate the following factor model:
\begin{eqnarray*}\label{def.factor.model}
	Y_t &=& X_t + e_t \nonumber \\
	X_t &=& X_0 + \int_0^t\beta_s\ds F_s + Z_t,
\end{eqnarray*}
where $\beta_s\in\R^{d\times r}$ is the factor loading matrix with $r\le d$, $F_s\in\R^r$ is the latent factor, $Z_t\in\R^{d}$ is idiosyncratic component satisfying $Z\indep F$, and $e_t$ is a multidimensional noise process exhibits no autocorrelation but is possibly cross-sectionally correlated.

In the component-wise differential form, we can write $X_t$ as
\begin{eqnarray*}
	\mathrm{d}X^j_t &=& \sum_{k=1}^r \beta^{jk}_t\ds F^k_t + \mathrm{d}Z^j_t \\
	\mathrm{d}\beta^{j1}_t &=& \kappa^{j1}\big(\theta^{j1}-\beta^{j1}_s\big)\ds s + \xi^{j1}\sqrt{\beta^{j1}_t}\ds\breve{W}^{j1}_t \\
	\mathrm{d}\beta^{jk}_t &=& \kappa^{jk}\big(\theta^{jk}-\beta^{jk}_s\big)\ds s + \xi^{jk}\ds\breve{W}^{jk}_t,\; k\ge2 \\
	\mathrm{d}F^k_t &=& \mu^k\ds t + \sqrt{\Pi^k_t}\ds W^k_t + J^{F,k}_t\ds N^{F,k}_t \\
	\mathrm{d}Z^j_t &=& \chi_t\ds B^j_t + J^{Z,j}_t\ds N^{Z,j}_t,
\end{eqnarray*}
and
\begin{eqnarray*}
	\mathrm{d}\Pi^k_t &=& \widetilde{\kappa}^k\big(\widetilde{\theta}^k-\Pi^k_t\big)\ds t + \eta^k\sqrt{\Pi^k_t}\ds\widetilde{W}^k_t + J^{\Pi,k}_t\ds N^{F,k}_t \\
	\mathrm{d}\chi_t^2 &=& \kappa\big(\theta-\chi_t^2\big)\ds t + \eta\chi_t\ds\widetilde{B}_t,
\end{eqnarray*}
where $W$ and $\widetilde{W}$ are $\R^r$-valued standard Brownian motions, $\breve{W}$ is a $\R^{d\times r}$-valued Brownian motion with independent elements, $B$ is a $\R^d$-valued standard Brownian motion, $\widetilde{B}$ is a scalar standard Brownian motion. Except $W$ and $\widetilde{W}$, all Brownian motions in this model are mutually independent.

In this simulation model, the factor loading, factor volatility and idiosyncratic volatility are modeled by Ornstein-Uhlenbeck processes so that they exhibit mean-reverting dynamics. Especially, the first factor is presumably the market factor, hence the associated factor loading $\beta^{j1}_t\, j=1,\cdots,d$ are positive almost surely.

Moreover, for $k=1,\cdots,r$, $\E\big[(W^k_{t+\Delta}-W^k_t)(\widetilde{W}^k_{t+\Delta}-\widetilde{W}^k_t)\big] = \rho^k\Delta$ is the leverage effect in the $k$-th factor; $N^{F,k}_{t+\Delta}-N^{F,k}_t\sim\mathrm{Poisson}(\lambda^{F,k}\Delta)$ is the count of co-jumps of the factor and factor volatility, $J^{F,k}_t\overset{\mathrm{i.i.d.}}{\sim}\mathrm{Laplace}(\zeta^{F,k})$ is the size of factor jump, $J^{\Pi  ,k}_t\overset{\mathrm{i.i.d.}}{\sim}\mathrm{Exponential}(\zeta^{\Pi,k})$ is the size of volatility jump; for $j=1,\cdots,d$, $N^{Z,j}_{t+\Delta}-N^{Z,j}_t\sim\mathrm{Poisson}(\lambda^{Z,j}\Delta)$ is the count of idiosyncratic jumps, $J^{Z,j}_t\overset{\mathrm{i.i.d.}}{\sim}\mathrm{Laplace}(\zeta^{Z,j})$ is the size of idiosyncratic jump.

We set $d=30,\,r=3$, and generate $22800\times21$ data points with $\Delta_n=1$ second in each simulation. We choose the following tuning parameters in PCA:
\begin{table}[H]
\begin{tabular}{l|ccc}
	functionals    & $l_n$ & $k_n$ & $\nu_n$ \\
	\hline
	$g(c)=\lambda^1(c)$ & $\lfloor.23\Delta_n^{-.62}\rfloor$ & $\lfloor.57\Delta_n^{-.75}\rfloor$ & $1.5\overline{c}\Delta_n^{.47}$ \\
	$g(c)=\lambda^2(c)$ & $\lfloor.038\Delta_n^{-.62}\rfloor$ & $\lfloor.19\Delta_n^{-.75}\rfloor$  & $1.5\overline{c}\Delta_n^{.47}$ \\
	$g(c)=q^{1,1}(c)$ & $\lfloor.23\Delta_n^{-.62}\rfloor$ & $\lfloor.57\Delta_n^{-.75}\rfloor$  & $1.5\overline{c}\Delta_n^{.47}$ \\
	$g(c)=q^{1,2}(c)$ & $\lfloor.23\Delta_n^{-.62}\rfloor$ & $\lfloor.57\Delta_n^{-.75}\rfloor$  & $1.5\overline{c}\Delta_n^{.47}$
\end{tabular}
\end{table}
where $\overline{c}$ is the vector of time-averaged estimates of volatility matrix diagonal elements computed by bipower variations. We conduct element-wise jump truncation by the rule $|\widebar{Y}^{r,n}_i|>1.5\overline{c}^r\Delta_n^{.47}$.

Figure \ref{MC.eigenval} and Figure \ref{MC.eigenvec} show empirical densities of studentized estimation errors.

\begin{figure}[H]
	\centering
	\caption{Simulation of eigenvalue estimators}\label{MC.eigenval}
	\includegraphics[width=.5\textwidth]{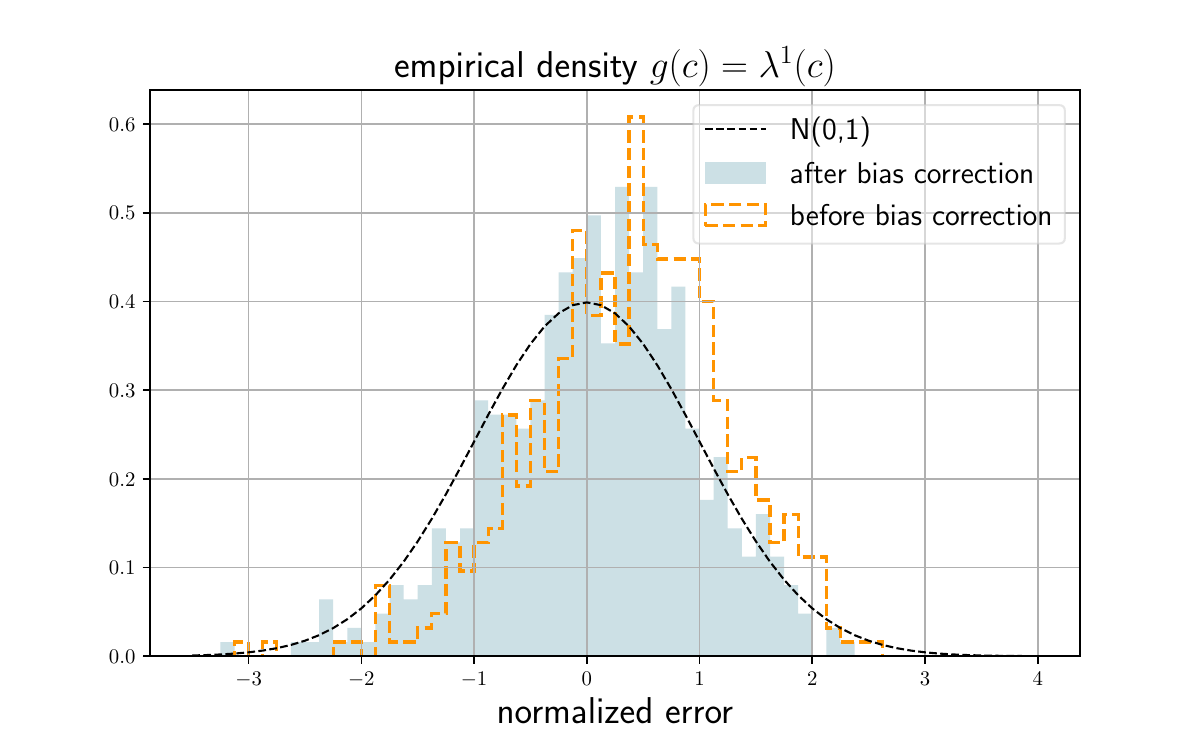}\includegraphics[width=.5\textwidth]{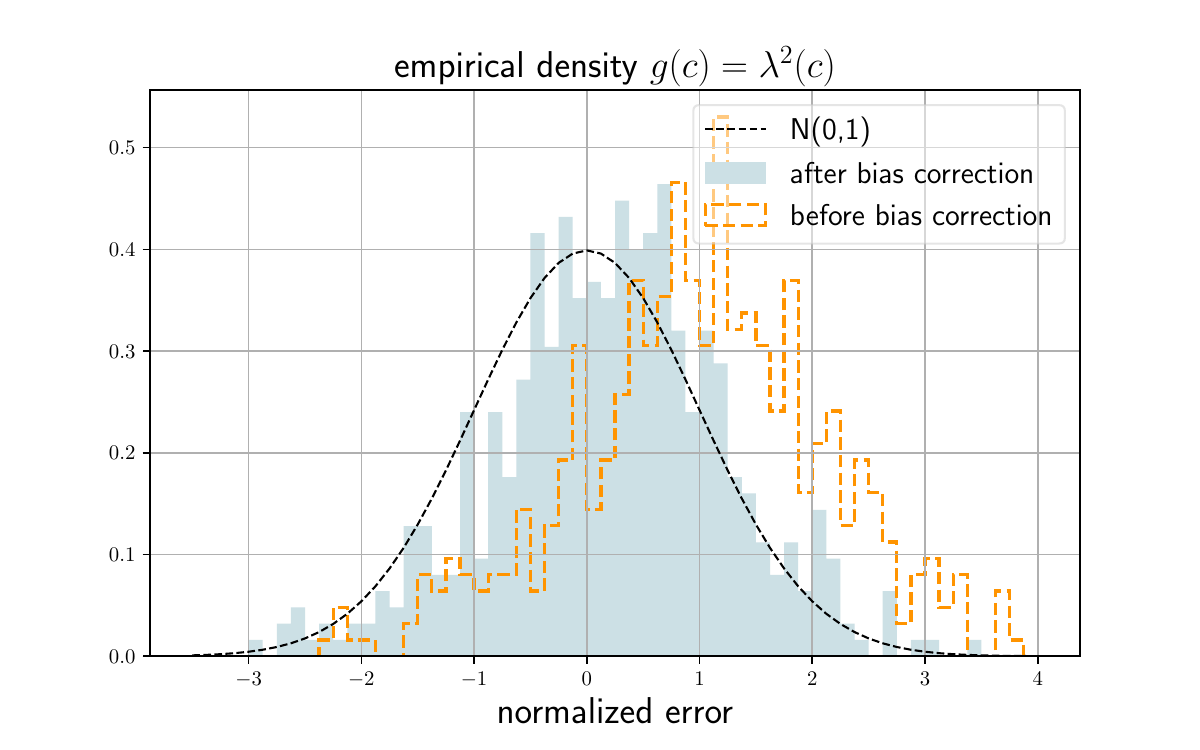}
\end{figure}

\begin{figure}[H]
	\centering
	\caption{Simulation of eigenvector estimators}\label{MC.eigenvec}
	\includegraphics[width=.5\textwidth]{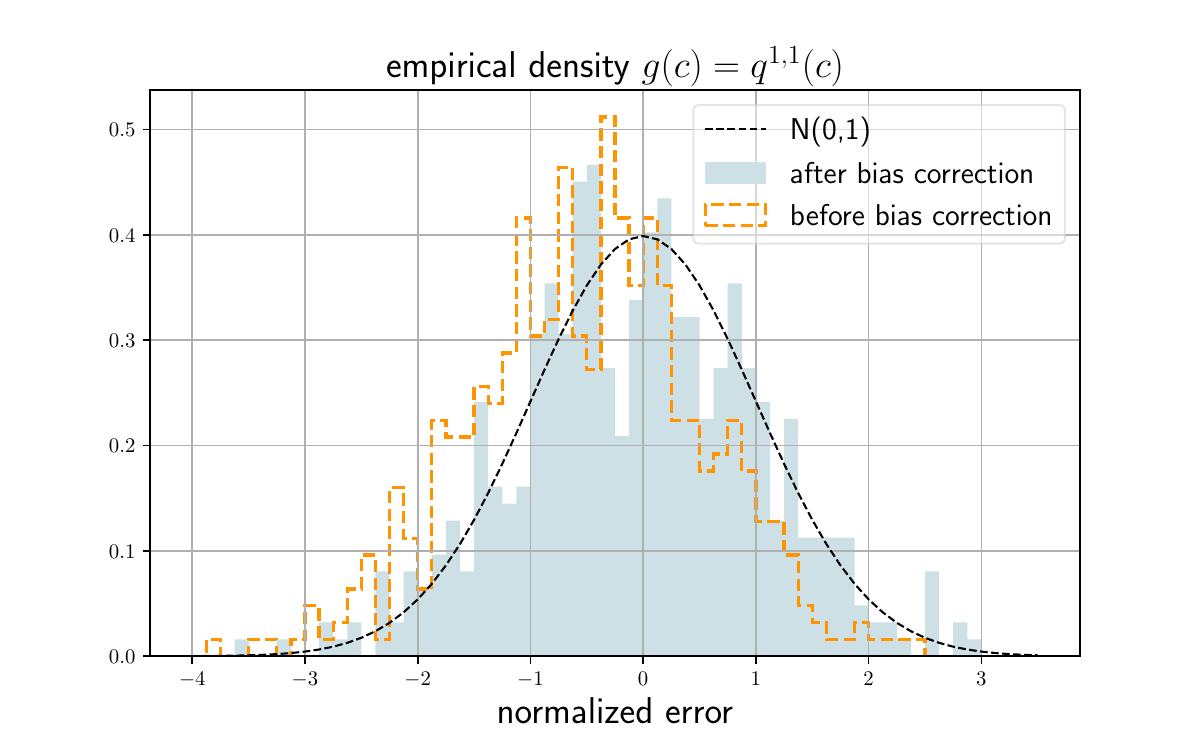}\includegraphics[width=.5\textwidth]{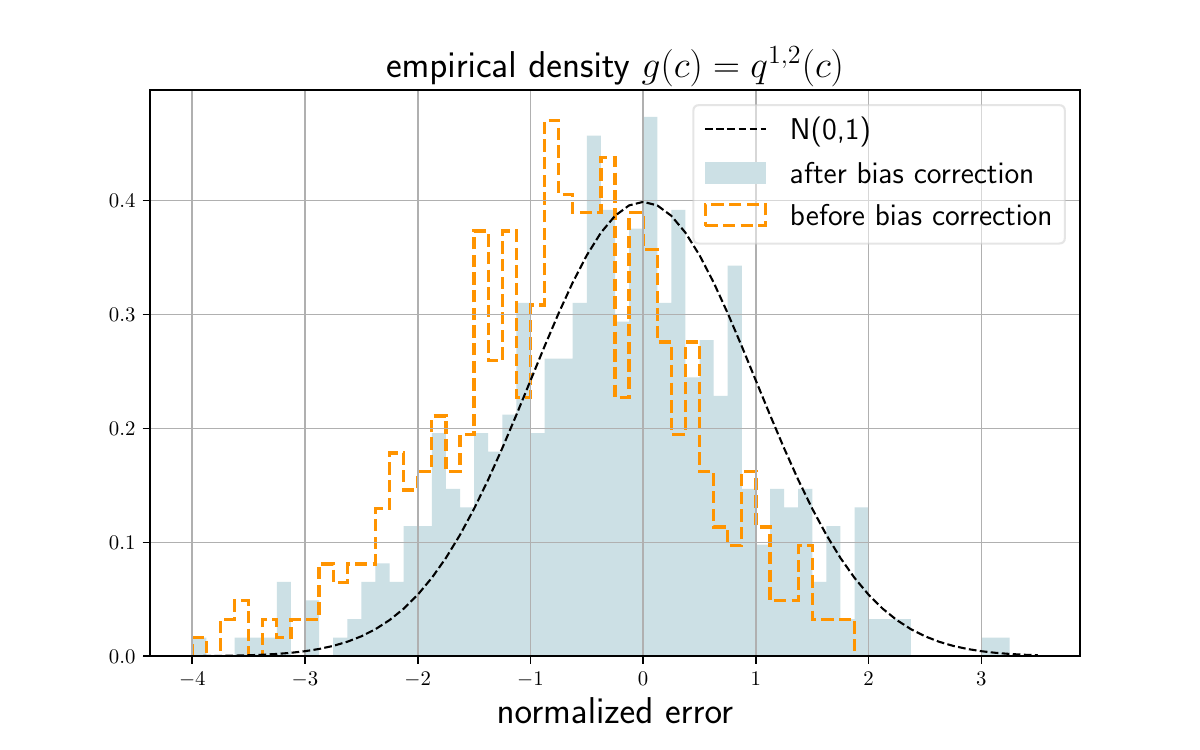}
\end{figure}


\end{document}